\newcolumntype{C}{>{$}c<{$}} 
\def\qed{\rule{2mm}{2mm}}
\def\indep{\perp \!\!\! \perp}
\newtheorem{theorem}{Theorem}[section]
\newtheorem{lemma}{Lemma}[section]
\newtheorem{corollary}{Corollary}[section]
\theoremstyle{definition}
\newtheorem{example}{Example}[section]
\newtheorem{remark}{Remark}[section]
\newtheorem{assumption}{Assumption}[section]
\newtheorem{algorithm}{Algorithm}[section]
\DeclareMathOperator*{\argmin}{argmin}
\begin{document}

\title{\vspace{-1em}Inference for Treatment Effects Conditional on Generalized Principal Strata using Instrumental Variables \thanks{We thank Jing Cheng, Jiaying Gu, Sukjin Han, Nathan Kallus, Soonwoo Kwon, Sokbae Lee, Derek Neal, Kirill Ponomarev, Vitor Possebom, Bernard Salani\'e and Panos Toulis for helpful comments. Shaikh acknowledges support from NSF Grant SES-2419008; Moon acknowledges support from NSF Grant No.\ 2141064; Vytlacil acknowledges financial support from the Tobin Center for Economic Policy at Yale.}}

\author{Yuehao Bai \\
Department of Economics \\
University of Southern California \\
\url{yuehao.bai@usc.edu}
\and
Shunzhuang Huang \\
Booth School of Business \\
University of Chicago \\
\url{shunzhuang.huang@chicagobooth.edu}
\and
Sarah Moon\\
Department of Economics \\
Massachusetts Institute of Technology \\
\url{sarahmn@mit.edu}
\and
Andres Santos \\
Department of Economics \\
University of California--Los Angeles \\
\url{andres@econ.ucla.edu}
\and
Azeem M.\ Shaikh \\
Department of Economics \\
University of Chicago \\
\url{amshaikh@uchicago.edu}
\and
Edward J.\ Vytlacil \\
Department of Economics \\
Yale University \\
\url{edward.vytlacil@yale.edu}
}

\begin{spacing}{1.25}
\maketitle
\end{spacing}

\vspace{-0.3in}
\begin{spacing}{1.2}
\begin{abstract}
We propose a general approach to inference for a broad class of models that arise in the analysis of treatment effects with discrete-valued treatments and instruments and a general-valued outcome.  In addition to instrument exogeneity, the main substantive assumption in our class of models rules out certain response types by assuming that they occur with probability zero.  Here, the response type refers to the vector of potential outcomes and potential treatments, and we refer to a set of possible values for the response type as a generalized principal stratum.   Through a series of examples, we show that this framework encompasses a wide variety of assumptions that have been considered in the previous literature. Our framework allows inference on any treatment effect parameter that can be expressed as the expectation of a function of the response type conditional on a generalized principal stratum.  We develop methods for  inference on such parameters under these assumptions, as well as methods for testing the validity of the assumptions themselves.  A key result of our analysis is a characterization of the identified set for such parameters under these assumptions and the testable restrictions for the assumptions themselves in terms of existence of a non-negative solution to linear systems of equations with a special structure.  We propose methods for inference exploiting this special structure and recent results in \cite{fang2023inference}.

\end{abstract}

\noindent KEYWORDS: Multi-valued Treatments, Model Specification, Model Validity, 
 Randomized Controlled Trial, 
 Principal Strata, 
 Instrumental Variables, Partial Identification.

\end{spacing}

\thispagestyle{empty} 
\newpage
\setcounter{page}{1}
\section{Introduction}\label{sec:intro}

We propose a general approach to inference for a broad class of models that arise in the analysis of treatment effects with discrete-valued treatments and instruments and a general-valued outcome.  In addition to instrument exogeneity, the main substantive restriction imposed in our analysis is that certain values for the response types occur with probability zero.  Here, the response type refers to the vector of potential outcomes and potential treatments, and we refer to a set of possible values for the response type as a generalized principal stratum. Through a series of examples, we show that our framework encompasses a wide variety of assumptions that have been considered in the previous literature, including the following: (i) restrictions in the analysis of randomized controlled trials (RCTs) under noncompliance considered, e.g., in \cite{imbens1994identification} and \cite{cheng2006bounds}; (ii) generalizations of these restrictions considered in \cite{bai2024identifying}; (iii) revealed preference-type restrictions on response types considered in \cite{kirkeboen2016field}, \cite{kline2016evaluating}, and \cite{heckman2018unordered}; and (iv) restrictions on the ordering of potential treatments or on the ordering of potential outcomes considered in \cite{manski1997monotone}, \cite{manski1998monotone}, and \cite{machado2019instrumental}.\footnote{In the context of mediation analysis, \cite{kwon2024testing} show that ``full mediation'' is equivalent to certain assumptions like those we consider.}

Our framework allows inference on any treatment effect parameter that can be expressed as the expectation of a function of the response type conditional on a generalized principal stratum.  In this way, our framework accommodates many parameters that have been considered previously in the literature, including both average and distributional treatment effect parameters, such as the probability of being strictly helped and the probability of not being hurt by the treatment.
It further accommodates versions of these parameters conditional on sets of possible values of potential treatments, such as the local average treatment effect in \cite{imbens1994identification}, average effects conditional on principal strata in \cite{frangakis2002principal}, and average effects conditional on sets of possible values of potential outcomes, such as the parameters considered in \cite{heckman1997making} and \cite{heckman1998evaluating}.  By contrast, each of the papers cited in the preceding paragraph tailors their method to specific examples of such treatment effect parameters.  Furthermore, because the types of restrictions considered in these papers are typically insufficient for identification of many parameters of interest, some of these papers focus specifically on those treatment effect parameters that are identified from the two-stage least squares estimand.  A characterization of the parameters that are identified by such restrictions is given by \cite{navjeevan2023identification} and \cite{goff2025doesividentificationrestrict}.

A key result in our analysis is a characterization of the identified set for such parameters under these assumptions and the testable restrictions for the assumptions themselves in terms of existence of a non-negative solution to linear systems of equations with a special structure. Using this result, we build on results in \cite{fang2023inference} to develop a test for the null hypothesis that a pre-specified value for the parameter of interest lies in the identified set as well as the null hypothesis that the testable restrictions are satisfied.  Importantly, the resulting tests remain well behaved even in ``high-dimensional'' settings, meaning it is uniformly consistent in level over a large class of distributions satisfying weak assumptions and permitting, among other things, the support of the treatment and instrument or the support of the response types to be ``large'' relative to the sample size. Through test inversion, we may also construct confidence regions for such parameters that are uniformly consistent in level. 
We highlight that while our test builds on \cite{fang2023inference}, it leverages the structure of our problem to obtain novel strong approximations that significantly improve on the coupling rates of \cite{fang2023inference}. 
These coupling results for high-dimensional vectors of estimates of conditional probabilities and their bootstrap counterparts may be of independent interest.


Other methods for inference have been proposed in the prior literature for some special cases of our framework. For inference on certain treatment effect parameters, see, e.g., \cite{cheng2006bounds}, \cite{bhattacharya2008treatment,bhattacharya2012treatment}, and \cite{machado2019instrumental}; for inference on the validity of the assumptions, see, e.g., \cite{kitagawa2015test}, \cite{heckman2018unordered}, \cite{sun2023instrument}, and \cite{bai2025sharp}.  These approaches rely upon closed-form expressions for the identified set for the parameter of interest or testable inequalities, which must be proved or computed on a case-by-case basis.  When such expressions need to be computed, the methods are feasible only when the support of response types is small. In contrast, our approach does not rely on such closed-form expressions and remains computationally feasible even in high-dimensional settings. A key strength of our approach is its broad applicability, including in settings for which no existing methods apply.


The remainder of the paper is organized as follows. Section \ref{sec:setup} introduces our formal setup and notation. In Section \ref{sec:examples}, we provide several examples of parameters and restrictions previously considered in the literature that are nested by our framework.  We present our inference method when the support of the outcome variable is restricted to be finite in  Section \ref{sec:inference}, and without such a restriction in Section \ref{sec:contY}. We illustrate our method through a simulation study in Section \ref{sec:sims}. Proofs of all results can be found in Appendix \ref{app:proofs}.

\section{Setup and Notation} \label{sec:setup}

Denote by $Y \in \mathcal Y$ an outcome, by $D \in \mathcal D$ a discrete valued endogenous regressor (i.e., the treatment), and by $Z \in \mathcal Z$ a discrete valued instrumental variable. To rule out degenerate cases, we assume throughout that $2 \leq |\mathcal Y|$, $2 \leq |\mathcal D| < \infty$, and $2 \leq |\mathcal Z| < \infty$. We emphasize that neither $\mathcal D$ and $\mathcal Z$ needs to be a subset of the real line; in this way, our framework accommodates vector-valued treatments and instruments provided they only take a finite number of values. In what follows, we will consider both settings in which $|\mathcal Y|$ is finite and those in which it is not. Further denote by $Y(d) \in \mathcal Y$ the potential outcome if $D = d \in \mathcal D$ and by $D(z) \in \mathcal D$ the potential treatment if $Z = z \in \mathcal Z$.  As usual, we assume that
\begin{equation}\label{eq:potential}
Y = \sum_{d \in \mathcal D} Y(d) I \{D = d\} \text{~~and~~} D = \sum_{z \in \mathcal Z} D(z) I \{Z = z\}~. 
\end{equation}

In our discussion below, it will be convenient to define $R_o \equiv (Y(d) : d \in \mathcal D)$ and $R_t \equiv (D(z) : z \in \mathcal Z)$, and set $R \equiv (R_o,R_t)$.  Following \cite{heckman2018unordered}, we will refer to $R_t$ as the ``treatment response type.''  By analogy with this terminology, we will also refer to $R_o$ as the ``outcome response type'' and to $R$ as simply the ``response type.'' 
We let $Q$ denote the distribution of $(R_o,R_t,Z)$ and note that by \eqref{eq:potential} we have
\begin{equation} \nonumber
(Y, D, Z) = T(R_o,R_t,Z)~,
\end{equation}
for a transformation $T$ implicitly defined through \eqref{eq:potential}. 
Letting $P$ denote the distribution of $(Y, D, Z)$, we therefore have that $P = QT^{-1}$.  

Below we will require that $Q \in \mathbf Q$, where $\mathbf Q$ is a class of distributions satisfying assumptions that we will specify.   Different specifications of $\mathbf Q$ represent different assumptions that we impose on the distribution of potential outcomes and potential treatments.  In this sense, $\mathbf Q$ may be viewed as a model for potential outcomes and potential treatments.   

Given a distribution $P$ of $(Y,D,Z)$ and a model $\mathbf Q$, we define the set of $Q \in \mathbf Q$ that can rationalize $P$ as
\[ \mathbf Q_0(P, \mathbf Q) = \{Q \in \mathbf Q: P = Q T^{-1}\}~. \]
We say $\mathbf Q$ is consistent with $P$ if and only if $\mathbf Q_0(P, \mathbf Q ) \ne \emptyset$. For every model $\mathbf Q$ considered in this paper, every $Q \in \mathbf Q$ is assumed to satisfy the restriction:
\begin{assumption}[Instrument Exogeneity] \label{as:exog}
$R \indep Z$ under $Q$.
\end{assumption}

Our remaining restrictions on $\mathbf Q$ will be formulated in terms of restrictions on possible values of the response type $R$.  
These restrictions will be expressed by specifying a set $\mathcal R \subseteq \mathcal Y^{|\mathcal D|} \times \mathcal D^{|\mathcal Z|}$ characterizing the possible values of $R$.  In Section \ref{sec:examples}, we provide several different choices of $\mathcal R$ that have been previously considered in the literature.  For a given choice of $\mathcal R$, we will impose the following on every $Q \in \mathbf Q$:
\begin{assumption}[Response Type Restrictions]\label{ass:generalizedstrata} 
$Q\{R \in \mathcal R \} = 1$.
\end{assumption}
\noindent Following \cite{frangakis2002principal}, sets of the form $\{R_t = r_t\}$ for $r_t$ a possible value of $R_t$ are referred to as ``principal strata.''  By analogy with this terminology, we will refer to sets of the form $\{R \in \mathcal R'\}$ for $\mathcal R' \subseteq \mathcal R$ as ``generalized principal strata.''

Finally, we define our parameters of interest.  
The parameters we consider can be written as 
\begin{equation} \label{eq:parameter}
\theta(Q) \equiv E_Q[g(R) \mid R \in \mathcal R']
\end{equation}
for different choices of function $g : \mathcal R \rightarrow \mathbf R$ and generalized principal strata $\mathcal R' \subseteq \mathcal R$.  Note that for $\theta(Q)$ in \eqref{eq:parameter} to be well defined we require $Q$ to satisfy $Q\{R \in \mathcal R'\} > 0$.  A wide variety of parameters can be accommodated in this way.  In Section \ref{sec:examples}, we will show specific parameters that have been considered previously in the literature have the structure in \eqref{eq:parameter}. 
We note now, however, that natural choices of $g$ correspond to the effect of one treatment versus another (i.e., $g(R) = Y(d) - Y(d')$ for $d \in \mathcal D$ and $d' \in \mathcal D$) and the probability that one treatment leads to a larger outcome than another treatment (i.e., $g(R) = I\{Y(d) > Y(d')\}$ for $d \in \mathcal D$ and $d' \in \mathcal D$).  We further note that when $\mathcal R' = \mathcal R$, $\theta(Q)$ defined in \eqref{eq:parameter} simplifies to $E_Q[g(R)]$.  

For a given distribution $P$ and model $\mathbf Q$, note that the identified set for $\theta(Q)$ under $P$ relative to $\mathbf{Q}$ is 
\begin{equation} \label{eq:idset}
\Theta_0(P, \mathbf Q) \equiv \{ \theta(Q) : Q \in \mathbf Q_0(P, \mathbf Q) ~ \mbox{and}~ Q\{R \in \mathcal R'\} > 0 \} ~. 
\end{equation}
The set $\Theta_0(P, \mathbf Q) $ is nonempty whenever there exists at least one $Q \in \mathbf Q_0(P, \mathbf Q)$ with $Q\{R \in \mathcal R'\} > 0$.  By construction, this set is ``sharp'' in the sense that for any value in the set there exists a distribution $Q$ that is consistent with $P$, satisfies the restrictions of the model, and for which $\theta(Q)$ equals the prescribed value.

\section{Examples} \label{sec:examples}

In this section, we show how to accommodate several examples from the previous literature in our framework.  Our discussion focuses in particular on Assumption \ref{as:exog} and the specification of $\mathcal R$ in Assumption \ref{ass:generalizedstrata}, but, where the cited literature has emphasized specific parameters of interest, we additionally describe how those parameters of interest can be expressed as \eqref{eq:parameter} for suitable choices of $g$ and $\mathcal R'$.

\begin{example}[\textit{RCT with one-sided noncompliance}] \label{eg:onesided}
Consider a multi-arm randomized controlled trial (RCT) with noncompliance, where $\mathcal D = \mathcal Z = \{0, \dots, K\}$, $Z=d$ denotes random assignment to treatment $d$,  and $D(d)=d$ denotes that the subject would comply with  assignment if assigned to  treatment  $d$. In this example, $Q$ satisfies Assumption \ref{as:exog} because $Z$ is randomly assigned. Suppose noncompliance to the assignment is one-sided in the sense that one can always take the control $d=0$, but, for any other treatment $d \ne 0$, one can only take that treatment if assigned to it.
This restriction can be expressed in terms of Assumption \ref{ass:generalizedstrata} with $\mathcal R = \{( y(0), \dots, y(K), d(0), \dots ,d(K )) : d(j) \in \{0, j\} \text{ for all } j \in \mathcal D\}$.  With the notable exception of \cite{cheng2006bounds}, discussed below, analyses of causal parameters that condition on generalized principal strata in this context follow \cite{imbens1994identification} in using the Wald estimand to identify local average treatment effect (LATE) parameters of the form $E_Q[Y(j)-Y(0) \mid D(j)=1]$ for $ j \in \{1, \dots, K\}$.  Note that identification of such parameters does not allow comparison of $Y(j)$ to $Y(k)$ for $j, k \ne 0$ for any  subgroup of subjects.  Our framework nests these identified parameters as well as partially identified parameters including the relative treatment effectiveness for any subgroup defined by a generalized principal stratum.
\end{example}

\begin{example}\label{eg:chengsmall}
\cite{cheng2006bounds} study the special case of Example \ref{eg:onesided} in which $\mathcal{D} = \mathcal{Z} = \{0,1,2\}$.  
Their ``Monotonicity I'' assumption corresponds to Assumption \ref{ass:generalizedstrata} with $\mathcal R$ defined as in Example \ref{eg:onesided}.  They further consider imposing the restriction that $Q\{ D(1)=1 \mid D(2)=2\}=1$, i.e., that subjects who would comply with assignment to treatment $2$ would also comply with assignment to treatment $1$.  They argue that this assumption is plausible in contexts where the ``cost'' of compliance with treatment $1$ is lower than that with treatment $2$.  The combination of these two restrictions can be formulated in terms of Assumption \ref{ass:generalizedstrata} with $\mathcal R = \{(y(0), y(1),y(2), d(0), d(1), d(2)) : (d(0), d(1), d(2)) \in \{(0, 0, 0), (0, 1, 0), (0, 1, 2)\}\}$.  In their application, \cite{cheng2006bounds} focus on the following parameters: (i) $E_Q[Y(j) - Y(0) \mid (D(0),D(1),D(2)) = (0,1,2)]$ for $j \in \{1,2\}$; and (ii) $Q\{(D(0),D(1),D(2)) = r_t\}$ for different  values of $r_t$.  Each of these parameters can be expressed in the form of \eqref{eq:parameter} for appropriate choices of $g$ and $\mathcal R'$.  For example, the parameter in (i) equals $E_Q[g(R) \mid R \in \mathcal R']$ for $g(R) = Y(j) - Y(0)$ and $\mathcal R' = \{(y(0), y(1),y(2), d(0), d(1), d(2)) \in \mathcal R: (d(0), d(1), d(2)) = (0, 1, 2)\}$.  Our framework nests their parameters within their context, while also allowing for inference on the corresponding parameters for trials with more than three treatment arms, for which their approach becomes computationally infeasible.
\end{example}

\begin{example}[\textit{Encouragement design}] \label{eg:encouragement}
Consider a multi-arm RCT with possibly two-sided noncompliance, where $\mathcal D = \mathcal Z = \{0, \dots, K\}$, $Z=d$ denotes random assignment to treatment $d$, and $D(d)=d$ denotes that the subject would comply with  assignment if assigned to  treatment  $d$. More generally, not necessarily in the context of an RCT, one can interpret $Z=d$ as random encouragement to treatment $d$ and interpret $D(d)=d$  as the subject would take treatment $d$ if encouraged to do so. In this example, $Q$ satisfies Assumption \ref{as:exog} because $Z$ is randomly assigned.  \cite{bai2024identifying} generalize the ``no-defier'' restriction of \cite{imbens1994identification} to
\begin{equation} \label{eq:nodefier}
Q\{D(d) \ne d ,   ~D(d')=d ~ \mbox{for some}  ~ d^{\prime} \ne d \}=0~,
\end{equation}
i.e., a subject that would not take treatment $d$ if assigned to (encouraged to take) $d$ but would also not take $d$ if assigned (encouraged) to some other treatment $d^{\prime} \ne d$.  This restriction can be formulated in terms of Assumption \ref{ass:generalizedstrata} with $\mathcal R = \{(y(0), \dots, y(K), d(0), \dots ,d(K) ) : d(j) \ne j ~ \Rightarrow d(k) \ne j ~ \forall
~ j, k \in \mathcal{D} \}$.   \cite{bai2024identifying} derive the identified set on unconditional average treatment effects in this context under \eqref{eq:nodefier}.  While the two-stage least squares (TSLS) estimand in this context generally does not correspond to any well-defined causal parameter when $K \ge 2$, \citet{BHULLER2024105785} show that under strong, additional assumptions, it identifies a particular weighted average of strata-specific causal effects that may not be of a priori interest. In contrast, our framework allows inference on a broad class of parameters of substantive interest including those that condition on generalized principal strata and that need not coincide with the TSLS estimand.
\end{example}

\begin{example}[\textit{RCT with close substitute}]\label{eg:close}
\cite{kline2016evaluating} consider an RCT with a ``close substitute'' to study the effects of preschooling on educational outcomes. In their setting, $D \in \mathcal{D} = \{c,h,n\}$, where $D= c$ denotes home care (no preschool), $D = h$ denotes a preschool program called Head Start, and $D = n$ denotes  preschools other than Head Start, i.e., the close substitute. Let $Z \in \mathcal{Z}=\{0,1\}$ denote an indicator variable for a randomized offer to attend Head Start. Assumption \ref{as:exog} holds because $Z$ is randomly assigned. 
In evaluating the cost-effectiveness of Head Start, \cite{kline2016evaluating} impose the restriction
\begin{equation} \label{eq:kw}
Q \{ D(1) = h \mid D(0) \ne D(1)\}=1~.
\end{equation}
The condition in \eqref{eq:kw} states that if a family's schooling choice changes upon receiving a Head Start offer, then they must choose Head Start when receiving the offer. In other words, it cannot be the case that upon receiving a Head Start offer, a family switches from no preschool to preschools other than Head Start, or the other way around. The restriction can be formulated in terms of Assumption \ref{ass:generalizedstrata} with
\[ \mathcal R = \{(y(c), y(h),y(n), d(0), d(1)) : (d(0), d(1)) \in \{(c, c), (h, h), (n, n), (c, h), (n, h)\}\}~. \]
\cite{kline2016evaluating} show that the Wald estimand identifies a weighted combination of ``sub-LATEs'':
\[ \frac{E[Y \mid Z=1] - E[Y \mid Z=0]}{P\{D = h \mid Z=1\} - P\{D=h \mid Z=0\}} =  S_c \mathrm{SubLATE}_{ch} + (1 - S_c) \mathrm{SubLATE}_{nh}~, \]
where
\begin{align*}
\mathrm{SubLATE}_{ch} & = E_Q[Y(h)-Y(c) \mid D(1)=h, D(0)=c] \\
\mathrm{SubLATE}_{nh} & = E_Q[Y(h)-Y(n) \mid D(1)=h, D(0)=n] 
\end{align*}
and $S_c = Q\{ D(1) = h, D(0)=c \mid D(1) = h, D(0) \ne h\}$.  
Here, $S_c$ and the sub-LATEs can all be written as \eqref{eq:parameter} for appropriate choices of $g$ and $\mathcal R'$. \cite{kline2016evaluating} show that these separate sub-LATE parameters are of substantive interest, and identify them under additional assumptions including imposing a multinomial normal selection model.  Our framework allows inference on these sub-LATE parameters without imposing such additional assumptions sufficient to identify them. 
\end{example}

\begin{example} \label{ex:klm}
\cite{kirkeboen2016field} study the effects of college field of study on earnings. In their setting, $\mathcal D = \{0, 1, 2\}$ represents three fields of study, ordered by their (soft) admission cutoffs from  lowest to  highest. The instrument is $Z \in \{0,1,2\}$, where $Z = 1$ when the student crosses the (soft) admission cutoff for field 1 but not for field 2, $Z = 2$ when the student crosses the (soft) admission cutoff for field 2, and $Z = 0$ otherwise. The authors assume that $Z$ is exogenous in the sense that $Q$ satisfies Assumption \ref{as:exog}, and they impose: 
\begin{align} 
\label{eq:monotonicity1} Q \{D(1) = 1 \mid D(0) = 1\} & = 1~,  \\
\label{eq:monotonicity2} Q \{D(2) = 2 \mid D(0) = 2\} & = 1~.
\end{align}
The monotonicity conditions in \eqref{eq:monotonicity1}--\eqref{eq:monotonicity2} require that crossing the cutoff for field 1 or 2 weakly encourages students toward that field. They further impose the following ``irrelevance'' conditions:
\begin{align}  
\label{eq:irrelevance1} Q \{I \{D(1) = 2\} & = I \{D(0) = 2\} \mid D(0) \neq 1, D(1) \neq 1\} = 1~, \\
\label{eq:irrelevance2} Q \{I \{D(2) = 1\} & = I \{D(0) = 1\} \mid D(0) \neq 2, D(2) \neq 2\} = 1~.
\end{align}
The condition in \eqref{eq:irrelevance1} states that if crossing the cutoff for field 1 does not cause a student to switch to field 1, then it also does not cause them to switch to or away from field 2. A similar interpretation applies to \eqref{eq:irrelevance2}. 
The restrictions in \eqref{eq:monotonicity1}--\eqref{eq:irrelevance2} can be formulated in terms of Assumption \ref{ass:generalizedstrata} with $\mathcal R = \{(y(0), \allowbreak y(1),y(2), d(0), d(1), d(2)) : (d(0), d(1) , d(2))  \in \{(0, 0, 0), (0, 0, 2), (0,1,0), (0,1,2), (1, 1, 1), (1, 1, 2), (2, 1, 2), \allowbreak (2,2,2)\} \}$. \cite{kirkeboen2016field} regard the irrelevance conditions \eqref{eq:irrelevance1}–\eqref{eq:irrelevance2} as strong but, under them, show that the TSLS estimands identify $E_Q[Y(2)-Y(0) \mid D(1)=1, D(0)=0]$ and $E_Q[Y(2)-Y(0) \mid D(2)=2, D(0)=0]$.  Under their assumptions \eqref{eq:monotonicity1}–\eqref{eq:irrelevance2}, our analysis  allows one to analyze additional parameters that do not correspond to the TSLS estimands, including the average relative effects of field 1 versus field 2 for any given generalized principal strata.  Our framework further allows one to investigate what can be learned about these parameters without imposing \eqref{eq:irrelevance1}–\eqref{eq:irrelevance2}. For further discussion of this example and Example \ref{eg:close}, see \cite{lee2023treatment} and \cite{bai2025sharp}. 
\end{example}

\begin{example}[\textit{Restrictions from WARP}] \label{eg:unordered}
\cite{heckman2018unordered} consider a setting in which there is a voucher $Z$ that subsidizes in different ways that we specify below the purchase of three different cars, that we denote by $A$, $B$ and $C$.  They further assume that the voucher is randomly assigned, so that Assumption \ref{as:exog} holds.  The treatment $D$ corresponds to the purchase of the different cars; let $D = A$ correspond to the purchase of car $A$, $D = B$ correspond to the purchase of car $B$, and $D = C$ correspond to the purchase of car $C$.  In this setting, \cite{heckman2018unordered} consider a series of examples in which they use the Weak Axiom of Revealed Preference (WARP) to restrict treatment response types, each of which can be formulated in terms of Assumption \ref{ass:generalizedstrata} for appropriate choice of $\mathcal{R}.$  
\begin{enumerate}[\rm (i)]
\item In their leading example, $Z = 0$ corresponds to no voucher, $Z = 1$ corresponds to a voucher that subsidizes the purchase of car $A$, and $Z = 2$ corresponds to a voucher that subsidizes the purchase of either $B$ or $C$.  WARP generates the restriction in Table III of \cite{heckman2018unordered}. These restrictions can be formulated in terms of Assumption \ref{ass:generalizedstrata} with $\mathcal R = \{(y(A), y(B),y(C), d(0), d(1), d(2)) : (d(0), d(1) , d(2))  \in  \{(A, A, A), (A, A, B), (A, A, C), (B, A, B), (B, B, B), (C, A, C), (C, C, C)\}\}$.
\item In a second example, $Z = 0$ corresponds to no voucher, $Z = 1$ corresponds to a voucher that subsidizes the purchase of $B$, and $Z = 2$ corresponds to a voucher that subsidizes the purchase of $B$ or $C$. WARP generates the restriction in Table V of \cite{heckman2018unordered}, which can be formulated in terms of Assumption \ref{ass:generalizedstrata} with $\mathcal R = \{(y(A), y(B),y(C), d(0), d(1), d(2)) : (d(0), d(1) , d(2))  \in  \{(A, A, A), (A, A, C), (A, B, B), \\(A, B, C), (B, B, B), (C, C, C), (C, B, C)\} \}$.
\item Finally, in a third example, $Z = 0$ corresponds to a voucher that subsidizes the purchase of $C$, $Z = 1$ corresponds to a voucher that subsidizes the purchase of $B$, and $Z = 2$ corresponds to a voucher that subsidizes the purchase of $B$ or $C$. WARP generates the restriction in Table VI of \cite{heckman2018unordered}, which can be formulated in terms of Assumption \ref{ass:generalizedstrata} with $\mathcal R = \{(y(A), y(B),y(C), d(0), d(1), d(2)) : (d(0), d(1) , d(2)) \in\{(A, A, A), (A, B, B), (B, B, B), (C, A, C), (C, B, B), (C, B, C), (C, C, C)\}\}$.
\end{enumerate}
\cite{heckman2018unordered} provide several additional examples that also fit within our framework.  
They establish conditions for identification of average treatment effect parameters that condition on generalized principal strata in these examples; in comparison, our analysis allows inference for a more general class of parameters that may be only partially identified.\end{example}

\begin{example}[\textit{Ordered monotonicity with known direction}] \label{eg:ordered}
Consider a setting in which both the treatment and the instrument admit natural orderings, and write $\mathcal D = \{0, \ldots, K_d\}$ and $\mathcal Z = \{0, \ldots, K_z\}$.  Suppose further that $Z$ is randomly assigned, so Assumption \ref{as:exog} holds.  
In many applications, it is reasonable to assume the monotonicity restriction $Q\{D(j) \geq D(k)\} = 1$ for any pair of instrument values $j,k \in \mathcal Z$ with $j \geq k$.  For example, \cite{dupas2014short} reports an experiment in which $Z$ represents different randomly assigned subsidy levels for insecticide-treated bed nets, and $D$ denotes the total number of bednets purchased over a two-year period.
The monotonicity restriction in this context is that one would purchase at least as many insecticide-treated bed nets with  a greater subsidy as with a lower subsidy.   This  restriction can be expressed in terms of Assumption \ref{ass:generalizedstrata} as  $\mathcal R = \{(y(0), \dots, y(K_d), d(0), \dots ,d(K_z) ) : d(K_z) \geq \ldots \geq d(0)\}$. 
The results of \cite{angrist1995two-stage} imply that the TSLS estimand in this context identifies a specific weighted average of strata-specific causal effects, whereas our analysis allows inference on a broader class of parameters—including partially identified parameters—that need not coincide with the TSLS estimand.\end{example}

\begin{example}[\textit{Monotone treatment response}]\label{example:mtr}
Consider a setting in which the treatment admits a natural ordering and write $\mathcal D = \{0, \ldots, K_d\}$.  Suppose further that $Z$ is randomly assigned, so Assumption \ref{as:exog} holds. Following \cite{manski1997monotone},   it is often reasonable to assume that $Q\{Y(j) \geq Y(k)\} = 1$ for any pair of treatments $j,k \in \mathcal D$ with $j \geq k$. For example, in \cite{manski1998monotone}, the treatment is years of schooling and the outcome is log(wage), and the restriction is that additional schooling leads to  (weakly) higher wages. This restriction can be formulated in terms of Assumption \ref{ass:generalizedstrata} with $\mathcal R = \{(y(0), \dots, y(|\mathcal D| - 1), d(0), \dots ,d(|\mathcal Z|-1 ) ) : y(K_d) \geq \ldots \geq y(0)\}$.  While \cite{manski1997monotone} develops partial-identification results for unconditional treatment parameters in this context without restrictions on treatment-response types, our analysis additionally allows such restrictions to be incorporated and enables inference on parameters that condition on generalized principal strata.
\end{example}

\begin{example}[\textit{Harmless treatment}]\label{eg:harmless}
Consider a setting in which $Z\in \mathcal Z = \{0,\ldots, K_z\}$ is randomly assigned, so Assumption \ref{as:exog} holds, and there is a baseline treatment, i.e., the control, corresponding to $0 \in \mathcal D = \{0,\ldots, K_d\}$. In many applications, it is reasonable to assume that the remaining treatments are  harmless relative to the control.  For example, in \cite{angrist2009incentives}, the outcome is academic performance, the control is no treatment, and the noncontrol treatments are (i) providing students with academic peer-advising service; (ii) financial incentives for good academic performance; and (iii) both (i) and (ii).  It is therefore natural to assume that $Q\{Y(d) \geq Y(0)\} =1$ for all $d \in \mathcal D$.  This 
restriction can be formulated in terms of Assumption \ref{ass:generalizedstrata} with $\mathcal R = \{(y(0), \dots, y(K_d), d(0), \dots ,d(K_z) ) : y(d) \geq y(0) \text{ for all } d \in \mathcal D \}$.  This restriction is an example of what \cite{manski1997monotone} termed a semi-monotone ordering of outcomes. As discussed above in Example \ref{example:mtr},
our analysis differs from that of  \cite{manski1997monotone}  by additionally allowing restrictions   on treatment-response types  to be incorporated and enabling inference on parameters that condition on generalized principal strata.
\end{example}

In many of the examples discussed above, $Y$ may be an ordinal outcome.  
In such instances, average treatment effects may not be interpretable.
Nonetheless, researchers may consider other parameters that fit our framework, such as: (i) $Q\{Y(j) > Y(k) \mid R \in \mathcal R' \}$, the conditional probability of benefit of treatment $j$ versus treatment $k$; (ii) $Q\{Y(j) \ge Y(k) \mid R \in \mathcal R' \}$, the conditional probability of no harm of treatment $j$ versus treatment $k$; or (iii) $Q\{Y(j) > Y(k) \mid  R \in \mathcal R'\} - Q\{Y(k) > Y(j) \mid R \in \mathcal R'\}$, the conditional relative treatment effect of treatment $j$ versus $k$.  Each of these parameters can be written as \eqref{eq:parameter} for appropriate choices of $g$.  
These parameters have been previously studied for the special case in which $\mathcal{R}'=\mathcal{R}$ and $|\mathcal{D}| = 2$ or $3$ by \cite{lu2018treatment}, \cite{huang2019constructing}, and \cite{gabriel2024sharp}. 

It is also instructive to discuss examples of restrictions that \emph{do not} fit our framework.
\cite{angrist1995two-stage}, for instance, consider a generalization of the the monotonicity assumption in Example \ref{eg:ordered} in which the direction of the monotonicity is not known \textit{a priori}, i.e., they consider the restriction that, for any pair of values for the instrument $j,k \in \mathcal Z$, either 
$Q\{D(j) \geq D(k)\} = 1$  or $Q\{D(j) \leq D(k)\} = 1$. Such a restriction is also analyzed in \cite{vytlacil2006ordered} and in \cite{heckman2018unordered}, where the latter paper terms it ``ordered monotonicity.''  This restriction is equivalent to imposing $Q \{R \in \mathcal R_\pi\} = 1$ for some $\pi \in \Pi(|\mathcal Z|)$, where $\Pi(|\mathcal Z|)$ is the set of all permutations of $\{0, \dots, |\mathcal Z| - 1\}$ and $\mathcal R_\pi = \{(y(0), \dots, y(|\mathcal D| - 1), d(0), \dots ,d(|\mathcal D|-1 ) ) : d(\pi(|\mathcal Z| - 1)) \geq \ldots \geq d(\pi(0))\}$. A similar generalization of the assumption in Example \ref{example:mtr} can also be considered. In particular, \cite{machado2019instrumental} study such a generalization with $|\mathcal D| = |\mathcal Z| = 2$.  
These models, in which the permutation $\pi$ is not known \textit{a priori}, fall outside the class of models we consider.

\section{Inference with Discrete Outcomes} \label{sec:inference}

In this section, we propose a test for conducting inference on the parameter of interest $\theta(Q)$. 
To this end, recall that $P$ denotes the distribution of $(Y,D,Z)$ and set $\mathcal M \equiv \mathcal Y \times \mathcal D \times \mathcal Z$. In this section only, we suppose $|\mathcal Y| < \infty$. Doing so allows us to introduce our method succinctly. This assumption will be removed in Section \ref{sec:contY}, where we allow for a more general outcome through discretization. For $(y, d, z) \in \mathcal M$, define $P_{ydz} = P \{Y = y, D = d, Z = z\}$ and $P_z = P \{Z = z\}$. In what follows, we assume that $P\in \mathbf P$, where $\mathbf P$ is a ``large'' class of distributions that we will specify below.
The class $\mathbf P$ can depend on the sample size $n$, but we suppress the dependence from our notation.
We consider the problem of testing
\begin{equation} \label{eq:testingprob} 
H_0 : P \in \mathbf P_0 \text{ versus } H_1 : P \in \mathbf P \setminus \mathbf P_0
\end{equation}
at level $\alpha \in (0,1)$, where, for a pre-specified value of $\theta_0$, the null hypothesis we consider is given by
\begin{equation} \label{eq:null}
\mathbf P_0 = \{ P \in \mathbf P : \theta_0 \in \Theta_0(P,\mathbf Q) \}~.    
\end{equation} 
As we show below, a test of this null hypothesis can be used to construct confidence regions for $\theta(Q)$ through test inversion over the pre-specified value $\theta_0$.

The key insight underlying the construction of our test is the following theorem, which provides a convenient reformulation of $\mathbf P_0$ in terms of existence of a non-negative solution to a (possibly under-determined) system of linear equations in which the ``coefficients'' are known.  The statement of the theorem involves the parameter
\begin{equation} \label{eq:beta}
\beta(P) \equiv ((P_{yd|z} : (y,d,z) \in \mathcal{M}),
1, 0)'~,
\end{equation}
where $P_{yd|z} \equiv P\{Y = y, D = d | Z = z\}$.  
Using this notation, we obtain the following result:


\begin{theorem} \label{thm:equiv}
Suppose $|\mathcal Y| < \infty$. Let $\mathbf Q$ be the set of all distributions of $(R,Z)$ satisfying Assumptions \ref{as:exog} and \ref{ass:generalizedstrata}, and $\mathbf P$ be the set of all distributions of $(Y,D,Z)$ satisfying $P\{Z=z\} > 0$. Then, for $\beta(P)$ defined in \eqref{eq:beta} and a matrix $A$ defined in the beginning of Appendix \ref{app:equiv} that depends only on $\mathcal R$ in Assumption \ref{ass:generalizedstrata}, $\theta_0$ in \eqref{eq:null}, and the quantities $g$ and $\mathcal R'$ in the definition of $\theta(Q)$ in \eqref{eq:parameter}, it follows
\begin{equation} \label{eq:equiv}
\mathbf P_0 \subseteq \{P \in \mathbf P : Ax = \beta(P) \text{ for some } x \geq 0 \}~.
\end{equation} 
Furthermore, if $\mathbf P_0\neq \emptyset$, then $ \{P\in \mathbf P : Ax = \beta(P) \text{ for some } x \geq 0\}\subseteq {\rm cl}_{\mathbf P}(\mathbf P_0) $, where $\mathrm{cl}_{\mathbf P}({\mathbf P_0})$ denotes the closure of $\mathbf P_0$ in $\mathbf P$ (understood as a subset of $\mathbf R^{|\mathcal M|}$).  
\end{theorem}

The key implication of Theorem \ref{thm:equiv} is that we may test the null hypothesis of interest by examining whether there exists a positive vector $x$ satisfying $Ax=\beta(P)$.  Because $\theta(Q)$ need not always be well defined when $\mathcal R' \neq \mathcal R$, \eqref{eq:equiv} involves an inclusion rather than an equality.
From a testing perspective, however, $\mathbf P_0$ and its closure are indistinguishable, in the sense that  any level $\alpha$ test of whether $P$ belongs to $\mathbf P_0$ necessarily has power no larger than $\alpha$ against any $P$ in the closure $\text{cl}(\mathbf P_0)$ \citep{romano2004non}.
Therefore, the second part of Theorem \ref{thm:equiv} may be interpreted as showing that testing whether $P\in \mathbf P_0$ is in fact equivalent to testing whether $\beta(P) = Ax$ for some $x\geq 0$.
This conclusion holds under the condition that $\mathbf P_0$ is not empty.
However, by definition, $\mathbf P_0$ is empty whenever there is no possible distribution of the data $P\in \mathbf P$ that is compatible with the null hypothesis.
In other words, we may focus on testing whether $\beta(P) = Ax$ for some $x\geq 0$ \emph{unless} the null hypothesis specifies restrictions that are incompatible with any $P$ and, in this sense, may be interpreted as being logically inconsistent with each other.

Theorem \ref{thm:equiv} also has implications for deriving an analytical expression for the identified set of $\theta(Q)$.
We discuss these implications in the next two remarks, and elaborate on them in the appendix.

\begin{remark} \label{remark:analytical}
When $Q\{R \in \mathcal R'\}$ is known or identified, in the sense that $\{Q\{R \in \mathcal R'\} : Q \in \mathbf Q_0(P,\mathbf Q)\}$ is a singleton for all $P \in \mathbf P$ for which $\mathbf Q_0(P,\mathbf Q) \neq \emptyset$, it is possible to use the characterization of $\mathbf P_0$ in Theorem \ref{thm:equiv} to derive closed-form expressions $L(P)$ and $U(P)$ such that $\Theta_0(P,\mathbf Q) = [L(P), U(P)]$ for all $P \in \mathbf P$ for which $\mathbf Q_0(P,\mathbf Q) \neq \emptyset$ and $Q\{R\in \mathcal R'\} > 0$.  
As in \cite{balke1997bounds}, the key idea is to express $L(P)$ and $U(P)$ as the values of linear programs and to use the duals of these programs to obtain expressions for $L(P)$ and $U(P)$ in terms of $P_{yd|z}$ through vertex enumeration. Computing $L(P)$ and $U(P)$ in this way rapidly becomes computationally prohibitive as the support of $(Y,D,Z)$ becomes large. We describe this procedure in Appendix \ref{app:analytical} and further develop a related procedure when $Q\{R \in \mathcal R'\}$ is not identified. Following the approach in \cite{machado2019instrumental}, it is possible to use such expressions for inference, but a virtue of the approach we pursue here is that it does not rely on knowledge of these expressions.
\end{remark}

\begin{remark}\label{remark:chengandsmall}
The identified set for the parameter of interest in general depends on the distribution of the full vector $(Y,D,Z)$.
This is the case even for parameters that only depend on the distribution of treatment response types $R_t$, $Q\{R_t=r_t\}$---parameters that, when point-identified, typically depend only on the distribution of $(D,Z)$ \citep{heckman2018unordered, navjeevan2023identification}. 
In Appendix \ref{app:chengsmall}, we illustrate this phenomenon by deriving the closed-form expression for the identified set in an example using the method described in Remark \ref{remark:analytical}, and comparing it with the bounds using only the 
distribution of $(D,Z)$.
\end{remark}

Theorem \ref{thm:equiv} immediately permits us to test whether the model is correctly specified in the sense that $\mathbf Q_0(P, \mathbf Q) \neq \emptyset$.  To see this, consider $g \equiv 1$, $\theta_0 = 1$, and $\mathcal R' = \mathcal R$.  For such values of these quantities,
\begin{eqnarray*}
\Theta_0(P,\mathbf Q) =
\begin{cases}
\{1\} & \text{ if } \mathbf Q_0(P,\mathbf Q) \neq \emptyset \\
\emptyset & \text{ otherwise. }
\end{cases}
~.
\end{eqnarray*}
Therefore, $\mathbf P_0 = \{P \in \mathbf P : 1 \in \Theta_0(P,\mathbf Q)\} = \{P \in \mathbf P : \mathbf Q_0(P,\mathbf Q) \neq \emptyset \}$.  In this case, the equality in the last row of $Ax = \beta(P)$ in \eqref{eq:equiv} is always true, so it can be omitted.  In order to state the result formally, denote by $\tilde A$ and $\tilde \beta(P)$ all but the last rows of $A$ and $\beta(P)$, respectively, in Theorem \ref{thm:equiv}. 



\begin{corollary} \label{cor:validity}
Suppose $|\mathcal Y| < \infty$. Let $\mathbf Q$ be the set of all distributions of $(R,Z)$ satisfying Assumptions \ref{as:exog} and \ref{ass:generalizedstrata}, and $\mathbf P$ be the set of all distributions of $(Y,D,Z)$ satisfying $P\{Z=z\} > 0$.  For $g \equiv 1$, $\theta_0 = 1$, and $\mathcal R' = \mathcal R$, 
\begin{equation} \label{eq:testabilityrestriction}
\mathbf P_0 = \{P \in \mathbf P : \mathbf Q_0(P, \mathbf Q) \neq \emptyset \} = \{P \in \mathbf P : \tilde A x = \tilde \beta(P) \text{ for some } x \geq 0 \}~.
\end{equation}
\end{corollary}

Because $\theta(Q)$ is always well defined when $\mathcal R' = \mathcal R$, \eqref{eq:testabilityrestriction} in Corollary \ref{cor:validity} differs from its counterpart \eqref{eq:equiv} in Theorem \ref{thm:equiv} in that it involves an equality instead of an inclusion. 

\begin{remark} \label{remark:testability}
Based on the characterization in Corollary \ref{cor:validity}, we show in Appendix \ref{app:testability} how a strategy like that described in Remark \ref{remark:analytical} can be used to derive a set of analytical inequalities that hold if and only if $\mathbf Q_0(P, \mathbf Q) \neq \emptyset$. We emphasize, however, that our inference methods below do not rely on these analytical inequalities.    
\end{remark}

By Theorem \ref{thm:equiv}, we may test the null hypothesis of interest by testing whether $P$ is such that
\begin{equation}\label{eq:test1}
\beta(P) = Ax \text{ for some } x\geq 0.
\end{equation}
In recent work, \cite{fang2023inference} derived a test for whether $P$ satisfies the restriction in \eqref{eq:test1} for a general class of problems; see also \cite{bai2022testing} for other approaches.
In what follows, we build on \cite{fang2023inference} by employing the structure of our problem to sharpen rate conditions and establish the validity of their test under weaker requirements. Following Corollary \ref{cor:validity}, it is straightforward to modify the test described below to test the null hypothesis that the model is correctly specified in the sense that $\mathbf Q_0(P, \mathbf Q) \neq \emptyset$.

In order to describe the test statistic, we require some additional notation.
We will assume the availability of an i.i.d.\ sample $\{V_i\}_{i=1}^n$ with $V_i=(Y_i,D_i,Z_i)$, $V_i \sim P$, and let $\hat P_n$ denote the empirical distribution corresponding to the sample $\{V_i\}_{i=1}^n$.
We further set $\hat \beta_n \equiv \beta(\hat P_n)$ to denote the plug-in estimator for the parameter $\beta(P)$ and let $\hat \Omega \equiv \Omega(\hat P_n)$, where $\Omega(P)$ is a diagonal weighting matrix whose $(i,i)$-entry equals the asymptotic standard deviation of the $i^{th}$ row of $\sqrt n(\hat \beta_n - \beta(P))$.
Specifically, the first $|\mathcal M|$ diagonal entries of $\Omega(P)$ equal
$$\left(\frac{P_{yd|z}(1-P_{yd|z})}{P_z}\right)^{1/2}$$
for some $(y,d,z)\in \mathcal M$, and the final two diagonal entries of $\Omega(P)$ are equal to zero.
Finally, we let $A^\dagger$ denote the Moore-Penrose pseudoinverse of $A$.

Given the introduced notation, we define the set $\hat{\mathcal V}_n \equiv  \{s : A^\dagger s \leq 0, \|\hat \Omega_n (AA')^\dagger s\|_1 \leq 1\}$ and set
\begin{equation} \label{eq:teststat-main}
T_n = \sup_{s \in \hat{\mathcal V}_n} \sqrt n \langle A^\dagger s, A^\dagger \hat \beta_n \rangle~
\end{equation}
as our test-statistic -- here, $\|a\|_1 = \sum_{i=1}^d |a_i|$ for any vector $a = (a_1,\ldots, a_d) \in \mathbf R^d$ and $\langle a,b\rangle = a^\prime b$.
From results in \cite{fang2019inference}, $T_n$ represents a sample analogue to an equivalent formulation of the restriction in \eqref{eq:test1} obtained from Farkas' lemma.
In particular, the population analogue to $T_n$ equals zero whenever \eqref{eq:test1} indeed holds and, as a result, under the null hypothesis $T_n$ should not be ``too large.''
We highlight that the test statistic can be computed through linear programming. 
As a result, $T_n$ can be reliably and quickly computed even in high-dimensional applications.

Unfortunately, the asymptotic distribution of $T_n$ is not pivotal.
In particular, its asymptotic distribution depends on the ``directions'' $s \in \hat {\mathcal V}_n$ at which the population analogue $\langle A^\dagger s, A^\dagger \beta(P)\rangle$ is ``close'' to zero.
In order to obtain a valid critical value, we therefore rely on the nonparametric bootstrap and a construction that asymptotically excludes directions $s$ that do not play a role in the distribution of $T_n$; i.e., directions $s$ for which $\langle A^\dagger s, A^\dagger \beta(P)\rangle $ is ``far'' from zero.
Specifically, for the latter purpose we introduce
\begin{align*}
\hat \beta_n^r \in \;& \argmin_b \sup_{s \in \hat{\mathcal V}_n} |\langle A^\dagger s, A^\dagger (\hat \beta_n - b) \rangle| \\
&\text{ s.t. } Ax = b \text{ for some } x \geq 0 \text{ and } b = (b_u', 1, 0)' \text{ for } b_u \in \mathbf R^{|\mathcal M|},
\end{align*}
which represents an estimator for $\beta(P)$ that is restricted to satisfy the null hypothesis.
Letting $\{V_i^*\}_{i=1}^n$ denote a bootstrap sample, (i.e., a sample drawn i.i.d.\ from $\hat P_n$), and $\hat \beta_n^*$ the bootstrap analogue to $\hat \beta_n$, we set 
\begin{equation}\label{eq:test3}
T_n^* \equiv \sup_{s\in \hat {\mathcal V}_n} \left\{\sqrt n\langle A^\dagger s, A^\dagger(\hat \beta_n^* - \hat \beta_n)\rangle + \lambda_n \sqrt n \langle A^\dagger s, A^\dagger \hat \beta_n^r\rangle \right\}
\end{equation}
as our bootstrap statistic.
Here, $\lambda_n \in [0,1]$ represents a penalty satisfying $\lambda_n =o(1)$, and whose role is to ensure that directions $s$ that do not impact the finite-sample distribution of our test statistic do not impact the bootstrap statistic either.
We again highlight that, just like $T_n$, the bootstrap statistic $T_n^*$ can also be computed through linear programming.

The critical value for a level $\alpha$ test is then simply the $1-\alpha$ quantile of the distribution of $T_n^*$ conditionally on the sample $\{V_i\}_{i=1}^n$.
Formally, the critical value for our test is given by
\begin{equation}\label{eq:test4}
\hat c_n(1-\alpha) \equiv \inf\left\{ u\in \mathbf R : \hat P_n\left\{ T_n^* \leq u\right\} \geq 1-\alpha \right\}.    
\end{equation}
As usual, the critical value $\hat c_n(1-\alpha)$ can be computed through simulation by drawing multiple samples $\{V_i^*\}_{i=1}^n$ i.i.d.\ from $\hat P_n$ and computing the $1-\alpha$ quantile of the corresponding sample of bootstrap statistics.
Given the critical value, we next formally introduce our test, which rejects whenever $T_n$ exceeds $\hat c_n(1-\alpha)$:
\begin{equation} \label{eq:test}
\phi_n(\theta_0) \equiv I \{T_n > \hat c_n(1 - \alpha)\}~.    
\end{equation}

We establish the asymptotic validity of our test under the following main assumption.

\begin{assumption}\label{ass:test}
(i) $V_i = (Y_i,D_i,Z_i), i = 1,\ldots, n$ is an i.i.d.\ sample with $V_i \sim P\in \mathbf P$;
(ii) $(\hat \beta_n - \beta(P)) \in \text{range}(A)$ with probability tending to one uniformly in $P\in \mathbf P$;
(iii) For some constants $0 < c_1 < c_2$,
$$c_1 \leq \inf_{P\in \mathbf P} \min_{(y,d,z)\in \mathcal M} |\mathcal M| P_{ydz} \leq \sup_{P\in \mathbf P} \max_{(y,d,z)\in \mathcal M} |\mathcal M| P_{ydz} \leq c_2.$$ 
\end{assumption}

Assumption \ref{ass:test}(i) simply formalizes our requirement that the sample be independent and identically distributed.
In Assumption \ref{ass:test}(ii) we impose the additional condition that $(\hat \beta_n - \beta(P))$ belong to the range of the matrix $A$ with probability tending to one.
This requirement is satisfied in our examples, where $(\hat \beta_n - \beta(P))$ belongs to the range of $A$ whenever $\hat P_z > 0$ for all $z\in \mathcal Z$ -- i.e., whenever $\hat \beta_n$ is well defined. 
We note that Assumption \ref{ass:test}(ii) is not imposed in \cite{fang2023inference}, but we deploy it here due to its wide applicability and its importance in obtaining simple primitive conditions for establishing the validity of our test.
In applications in which Assumption \ref{ass:test}(ii) is violated, it is still possible to conduct inference by using the results in \cite{fang2023inference}.
Finally, Assumption \ref{ass:test}(iii) represents our main regularity condition on the set $\mathbf P$.
Assumption \ref{ass:test}(iii) requires that the probability of each support point $(y,d,z)$ be proportional to each other.
We note that this restriction implies that no conditional probability is dominant, in the sense that $P_{yd|z}$ is bounded away from one uniformly in the $(y,d,z)$ and $P\in \mathbf P$.

We next establish the validity of our test under Assumption \ref{ass:test} and an additional regularity condition that we discuss after the theorem.

\begin{theorem} \label{thm:inference}
Suppose Assumption \ref{ass:test} holds, $\alpha \in (0, 0.5)$, $\log^3(n)|\mathcal M|/n = o(1)$, and $0 \leq \lambda_n \leq 1$ satisfies $\lambda_n\sqrt{\log(|\mathcal M|)} = o(1)$. Further suppose Assumption \ref{ass:tuning} in the appendix holds. Then, the test in \eqref{eq:test} satisfies $$\limsup_{n \rightarrow \infty} \sup_{P \in \mathbf P_0} E_P[\phi_n] \leq \alpha.$$
\end{theorem}

Theorem \ref{thm:inference} establishes the validity of our test under weak conditions on the number of support points for $(Y,D,Z)$.
The main rate requirement imposed by Theorem \ref{thm:inference} is that $|\mathcal M|/n$ tend to zero (up to logs).
We view this rate condition as close to minimal, in the sense that it is necessary for the empirical distribution $\hat P_n$ to be consistent for $P$.
Our rate restriction is in addition a significant improvement over \cite{fang2023inference}, whose best attainable rate is that $|\mathcal M|^2/n $ tend to zero (up to logs).
We improve on their results by employing the specific structure of our problem to derive better coupling rates for our test and bootstrap statistics.
In particular, by relying on results in  \cite{massart1989strong} we are able to couple our test statistic to a Gaussian counterpart at a rate $r_n$ given by
$$r_n \equiv \frac{\log^{3/2}(n)\sqrt{|\mathcal M|}}{\sqrt n}.$$
The bootstrap statistic is in turn shown to be coupled to a Gaussian distribution at a rate of $r_n^{1/2}$; see Lemma \ref{lm:coup} in the Appendix for additional details. 
We conjecture that the bootstrap statistic can in fact be coupled at a rate $r_n$ as well. 
However, we do not pursue this refinement because it does not impact the conditions on how $|\mathcal M|$ relates to $n$, which is our primary concern. 

Showing weak convergence of a test statistic and its bootstrap counterpart to a common distribution is in general not sufficient for establishing consistency of the corresponding critical values.
In high-dimensional settings, consistency of the critical values requires a condition termed \emph{anti-concentration} by \cite{chernozhukov2015comparison}.
Intuitively, anti-concentration ensures that the c.d.f.\ of the bootstrap statistic is suitably well behaved at the quantile of interest.
Assumption \ref{ass:tuning}, stated in the Appendix for ease of exposition, is a sufficient condition for anti-concentration to hold in our setting.
We note that Assumption \ref{ass:tuning} is automatically satisfied in an asymptotic framework in which $|\mathcal M|$ is fixed with the sample size.
Alternatively, Assumption \ref{ass:tuning} can be dispensed with by modifying our critical value to include a ``correction factor'' introduced by \cite{andrews2013inference}.
We opt to introduce Assumption \ref{ass:tuning} instead, however, to avoid introducing an additional tuning parameter.

Finally, we emphasize that while we have focused our discussion on hypothesis testing, our results readily deliver confidence regions as well.
In particular, through test inversion it is straightforward to show that 
\begin{equation} \label{eq:conf}
C_n = \{\theta_0 \in \mathbf R: \phi_n(\theta_0) = 0\}
\end{equation}
is a valid confidence region, in the sense that it covers the parameter of interest with asymptotic probability of $1-\alpha$  uniformly in $P\in \mathbf P$.

\begin{remark} \label{remark:weaker}
Assumption \ref{ass:generalizedstrata} imposes that response types take on certain values with probability zero.  Our framework can easily be modified, however, to accommodate the restriction that response types take on certain values with at most (or at least) some pre-specified probability.  
This modification is useful for conducting sensitivity analysis, as in \cite{masten2020inference} and \cite{kline2013sensitivity}.  For instance, in Example \ref{eg:encouragement}, one may relax restriction \eqref{eq:nodefier} so that it instead specifies that the left-hand side is at most some pre-specified amount $\epsilon > 0$, and explore how inferences on $\theta(Q)$ change as one varies $\epsilon$.   In this way, we can, for example, generalize the analysis of  \cite{noack2021sensitivity} to multi-valued instruments and treatments.
\end{remark}

\begin{remark} \label{remark:bootstrap}
In applications with closed-form expressions of the identified set, $\Theta_0(P,\mathbf Q) = [L(P), U(P)]$, the plug-in estimator $[L(\hat P_n),U(\hat P_n)]$ provides a natural approach for estimating the identified set.
However, the plug-in bootstrap in general does not provide correct coverage.
Formally, denote by $\ell(\alpha/2, P)$ the $\alpha/2$ quantile of the distribution of $L(\hat P_n)$ under $P$ and by $u(1 - \alpha/2,P)$ the $1 - \alpha/2$ quantile of the distribution of $U(\hat P_n)$ under $P$. Consider the confidence region given by the bootstrap analogue $[\ell(\alpha/2, \hat P_n), u(1 - \alpha/2,\hat P_n)]$. Because the functionals $L(P)$ and $U(P)$ generally involve the maximum and minimum over functionals of $P$, they are only directionally differentiable and not fully differentiable when the maximum and minimum are attained at multiple functionals. Hence, it follows from results in \cite{fang2019inference} that the bootstrap quantiles $\ell(\alpha/2, \hat P_n)$ and $u(1 - \alpha/2, \hat P_n)$ are not consistent for the population quantiles $\ell(\alpha/2,P)$ and $u(1-\alpha/2,P)$ that justify the validity of the proposed confidence region. 
\end{remark}

\section{Inference with General-valued Outcomes}\label{sec:contY}
Section \ref{sec:inference} proposed an inference framework with the support $\mathcal Y$ of $Y$  restricted to be finite.  We now extend that framework by relaxing the restriction that $\mathcal Y$  be finite, in particular allowing for continuous outcomes.
The treatment $D$ and instrument $Z$ remain discrete with supports $\mathcal D$ and $\mathcal Z$ satisfying $2\leq |\mathcal D|, |\mathcal Z| < \infty$. 

Our approach essentially partitions the outcome space $\mathcal Y$ into a finite number of subsets $\{B_{\ell,L}\}_{\ell=1}^L$  and reduces the problem into a discrete one. 
The analysis mimics the discrete case studied in Theorem \ref{thm:equiv}, but replaces the parameter $\beta(P)$ by a vector $\beta_L(P)$ that is given by
\begin{equation} \label{eq:beta_cont}
\beta_L(P) \equiv ((P\{Y\in B_{\ell,L}, D=d|Z=z\} : (\ell,d,z) \in \mathcal L\times \mathcal D\times \mathcal Z), 1, 0, 0)'~,
\end{equation}
where $\mathcal L \equiv \{1, \ldots, L\}$.
In what follows, it is helpful to note that because $R_o \in \mathcal Y^{|\mathcal D|}$, any partition $\{B_{\ell,L}\}_{\ell = 1}^L$ induces a partition for the support of $R_o$ as well.
Specifically, writing any $c_o\in \mathcal L^{|\mathcal D|}$ as $c_o \equiv (c_o(0),\ldots, c_o(|\mathcal D|-1))$, and setting $B_{c_o} \equiv B_{c_o(0),L}\times \ldots \times B_{c_o(|\mathcal D|-1)}$  we have that $\{B_{c_o}\}_{c_o\in \mathcal L^{|\mathcal D|}}$ forms a partition for $\mathcal Y^{|\mathcal D|}$.
Our next assumption uses this notation in introducing the main regularity conditions we require:

\begin{assumption}\label{ass:cont}
(i) $\mathcal Y$ is compact; 
(ii) $g$ is continuous; 
(iii) $\{B_{\ell,L}\}_{\ell=1}^L$ is a partition of $\mathcal Y$ satisfying $\max_{1\leq \ell\leq L} \text{diam}\{B_{\ell,L}\} = o(1)$ as $L\to \infty$; 
(iv) $\mathcal R_o^\prime(r_t) \equiv \{r_o\in \mathcal Y^{|\mathcal D|} : (r_o,r_t)\in \mathcal R^\prime\}$ is closed and $B_{c_o}\cap \mathcal R_o^\prime(r_t)$ is connected for all $c_o\in \mathcal L^{|\mathcal D|}$ and $r_t\in \mathcal D^{|\mathcal Z|}$.
\end{assumption}

Assumption \ref{ass:cont}(i) imposes that $\mathcal Y$ be closed and bounded, which ensures that the partition $\{B_{\ell,L}\}_{\ell=1}^L$ can be chosen so that each set $B_{\ell,L}$ is bounded.
In Assumption \ref{ass:cont}(ii), we further restrict attention to the case in which $g$ is continuous. 
As we discuss in Remark \ref{remark:piecewise} below, some forms of discontinuity can be accommodated, though they can require specific choices of partitions $\{B_{\ell,L}\}_{\ell=1}^L$ that satisfy restrictions beyond those imposed by Assumption \ref{ass:cont}.
Finally, Assumptions \ref{ass:cont}(iii)-(iv) impose the main requirements on the partition $\{B_{\ell,L}\}_{\ell=1}^L$, which include that it become finer as $L$ becomes large.
In applications in which $\mathcal R$ and $\mathcal R^\prime$ leave the outcome response type $R_o$ unrestricted, $\mathcal R^\prime_o(r_t) = \mathcal Y^{|\mathcal D|}$ and Assumption \ref{ass:cont}(iv) is satisfied whenever the sets $\{B_{\ell,L}\}_{\ell=1}^L$ are chosen to be convex.

The next result may be seen as a generalization of Theorem \ref{thm:equiv}.

\begin{theorem} \label{thm:equiv_cont}
Let $\mathbf Q$ be the set of all distributions of $(R,Z)$ satisfying Assumptions \ref{as:exog} and \ref{ass:generalizedstrata} and $\mathbf P$ be the set of all distributions of $(Y,D,Z)$ satisfying $P \{Z = z\} > 0$ for every $z \in \mathcal Z$. Further suppose $g$ in the definition of $\theta(Q)$ in \eqref{eq:parameter} is bounded. Then, for $\beta_L(P)$ defined in \eqref{eq:beta_cont} and a matrix $A_L$ defined in the beginning of Appendix \ref{app:equiv_cont} that depends only on $\mathcal R$ in Assumption \ref{ass:generalizedstrata}, $\theta_0$ in \eqref{eq:null}, and $\mathcal R'$ in the definition of $\theta(Q)$ in \eqref{eq:parameter}, it follows that
\begin{equation} \label{eq:equiv_cont}
\mathbf P_0 \subseteq \{P \in \mathbf P : A_Lx = \beta_L(P) \text{ for some } x \geq 0 \}\equiv \mathbf P_{0,L}~.
\end{equation} 
If in addition Assumption \ref{ass:cont} holds and $\mathbf P_0$ is not empty, then $$\limsup_{L \to \infty} \mathbf P_{0,L} \equiv \bigcap_{M=1}^\infty \bigcup_{L\geq M} \mathbf P_{0,L} \subseteq {\rm cl}_{\mathbf P}(\mathbf P_0)~,$$ where ${\rm cl}_{\mathbf P}(\mathbf P_0)$ denotes the closure of $\mathbf P_0$ in $\mathbf P$ under weak convergence.
\end{theorem}

The first part of Theorem \ref{thm:equiv_cont} shows that $\mathbf P_0 \subseteq \mathbf P_{0,L}$ for $\mathbf P_{0,L}$ as defined in \eqref{eq:equiv_cont}.
Therefore, any test of the null hypothesis that $P\in \mathbf P_{0,L}$ is also a valid test for the null hypothesis that $P\in \mathbf P_0$.
Importantly, this conclusion does not depend on any of the regularity conditions imposed in Assumption \ref{ass:cont}.
The second part of Theorem \ref{thm:equiv_cont} shows, under Assumption \ref{ass:cont}, that $\limsup_{L \to \infty} \mathbf P_{0,L}$ is included in the closure of $\mathbf P_0$. 
The second part of Theorem \ref{thm:equiv_cont} therefore shows that, as our discretization becomes finer, we are able to detect whether $P$ belongs to $\mathbf P_0$ or not (up to closure).

The main implication of Theorem \ref{thm:equiv_cont} is that, with a continuous outcome, we still can test the null hypothesis of interest by testing whether $P$ is such that $A_Lx = \beta_L(P)$ for some $x \geq 0$.
The latter null hypothesis can be tested by using an analogous approach to that employed in \eqref{eq:test} for the finite support case. 
The asymptotic validity of the resulting test holds under similar assumptions to those required in Theorem \ref{thm:inference}, but applied with ${\mathcal M}_L \equiv \mathcal L \times \mathcal D \times \mathcal Z$ instead of $\mathcal M$. 
This requires, among other assumptions, that $| {\mathcal M}_L|/n$ tend to zero (up to logs). 
In particular, if $\mathcal D$ and $\mathcal Z$ are fixed, then the condition that $|\mathcal M_L|/n$ tend to zero is equivalent to requiring that $L/n$ tending to zero.
The partition $\{B_{\ell,L}\}_{\ell=1}^L$ is therefore allowed to quickly become finer with the sample size which, in light of Theorem \ref{thm:equiv_cont}, can be desirable for power considerations.

\begin{remark} \label{remark:piecewise}
An important example of a parameter $g$ that corresponds to a discontinuous $g$, and hence violates Assumption \ref{ass:cont}(i), is the probability of an event conditional on generalized principal strata.
However, in this case, $g$ is piecewise constant and the conclusion of the second part of Theorem \ref{thm:equiv_cont} continues to hold if we redefine $A_L$ suitably. To see why, note by an inspection of the proof of Theorem \ref{thm:equiv_cont} that a solution to the linear system in \eqref{eq:equiv_cont} is a vector of probabilities the cells determined by the intersections of $B_{c_o} \times \{r_t\}$ and $\mathcal R'$. If $\mathcal Y^{|\mathcal D|} = \bigcup_{1 \leq j \leq K} \mathcal S_j$ and $g$ is constant on $\mathcal S_j$ for each $j$, then we redefine $A_L$ so that the solution to the linear system is instead a vector of probabilities on each $B_{c_o} \times \{r_t\}$ and $\mathcal R'$ with $\mathcal S_j$ for $1 \leq j \leq K$.
\end{remark}

By arguing as in Section \ref{sec:inference}, Theorem \ref{thm:equiv_cont} immediately permits us to test whether the model is correctly specified in the sense that $\mathbf Q_0(P, \mathbf Q) \neq \emptyset$.  Recall that by setting $g \equiv 1$, $\theta_0 = 1$ and $\mathcal R' = \mathcal R$, we have that $\mathbf P_0 = \{P \in \mathbf P : \mathbf Q_0(P,\mathbf Q) \neq \emptyset \}$.  In this case, the equalities in the last two rows of $A_Lx = \beta_L(P)$ in \eqref{eq:equiv_cont} are always true, so they can be omitted.  In order to state the result, denote by $\tilde A_L$ and $\tilde \beta_L(P)$ all but the last two rows of $A_L$ and $\beta_L(P)$, respectively, in Theorem \ref{thm:equiv_cont}. 

\begin{corollary} \label{cor:validity_cont}
Let $\mathbf Q$ be the set of all distributions of $(R,Z)$ satisfying Assumptions \ref{as:exog} and \ref{ass:generalizedstrata} and $\mathbf P$ be the set of all distributions of $(Y,D,Z)$ satisfying $P \{Z = z\} > 0$ for every $z \in \mathcal Z$.  For $g \equiv 1$, $\theta_0 = 1$ and $\mathcal R' = \mathcal R$, $$\mathbf P_0 = \{P \in \mathbf P : \mathbf Q_0(P,\mathbf Q) \neq \emptyset \} \subseteq \{P \in \mathbf P : \tilde A_L x = \tilde \beta_L(P)\} \equiv \tilde {\mathbf P}_{0,L}~.$$  If in addition Assumption \ref{ass:cont} holds and $\mathbf P_0$ is not empty, then $\limsup_{L \to \infty} \tilde {\mathbf P}_{0,L} \equiv \\\bigcap_{M=1}^\infty \bigcup_{L\geq M} \tilde {\mathbf P}_{0,L}$ satisfies $\limsup_{L\to \infty} \tilde {\mathbf P}_{0,L} \subseteq {\rm cl}_{\mathbf P}(\mathbf P_0)$ where ${\rm cl}_{\mathbf P}(\mathbf P_0)$ denotes the closure of $\mathbf P_0$ in $\mathbf P$ under weak convergence.
\end{corollary}


\begin{remark} \label{remark:analytical-cont}
As in Remark \ref{remark:analytical}, when $Q \{R \in \mathcal R'\}$ is known or identified for each $L \geq 1$, it is possible to use the characterization of $\mathbf P_0$ in Theorem \ref{thm:equiv_cont} to derive \emph{closed-form expressions} $L_L(P)$ and $U_L(P)$ such that $\Theta_0(P,\mathbf Q) \subseteq [L_L(P), U_L(P)]$ for all $P \in \mathbf P$ for which $\mathbf Q_0(P,\mathbf Q) \neq \emptyset$. 
The inclusion $\Theta_0(P,\mathbf Q) \subseteq [L_L(P), U_L(P)]$ is generally strict for each fixed $L$ because of the loss of information from discretization. Therefore, as opposed to settings in which the support of $Y$ is discrete, for a fixed $L$ we typically obtain closed-form expressions for an outer set of the identified set instead of the identified set itself. The second part of Theorem \ref{thm:equiv_cont} implies, however, that under appropriate assumptions $\limsup_{L \to \infty} [L_L(P), U_L(P)] = \Theta_0(P,\mathbf Q)$. Similarly, as in Remark \ref{remark:testability}, for each $L \geq 1$, it is possible to use Corollary \ref{cor:validity_cont} to obtain a set of analytical inequalities $a_L(P) \leq 0$, such that $P \in \mathbf P$ satisfies $\mathbf Q_0(P, \mathbf Q) \neq \emptyset$ if and only if $a_L(P) \leq 0$ for every $L \geq 1$.
\end{remark}

\section{Simulations} \label{sec:sims}
\newcommand{\tate}{\theta_{\mathrm{ATE}}}
\newcommand{\ate}{E_Q[Y(2) - Y(1) \mid R \in \mathcal R']}
\newcommand{\tprob}{\theta_{\mathrm{Prob}}}
\newcommand{\prob}{Q\{Y(2) \geq Y(1) \mid R \in \mathcal R'\}}

In this section we present simulation results that illustrate the finite-sample performance of our proposed inference methods. Throughout this section, $\mathcal D = \mathcal Z = \{0, 1, 2\}$. 
We set $\mathcal Y$ to be discrete in Section \ref{sec:sims_disc} and an interval in Section \ref{sec:sims_cont}. The parameters of interest are
\[
\tate(Q) = \ate \quad \mathrm{and} \quad \tprob(Q) = \prob~,
\]
where $\mathcal R' = \{(y(0),y(1),y(2),d(0),d(1),d(2)): (d(0),d(1),d(2)) = (0,1,2)\}$. We consider the following two specifications of $\mathcal R$, corresponding to two different models:
\begin{enumerate}[]
	\item {\it One-sided noncompliance}: $\mathcal R$ is as specified in Example \ref{eg:onesided} and denoted by $\mathcal R_{\mathrm{1s}}$, with the corresponding model $\mathbf Q_{\mathrm{1s}} = \{Q: Q \text{ satisfies Assumptions \ref{as:exog} and \ref{ass:generalizedstrata} with } \mathcal R = \mathcal R_{\mathrm{1s}}\}$;
	\item {\it Encouragement design}: $\mathcal R$ is as specified in Example \ref{eg:encouragement} and denoted by $\mathcal R_{\mathrm{enc}}$, with the corresponding model $\mathbf Q_{\mathrm{enc}} = \{Q: Q \text{ satisfies Assumptions \ref{as:exog} and \ref{ass:generalizedstrata} with } \mathcal R = \mathcal R_{\mathrm{enc}}\}$.
\end{enumerate}

\subsection{Discrete Outcomes}\label{sec:sims_disc}
We first present simulation results for a discrete outcome  with $\mathcal Y = \{-1, 0, 1\}$. From each of $\mathbf Q_{\mathrm{1s}}$ and $\mathbf Q_{\mathrm{enc}}$, we pick a $Q$ distribution and fix them. 
The two selected $Q$'s are described in Appendices \ref{app:Q_1s} and \ref{app:Q_encour}, respectively. For each $Q$, we set $Q\{Z=z\} = 1/3$ for each $z \in \mathcal Z$ and define $P = Q T^{-1}$. The simulation goes as follows. For each $Q$, we repeatedly draw an i.i.d.\ sample $\{V_i\}_{i = 1}^n$ of size $n = 3000$ or $12000$ from $P$, where $V_i = (Y_i, D_i, Z_i)$. For each sample, we obtain the confidence interval for $\tate$ and $\tprob$ by inverting the test of the feasibility of the linear system in \eqref{eq:equiv} at the 5\% level.

The results are reported in Tables \ref{tab:FSST_MC_1sided_fixQ} and \ref{tab:FSST_MC_encour_fixQ} for $\mathbf Q_{\mathrm{1s}}$ and $\mathbf Q_{\mathrm{enc}}$ respectively. For each parameter and each model, we report the length and the coverage rate of the confidence interval in the last two columns, averaged over 3000 replications. We also report the length of the identified set $|\Theta_0(P, \mathbf{Q}_{\rm 1s})|$ or $|\Theta_0(P, \mathbf{Q}_{\rm enc})|$. We note that the confidence interval covers the parameter with probability one in all cases.  This phenomenon is to be expected because in each case the true value of the parameter lies strictly in the interior of the identified set.  For example, $\Theta_0(P,\mathbf Q_{1s}) = [0.1482, 1.0295]$ and $\theta_{\rm ATE}(Q) = 0.4919$.  In this example, for $n = 3000$, the coverage probabilities of the lower and upper endpoints of the identified set are 0.947 and 0.956, respectively.

\begin{table}[ht]
\large
\centering
\begin{tabular}{lcccc}
\toprule
 & $|\Theta_0(P, \mathbf{Q}_{\rm 1s})|$ & $n$ & $95\%$ CI length & Coverage \\ 
  \hline
\multirow{2}{*}{$\theta_{\rm ATE}$} & \multirow{2}{*}{0.8813} & 3000 & 1.034 & 1.000 \\ 
 &   & 12000 & 0.961 & 1.000 \\ 
 \midrule
\multirow{2}{*}{$\theta_{\rm Prob}$} & \multirow{2}{*}{0.3805} & 3000 & 0.413 & 1.000 \\ 
& & 12000 & 0.397 & 1.000 \\ 
\bottomrule
\end{tabular}
\caption{Length of the identified set $\Theta_0(P,\mathbf Q_{\mathrm{1s}})$, average length and average coverage rate of the $95\%$ confidence intervals for $\tate$ and $\tprob$ using our inference method in Section \ref{sec:inference} under the $Q \in \mathbf Q_{\mathrm{1s}}$ distribution specified in Appendix~\ref{app:Q_1s}.}
\label{tab:FSST_MC_1sided_fixQ}
\end{table}

\begin{table}[ht]
\large
\centering
\begin{tabular}{lcccc}
\toprule
 & $|\Theta_0(P, \mathbf{Q}_{\rm enc})|$ & $n$ & $95\%$ CI length & Coverage \\ 
  \hline
\multirow{2}{*}{$\theta_{\rm ATE}$} & \multirow{2}{*}{2.8355} & 3000 & 3.340 & 1.000 \\ 
 &   & 12000 & 3.085 & 1.000 \\ 
 \midrule
\multirow{2}{*}{$\theta_{\rm Prob}$} & \multirow{2}{*}{0.8935} & 3000 & 0.984 & 1.000 \\ 
& & 12000 & 0.969 & 1.000 \\ 
\bottomrule
\end{tabular}
\caption{Length of the identified set $\Theta_0(P,\mathbf Q_{\mathrm{1s}})$, average length and average coverage rate of the $95\%$ confidence intervals for $\tate$ and $\tprob$ using our inference method in Section \ref{sec:inference} under the $Q \in \mathbf Q_{\mathrm{1s}}$ distribution specified in Appendix~\ref{app:Q_encour}.}
\label{tab:FSST_MC_encour_fixQ}
\end{table}

\subsection{General-valued Outcomes} \label{sec:sims_cont}
We now present simulation results for a continuous outcome. In this subsection, $\mathcal D = \mathcal Z = \{0,1,2\}$ and $\mathcal Y = [-5,5]$. We focus on  $\mathcal R = \mathcal R_{\mathrm{1s}}$ and $\tate(Q) = \ate$. As in Section \ref{sec:sims_disc}, $Q\{Z = z\} = 1/3$ for each $z \in \mathcal Z$. A distribution $Q \in \mathbf Q_{\mathrm{1s}}$ is generated as follows. With a slight abuse of notation, let $\tilde Q$ on $\{-1,0,1\}^3 \times \mathcal D^3$ be the discrete distribution in Appendix \ref{app:Q_1s}. We then define $Q$ as a mixture of truncated normal distributions where the mixture weights are given by $\tilde Q$. In particular, let $\Phi_{\mathcal Y}(\mu(d))$ denote the truncation of normal distribution $N(\mu(d), 1)$ on $\mathcal Y$. Then,
\begin{equation} \label{eq:mixture}
Q = \bigoplus_{\substack{(\mu(0), \mu(1), \mu(2), d(0), d(1), d(2)) \\ \in \{-1,0,1\}^3 \times \mathcal D^3}} \Big ( \tilde Q(\mu(0), \mu(1), \mu(2), d(0), d(1), d(2)) \times \bigotimes_{d \in \mathcal D} \Phi_{\mathcal Y}((\mu(d)) \bigotimes_{z \in \mathcal Z} I\{d(z)\} \Big )~.
\end{equation}
In words, for each mixture component, the value of $D(z)$ is fixed at $d(z)$ for $z \in \mathcal Z$ and independently across $d \in \mathcal D$, $Y(d)$ has a truncated normal distribution. Finally let $P = QT^{-1}$. As in Section \ref{sec:sims_disc}, we repeatedly draw an i.i.d.\ sample $\{V_i\}_{i = 1}^n$ of size $n = 3000$ or $12000$ from $P$, where $V_i = (Y_i, D_i, Z_i)$, and for each sample obtain confidence intervals by inverting the test of the feasibility of the linear system in \eqref{eq:equiv_cont} at the 5\% level. We consider the following three partitions of $\mathcal Y$:
\begin{enumerate}[]
    \item {\tt bin1}: $\mathcal Y = [-5, 0) \cup [0, 5]$;
    \item {\tt bin2}: $\mathcal Y = [-5,-3) \cup [-3, -1) \cup \cdots \cup [3,5]$;
    \item {\tt bin3}: $\mathcal Y = [-5, -4) \cup [-4,-3) \cup \cdots \cup [4,5]$.
\end{enumerate}

\begin{table}[ht!]
\large
\centering
\begin{tabular}{llcccc}
\toprule
Partition & $\dim(A)$ & $|\Theta_{0,L}(P,\mathbf Q_{\mathrm{1s}})|$ & $n$ & $95\%$ CI length & Coverage \\
\hline
\multirow{2}{*}{{\tt bin1} ($L=2$) } & \multirow{2}{*}{$(10+3)\times (32+2)$} & \multirow{2}{*}{9.988} & 3000  & 9.995  & 1.000 \\
& &  & 12000  & 9.994  & 1.000 \\
\hline
\multirow{2}{*}{{\tt bin2} ($L=5$)} & \multirow{2}{*}{$(25+3)\times (500+2)$}  & \multirow{2}{*}{5.772} & 3000  & 6.099  & 1.000 \\
& &  & 12000  & 5.921  & 1.000 \\
\hline
\multirow{2}{*}{{\tt bin3} ($L=10$)} & \multirow{2}{*}{$(50+3)\times (4000+2)$} & \multirow{2}{*}{3.656} & 3000  & 4.077  & 1.000 \\
& &  & 12000  & 3.851  & 1.000 \\
\bottomrule
\end{tabular}
\caption{Dimensionality of the test, 
length of  $ \Theta_{0,L}(P,\mathbf Q_{\mathrm{1s}})$,  
and average length and coverage rates of the $95\%$ confidence intervals using our inference method in Section \ref{sec:contY}, under given partitions with $L=2, 5$ or $10$.}
\label{tab:FSST_MC_cont_Ibound}
\end{table}

The results are reported in Table \ref{tab:FSST_MC_cont_Ibound}. For each level of discretization, we report the length and the coverage rate of the confidence intervals in the last two columns, averaged over 3000 replications. To illustrate the computational costs, we report in the second column of Table \ref{tab:FSST_MC_cont_Ibound} the number of rows and columns of the $A$ matrix in \eqref{eq:equiv_cont} for each level of discretization, which correspond to the number of constraints and the number of decision variables in each of the linear systems being tested. Next, recall Theorem \ref{thm:equiv_cont} shows that for each level of discretization $L$, $\mathbf P_{0, L}$ contains $\mathbf P_0$. Let $\Theta_{0,L}(P,\mathbf Q_{\mathrm{1s}})$ be the set of $\theta_0 \in \mathbf R$ such that $\mathbf P_{0, L}$ is nonempty. Theorem \ref{thm:equiv_cont} implies that $\Theta_{0,L}(P,\mathbf Q_{\mathrm{1s}})$ is a superset of the identified set $\Theta_0(P,\mathbf Q_{\mathrm{1s}})$. In the third column of Table \ref{tab:FSST_MC_cont_Ibound}, we display the length of $\Theta_{0,L}(P,\mathbf Q_{\mathrm{1s}})$ for each $L$. A comparison of the length of our confidence intervals to the length of $\Theta_{0,L}(P,\mathbf Q_{\mathrm{1s}})$ reveals how much of the length of the confidence interval is attributed to sampling uncertainty. 

We note the following findings from Table \ref{tab:FSST_MC_cont_Ibound}. First, as in Tables \ref{tab:FSST_MC_1sided_fixQ} and \ref{tab:FSST_MC_encour_fixQ}, the confidence intervals cover the parameter with probability one for each level of discretization we consider.  This phenomenon is again to be expected because in each case the true value of the parameter lies strictly in the interior of $\Theta_{0,L}(P,\mathbf Q_{\mathrm{1s}})$; coverage probabilities of the lower and upper endpoints of $\Theta_{0,L}(P, \mathbf Q_{1s})$ are approximately equal to the nominal level.  Next, as the number of intervals in the partition increases, the dimensionality of the problem grows exponentially. Specifically, the number of constraints and decision variables scale at the rates $O(|\mathcal L||\mathcal D||\mathcal Z|)$ and $O(|\mathcal L|^{|\mathcal D|} |\mathcal R_t|)$, respectively. For a fixed sample size, a finer partition yields a shorter confidence interval, although we note it comes at a higher computational cost. 


\clearpage

\appendix

\section{Proof of Main Results} \label{app:proofs}
To reduce notational clutter for the appendix, we suppress the dependence of $\hat P_n$ on $n$ and write $\hat P$ instead.

\subsection{Proof of Theorem \ref{thm:equiv}} \label{app:equiv}

We begin by formally defining the matrix $A$. Recall that $\mathcal M \equiv \mathcal Y \times \mathcal D \times \mathcal Z$ and let $\mathcal N \equiv \mathcal Y^{|\mathcal D|} \times \mathcal D^{|\mathcal Z|}$.  We now define a matrix $A$ with $|\mathcal M| + 2$ rows and $|\mathcal N|$ columns.  In order to describe the matrix $A$, index the first $|\mathcal M|$ rows of $A$ by $(y,d,z) \in \mathcal M$ and the columns of $A$ by $r = ((y(d) : d \in \mathcal D), (d(z) : z \in \mathcal Z)) \in \mathcal N$. The $(y,d,z) \times r$ element of $A$ is given by $I \{y(d) = y, d(z) = d\}$, and the $(|\mathcal M| + 1) \times r$ element of $A$ is given by $I \{ r \in \mathcal R\}$. Finally, the $(|\mathcal M| + 2) \times r$ element of $A$ is given by $(g(r) - \theta_0)I \{r \in \mathcal R'\}$. 

Given the introduced definitions, we next prove Theorem \ref{thm:equiv}.

\begin{proof}[\sc Proof of Theorem \ref{thm:equiv}]
Suppose $P \in \mathbf P_0$. Then, $\theta_0 \in \Theta_0(P, \mathbf Q)$, so there is $Q \in \mathbf Q_0(P, \mathbf Q)$ such that $\theta_0 = \theta(Q)$ and $Q\{R\in \mathcal R'\} > 0$. Recall $Q \in \mathbf Q_0(P, \mathbf Q)$ if and only if $Q \in \mathbf Q$ and $P = Q T^{-1}$. Since $Q \in \mathbf Q$, it satisfies both Assumptions \ref{as:exog} and \ref{ass:generalizedstrata}. Because $Q$ satisfies Assumption \ref{as:exog} and $P = Q T^{-1}$, we have
\begin{equation} \label{eq:p-q}
P_{yd|z} \equiv P \{Y = y, D = d | Z = z\}= Q \{Y(d) = y, D(z) = d | Z = z\} = Q \{Y(d) = y, D(z) = d\}~.
\end{equation}
Let $q$ be the column vector with $|\mathcal N|$ elements indexed by $r$, where the $r$ element is $q(r) = Q \{R = r\}$. It follows from \eqref{eq:p-q} and the definition of $A$ that the first $|\mathcal M|$ rows of $A q$ equal the first $|\mathcal M|$ rows of $\beta(P)$. 
Next, note that because $Q\in \mathbf Q_0(P,\mathbf Q)$ implies $Q$ satisfies Assumption \ref{ass:generalizedstrata}, it follows that
\[ \sum_{r \in \mathcal R} q(r) = 1~, \]
and therefore the $|\mathcal M| + 1$ row of $A q$ equals the $|\mathcal M|+1$ row of $\beta(P)$ by definition of $A$ and $\beta(P)$. 
Finally, since $\theta_0 = \theta(Q)$ by hypothesis, the definition of the conditional expectation yields that
\begin{equation} \label{eq:ratio}
\theta_0 = \frac{E_Q[g(R) I \{R \in \mathcal R'\}]}{Q \{R \in \mathcal R'\}}~,   
\end{equation}
where recall $Q$ satisfies $Q\{R\in \mathcal R'\} > 0$. Rearranging, we obtain $E_Q [(g(R) - \theta_0) I \{R \in \mathcal R'\}] = 0$, so that
\[ \sum_{r \in \mathcal R'} (g(r) - \theta_0) q(r) = 0~, \]
and it follows from the definition of $A$ that the $|\mathcal M| + 2$ row of $A q$ equals the $|\mathcal M|+2$ row of $\beta(P)$ holds. 
Hence, we have shown that $\mathbf P_0 \subseteq \tilde{\mathbf P}_0$ for $\tilde{\mathbf P}_0 \equiv \{P \in \mathbf P : Ax = \beta(P) \text{ for some } x \geq 0 \}$. 

To conclude, it only remains to show that $\tilde {\mathbf P}_0 \subseteq {\rm cl}_{\mathbf P}(\mathbf P_0) = $ whenever $\mathbf P_0 \neq \emptyset$.
To this end, fix any $\tilde P \in \tilde{\mathbf P}_0$.
Since $Aq = \beta(\tilde P)$ for some $q \geq 0$, the definition of $A$ and $\beta(\tilde P)$ imply there exists a $\tilde Q \in \mathbf Q$ such that $\tilde P = \tilde Q T^{-1}$ and $E_{\tilde Q}[g(R) I \{R \in \mathcal R'\}] = \theta_0 \tilde Q \{R \in \mathcal R'\}$. 
By assumption, $\mathbf P_0$ is nonempty and therefore Lemma \ref{lem:P0P} implies $\mathbf P_0(\tilde P) \equiv \{P\in \mathbf P_0 : P_Z=\tilde P_Z\}$ is nonempty as well, where $P_Z$ denotes the marginal distribution of $Z$ under $P$.
Hence, there exists a $P \in \mathbf P_0$ such that $P_Z=\tilde P_Z$, $P = Q T^{-1}$ for some $Q \in \mathbf Q$, $Q \{R \in \mathcal R'\} > 0$, and $\theta(Q) = \theta_0$. For a sequence $\lambda_n > 0$ satisfying $\lambda_n \downarrow 0$, define $P_n = \lambda_n P + (1 - \lambda_n) \tilde P$ and $Q_n = \lambda_n Q + (1 - \lambda_n) \tilde Q$. 
It then follows that $P_n = Q_n T^{-1}$, $Q_n \{R \in \mathcal R'\} > 0$, and $E_{Q_n}[g(R) I \{R \in \mathcal R'\}] = \theta_0 Q_n \{R \in \mathcal R'\}$, which implies that $\theta(Q_n) = \theta_0$. 
Moreover, since $Q = Q_R\times P_Z$ and $\tilde Q = \tilde Q_R\times P_Z$ for some distributions $Q_R$ and $\tilde Q_R$, it follows that $Q_n = (\lambda_n Q_R + (1-\lambda_n)\tilde Q_R)\times P_Z$ and therefore that $R\indep Z$ under $Q_n$.
We conclude that $P_n \in \mathbf P_0$ for each $n$ and, since $P_n \to \tilde P$  as $n \to \infty$, that $\tilde P \in \mathrm{cl}(\mathbf P_0)$ implying $\tilde {\mathbf P}_0 \subseteq \text{cl}_{\mathbf P}(\mathbf P_0)$.
\end{proof}

\subsection{Proof of Theorem \ref{thm:inference}}

We begin by introducing some notation that we use throughout the proofs. 
First let $\psi(V,P)$ be a vector of dimension $|\mathcal M| +2$.
Indexing the first $|\mathcal M|$ coordinates of $\psi(V,P)$ by $(y,d,z)$, we set
\begin{equation}\label{eq:psidef}
\psi_{ydz}(V,P) = \frac{1}{P_z} I \{Y=y,D=d,Z=z\} - \frac{P_{ydz}}{P_z^2}I\{Z=z\}~,    
\end{equation}
and let the final two coordinates of $\psi(V,P)$ equal zero. The random variable $\psi(V,P)$ is the influence function of $\sqrt n\{\hat \beta_n - \beta(P)\}$ and will play an important role in our analysis.

The proof of Theorem \ref{thm:inference} requires a final assumption that ensures that there is anti-concentration.
To state this assumption, we let 
\begin{equation*}
    \mathcal V(P) \equiv \{s\in \mathbf R^{|\mathcal M|+2} : A^\dagger s\leq 0, \|\Omega(P)(AA^\prime)^\dagger s\|_1\leq 1\}
\end{equation*}
and let $\mathcal E(P)$ be the extreme points of $\mathcal V(P)$.
For any $s\in \mathcal E(P)$, we also set $\sigma^2(s,P)\equiv s^\prime \text{Var}_P\{\psi(s,P)\}s$ and define
\begin{equation*}
\bar \sigma(P) \equiv \max_{s\in \mathcal E(P)} \sigma(s,P), \hspace{0.3 in}\underline{\sigma}(P) \equiv \min_{s\in \mathcal E(P):\sigma(s,P) > 0} \sigma(s,P)~,
\end{equation*}
where we set $\underline{\sigma}(P) = +\infty$ if $\sigma(s,P) = 0$ for all $s\in \mathcal E(P)$.
Finally, for any $P\in \mathbf P$ we define
\begin{equation*}
{\rm  m}(P) = \mathrm{med} \left \{ \sup_{s \in \mathcal V(P)} \langle s, \mathbb G(P) \rangle \right \}
\end{equation*}
where for any random variable $V$, $\text{med}\{V\}$ denotes its median.
Given the introduced notation, we state a final assumption that ensures there is sufficient anti-concentration in our problem.

\begin{assumption} \label{ass:tuning}
$\xi_n \equiv (\log^3(n)|\mathcal M|/n)^{1/4} \vee \lambda_n \sqrt{\log(|\mathcal M|)}$ satisfies \\$\sup_{P\in \mathbf P} ({\rm m}(P) + \bar \sigma(P))/\underline \sigma^2(P) = o(\xi_n^{-1})$~.
\end{assumption}

We next provide the proof of Theorem \ref{thm:inference}. The proof relies on numerous auxiliary lemmas that are stated and proven in Online Appendix Section \ref{sec:auxlemma}.

\begin{proof}[\sc Proof of Theorem \ref{thm:inference}] The proof follows the same arguments as those in the proof of Theorem 4.2 in \cite{fang2023inference} but with their coupling results (Lemmas A.4 and A.5) replaced by our improved coupling rates from Lemma \ref{lm:coup}.
Specifically, we note that Lemma \ref{lm:check} implies Assumptions 4.1, 4.2, 4.3, 4.4(v) and A.1 of \cite{fang2023inference} hold. 
Assumptions 4.4(i)--(iv) are not verified directly, because their only role in the proof of Theorem 4.2 in \cite{fang2023inference} is to ensure a coupling of the bootstrap statistic---a condition that we instead establish directly in Lemma \ref{lm:coup}.
The arguments in \cite{fang2023inference} can therefore be applied with $r_n = \log^{3/2}(n)\sqrt{|\mathcal M|}/\sqrt n$ (the coupling rate of Lemma \ref{lm:coup}(i)) and $b_n = r_n^{1/2}$ (the coupling rate of Lemma \ref{lm:coup}(ii)).
\end{proof}

\subsection{Proof of Theorem \ref{thm:equiv_cont}} \label{app:equiv_cont}

Before proceeding with the proof, we first introduce some additional notation. 
Given the sets of restrictions $\mathcal R \subseteq \mathcal Y^{|\mathcal D|} \times \mathcal D^{|\mathcal Z|}$ and $\mathcal R^\prime \subseteq \mathcal Y^{|\mathcal D|} \times \mathcal D^{|\mathcal Z|}$ we define, for any $r_t\in \mathcal D^{ | \mathcal Z|}$, the sets
\begin{align*}
\mathcal R_o(r_t) & \equiv \{r_o \in \mathcal Y^{|D|} : (r_o,r_t)\in \mathcal R\}\\
\mathcal R_o^\prime(r_t) & \equiv \{r_o \in \mathcal Y^{|D|} : (r_o,r_t)\in \mathcal R^\prime\}~.
\end{align*}
Recall $B_{c_o} \equiv B_{c_o(0), L} \times \cdots \times B_{c_o(|\mathcal D|-1), L}\subseteq \mathcal Y^{|D|}$ for any $c_o \equiv (c_o(0), \dots, c_o(|\mathcal D|-1)) \in \mathcal L^{|\mathcal D|}$ and $\mathcal M_L \equiv \mathcal L \times \mathcal D \times \mathcal Z$. Let $\mathcal N_L \equiv \mathcal L^{|\mathcal D|}\times \mathcal D^{|\mathcal Z|} \times \{0, 1\}$.
The matrix $A_L$ then has $|\mathcal M_L|+3$ rows and $|\mathcal N_L|+2$ columns.

To construct $A_L$, we index the first $|\mathcal M_L|$ rows of $A_L$ by $(\ell,d,z)\in \mathcal M_L$ and the first $|\mathcal N_L|$ columns by $r=(c_o,r_t,\chi)$ where $c_o = (c_o(d): d\in \mathcal D)$, $r_t = (d(z):z \in \mathcal Z)$, and $\chi \in \{0, 1\}$. Define
\[ \mathcal Y(c_o, r_t, \chi) = \begin{cases}
a    B_{c_o} \cap \mathcal R_o'(r_t) & \text{ if } \chi = 1 \\
    B_{c_o} \backslash \mathcal R_o'(r_t) & \text{ if } \chi = 0~.
\end{cases} \]
If $\mathcal Y(r) = \emptyset$, then the $r$ column of $A_L$ is zero. Otherwise, the $(\ell,d,z) \times r$ element of $A_L$ is given by $I \{c_o(d) = \ell, d(z) = d\}$ and the $(|\mathcal M_L| + 1) \times r$ element of $A_L$ is given by $I \{(\mathcal Y(c_o, r_t, \chi) \times \{r_t\}) \cap \mathcal R \neq \emptyset\}$.
We also let the $(|\mathcal M_L| + 2) \times r$ and $(|\mathcal M_L|+3)\times r$ elements of $A_L$ be respectively given by 
\begin{align*}
\left( \sup_{r_o \in \mathcal Y(c_o, r_t, \chi)} g(r_o, r_t) - \theta_0 \right) & \times \chi \\
\left( \inf_{r_o \in \mathcal Y(c_o, r_t, \chi)} g(r_o, r_t) - \theta_0 \right) & \times \chi~.
\end{align*}
Finally, the $(|\mathcal M_L| + 2) \times (|{\mathcal N}_L|+1)$ element of $A_L$ is given by $-1$ and the $(|{\mathcal M}_L| + 3) \times (|{\mathcal N}_L|+2)$ element of $A_L$ is given by $1$. 
All other elements of $A_L$ are given by 0.

Given the introduced definitions, we next prove Theorem \ref{thm:equiv_cont}.

\begin{proof}[\sc Proof of Theorem \ref{thm:equiv_cont}]
First suppose $P \in \mathbf P_0$. 
Then, there is a $Q \in \mathbf Q_0(P, \mathbf Q)$ such that $\theta_0 = \theta(Q)$ and $Q\{R\in \mathcal R'\} > 0$. 
Letting $ q(c_o, r_t, \chi) \equiv Q\{R_o \in \mathcal Y(c_o, r_t, \chi), R_t = r_t\}$ for any $c_o \in \mathcal L^{|\mathcal D|}$ and $r_t \in \mathcal D^{|\mathcal Z|}$, we then obtain
\begin{equation} \label{eq:p-q_cont}
P_{\ell d|z} = Q \{Y(d) \in B_\ell, D(z) = d \} = \sum_{\substack{c_o \in \mathcal{L}^{\vert \mathcal D \vert}, r_t \in \mathcal R_t\\ \chi \in \{0, 1\}}}  q(c_o, r_t, \chi) I\{c_o(d) = \ell, r_t(z) = d\} ~.
\end{equation}
Collect the probabilities $q(c_o, r_t, \chi)$ into a vector $q = (q(c_o, r_t, \chi) : c_o \in \mathcal L^{|\mathcal D|}, r_t \in \mathcal D^{|\mathcal Z|}, \chi \in \{0, 1\})$, and define $\xi_1,\xi_2$ by
\begin{align*}
\xi_1 & \equiv \sum_{c_o \in \mathcal{L}^{\vert \mathcal D \vert}, r_t \in \mathcal R_t} \left( \sup_{r_o \in B_{c_o} \cap \mathcal R_o'(r_t)} g(r_o, r_t) - \theta_0 \right) q(c_o, r_t, 1) \\
\xi_2 &\equiv - \sum_{c_o \in \mathcal{L}^{\vert \mathcal D \vert}, r_t \in \mathcal R_t} \left( \inf_{r_o \in B_{c_o} \cap \mathcal R_o'(r_t)} g(r_o, r_t) - \theta_0 \right)  q(c_o, r_t, 1)~.
\end{align*}
Next, note that since $\theta(Q) = \theta_0$ by hypothesis and $Q \{R \in \mathcal R'\} > 0$, it follows that $Q$ satisfies $E_Q[(g(R)-\theta_0)I\{R\in \mathcal R^\prime\}] = 0$.
In particular, by definition of $\xi_1$ and $\xi_2$ we then obtain that
\begin{equation*}
-\xi_1 \leq E_Q[(g(R)-\theta_0)I\{R\in \mathcal R^\prime\}] = 0 \leq \xi_2~.
\end{equation*}
Setting $x = (q^\prime, \xi_1,\xi_2)^\prime$ we then note that $x\geq 0$.
Moreover, it follows from \eqref{eq:p-q_cont} and the definition of $A_L$ that the first $|{\mathcal M}_L|$ rows of $A_L x$ equal the first $|{\mathcal M}_L|$ rows of $\beta(P)$. 
Also note that because $Q\in \mathbf Q_0(P,\mathbf Q)$ implies $Q$ satisfies Assumption \ref{ass:generalizedstrata} and hence $Q\{R\in \mathcal R\} = 1$ it follows that
\begin{equation*}
\sum_{\substack{c_o \in \mathcal L^{|\mathcal D|}, r_t \in \mathcal R_t\\ \chi \in \{0, 1\}}} q(c_o, r_t, \chi) \times I \{(\mathcal Y(c_o, r_t, \chi) \times \{r_t\}) \cap \mathcal R \neq \emptyset\} = 1~,
\end{equation*}
and therefore the $|{\mathcal M}_L| + 1$ row of $A_L x$ equals the $|{\mathcal M}_L|+1$ row of $\beta(P)$ by definition of $A_L$ and $\beta(P)$. 
Finally, we note the $|{\mathcal M}_L|+2$ and $|{\mathcal M}_L|+3$ rows of $A_L x$ equal the $|\mathcal M_L|+2$ and $|\mathcal M_L|+3$ rows of $\beta_L(P)$ by definition of $\xi_1$, $\xi_2$, $A_L$ and $\beta_L(P)$.
Therefore, we have shown that $\mathbf P_0 \subseteq {\mathbf P}_{0,L} \equiv \{P \in \mathbf P : Ax = \beta(P) \text{ for some } x \geq 0 \}$.

It remains to show that $\limsup_{L \to \infty} \mathbf P_{0,L}\subseteq \mathrm{cl}_{\mathbf P}(\mathbf P_0)$. 
We proceed by contradiction, and assume that there is a $P\in \limsup_{L \to \infty} \mathbf P_{0,L}$ such that $P\notin \mathrm{cl}_{\mathbf P}(\mathbf P_0)$.
Since $P\notin \mathrm{cl}_{\mathbf P}(\mathbf P_0)$, Lemma \ref{lem:sep} implies that
\begin{equation}\label{th:cont1}
\int f dP = 1 \hspace{0.5 in}\int f dP^\prime \leq 0 \text{ for all } P^\prime \in \mathbf P_0 \text{ satisfying } P^\prime_Z = P_Z    
\end{equation}
for some continuous and bounded function $f: \mathcal M \to \mathbf R$ and $P_Z$ denoting the marginal distribution of $Z$ under $P$.
Since $f:\mathcal M \to \mathbf R $ is continuous and $\mathcal Y$ is compact by Assumption \ref{ass:cont}(i), it follows that $f$ is uniformly continuous on $\mathcal M$.
For any $\eta > 0$ there therefore exists a $\delta > 0$ such that
\begin{equation}\label{th:cont2}
\max_{d\in \mathcal D,z\in \mathcal Z} \sup_{|y-y^\prime| \leq 2\delta} |f(y,d,z)-f(y^\prime,d,z)| \leq \eta~.    
\end{equation}
Moreover, by Assumption \ref{ass:cont}(iii) there is a $L_0 < \infty$ such that $\max_{1\leq \ell \leq L_0} \text{diam}\{B_{\ell,L}\} < \delta$ for all $L\geq L_0$.
Since $P\in \limsup_{L \to \infty} \mathbf P_{0,L}$ implies $P\in \mathbf P_{0,L}$ for some $L\geq L_0$, we conclude there is some $L < \infty$ such that
\begin{equation}\label{th:cont3}
P\in \mathbf P_{0,L} ~ ~ ~ \text{ and } ~ ~  ~ \max_{1\leq \ell \leq L} \text{diam}\{B_{\ell,L}\} < \delta~.    
\end{equation}
Next define $\lambda$ to be a vector of dimension $|\mathcal M_L| +3$ whose last three entries are zero, and each of its first $|\mathcal M_L|$ entries, which we index by $(\ell,d,z)\in \mathcal L\times \mathcal D\times \mathcal Z$, are given by
\begin{equation*}
\lambda(\ell, d, z) \equiv \sup_{y \in B_{\ell, L}} f(y, d, z) P \{Z = z\}~.	
\end{equation*}
Also observe that $P\in \mathbf P_{0,L}$ implies $\beta_L(P) = A_Lx$ for some $x\geq 0$.
Indexing the first $|\mathcal N_L|$ entries of $x$ by $r = (c_o,r_t, \chi)$ with $(c_o,r_t, \chi) \in \mathcal L^{|\mathcal D|}\times \mathcal R_t \times \{0, 1\}$,  let $Q^\prime_R$ be any distribution of $R$ satisfying $Q^\prime_R\{R_o\in \mathcal Y(c_o, r_t, \chi),R_t = r_t\} = x(c_o,r_t,\chi)$.
Defining the set of indices $\mathcal I(\mathcal R) \equiv \{(c_o,r_t,\chi)\in \mathcal L^{|\mathcal D|}\times \mathcal R_t : \mathcal Y(c_o, r_t, \chi) \cap \mathcal R \neq \emptyset\}$ and then note that the last three entries of $A_Lx$ being equal to $(1,0,0)$ because $A_Lx=\beta_L(P)$ imply that the distribution $Q_R^\prime$ satisfies
\begin{align}\label{lm:cont5}
\sum_{(c_o,r_t,\chi)\in \mathcal I(\mathcal R)} Q_R^\prime\{(R_o,R_t) \in (\mathcal Y(c_o, r_t, \chi)\times \{r_t\})\} & = 1 \notag\\
\sum_{(c_o,r_t) \in \mathcal L^{|\mathcal D|} \times \mathcal R_t} \left(\sup_{r_o\in B_{c_o} \cap \mathcal R_o'(r_t)} g(r_o,r_t) - \theta_0\right)\times Q_R^\prime\{(R_o,R_t) \in (B_{c_o} \times \{r_t\}) \cap \mathcal R' \} & \geq 0 \notag \\
\sum_{(c_o,r_t) \in \mathcal L^{|\mathcal D|} \times \mathcal R_t} \left(\inf_{r_o\in B_{c_o} \cap \mathcal R_o'(r_t)} g(r_o,r_t) - \theta_0\right)\times Q_R^\prime\{(R_o,R_t) \in (B_{c_o} \times \{r_t\}) \cap \mathcal R' \} & \leq 0 ~.
\end{align}
Setting $Q^\prime \equiv Q_R^\prime \times P_Z$ and letting $P^\prime = Q^\prime T^{-1}$, then note that by construction we have $\beta_L(P) = \beta_L(P^\prime)$.
Therefore, using that $f$ satisfies \eqref{th:cont2}, $\max_{1\leq \ell\leq L} \text{diam}\{B_{\ell,L}\} < \delta$ by \eqref{th:cont3}, and the definition of $\lambda$ gives us
\begin{multline}\label{th:cont6}
\int f dP \leq \sum_{(\ell,d,z)\in \mathcal M_L} P\{Y\in B_{\ell,L},D=d,Z=z\}\times \sup_{y\in B_{\ell,L}}f(y,d,z) \\ 
= \lambda^\prime \beta_L(P) = \lambda^\prime \beta_L(P^\prime)  \leq \int f dP^\prime + \eta ~.
\end{multline}
On the other hand, $Q^\prime_R$ satisfying  \eqref{lm:cont5} implies by Lemma \ref{lm:fix} that there is a distribution $\tilde Q_R$ for $R$ such that $\tilde Q = \tilde Q_R\times P_Z$ satisfies $E_{\tilde Q}[(g(R)-\theta_0)I\{R\in \mathcal R^\prime\}]$, $\tilde Q\{R\in \mathcal R\} = 1$, and such that $\tilde P = \tilde QT^{-1}$ satisfies
\begin{equation}\label{th:cont7}
\int f(dP^\prime - d\tilde P) \leq 2 \eta~.    
\end{equation}
Finally, let $\mathbf P_0(P) \equiv \{\hat P \in \mathbf P_0 : \hat P_Z = P_Z\}$ and note that Lemma \ref{lem:P0P} implies $\mathbf P_0(P)$ is not empty. 
Pick any $P_0\in \mathbf P_0(P)$ and note that $P_0 = Q_0 T^{-1}$ for some $Q_0\in \mathbf Q$ satisfying $Q_0 = Q_{0,R}\times P_Z$ for some distribution $Q_{0,R}$ for $R$.
For any $\tau \in [0, 1]$, it then follows that $Q_\tau \equiv (1-\tau)\tilde Q + \tau Q_0 = \{(1-\tau)\tilde Q_R + \tau Q_{0,R}\} \times P_Z$ satisfies $E_{Q_\tau}[(g(R)-\theta_0)I\{R\in \mathcal R^\prime\}] = 0$, $Q_\tau\{R\in \mathcal R\} =1$, and $Q_\tau\{R\in \mathcal R^\prime\} > 0$ because $Q_{0}\{R\in \mathcal R^\prime\} > 0$.
In particular, it follows that $P_\tau \equiv Q_\tau T^{-1} \in \mathbf P_0(P)$.
Moreover, since $f:\mathcal M \to \mathbf R$ is bounded, we also have that $|\int f(d\tilde P- dP_\tau)|\leq 2\tau \|f\|_\infty$.
Therefore, setting $\tau \leq \eta/(2\|f\|_\infty)$ allows us to conclude that
\begin{equation}\label{th:cont8}
\int f d\tilde P \leq \int fdP_\tau + \eta \leq \eta~,
\end{equation}
where the final inequality holds by \eqref{th:cont1} and $P_\tau \in \mathbf P_0(P)$. Combining results \eqref{th:cont6}, \eqref{th:cont7}, and \eqref{th:cont8} finally yield $\int f dP \leq 4\eta$, which contradicts \eqref{th:cont1} for $\eta$ sufficiently small. 
Hence, we conclude that there is no $P\in \limsup_{L \to \infty} \mathbf P_{0,L}$ satisfying $P\notin \text{cl}_{\mathbf P}(\mathbf P_0)$, which completes the proof.
\end{proof} 

\section{Details for Remark \ref{remark:analytical}} \label{app:analytical}
In this section, we describe a method for deriving analytical expressions for $\Theta_0(P, \mathbf Q)$ as an interval whose lower and upper endpoints are some functions of $P$, $L(P)$, and $U(P)$ respectively. If $Q$ satisfies Assumption \ref{as:exog} and $P = Q T^{-1}$, then $P\{Z = z\} = Q \{Z = z\}$ and \eqref{eq:p-q} holds. Therefore, in what follows, we disregard the distribution of $Z$ under $Q$ and identify $Q$ with the distribution of $R$, and in turn with the column vector $q$. Correspondingly, we identify $P$ with $p = \{p_{yd | z}: (y, d, z) \in \mathcal M\}$. Let $A_0$ and $\beta_0(P)$ denote the first $|\mathcal M| + 1$ rows of $A$ and $\beta(P)$. Because we assume every $Q \in \mathbf Q$ satisfies Assumption \ref{as:exog},
\begin{equation} \label{eq:Q0-lp}
\mathbf Q_0(P, \mathbf Q) = \{q: A_0 q = \beta_0(P), ~q \geq 0\}~.
\end{equation}

\subsection{When $Q \{R \in \mathcal R'\}$ is known or identified}\label{sec:id_knownQ}
Suppose $Q \{R \in \mathcal R'\} > 0$ for all $Q \in \mathbf Q$ and is known or identified, so that it is a function of $P$ for all $P$ such that $\mathbf Q_0(P, \mathbf Q) \neq \emptyset$. Denote such a function by $a(P)$. Recall $q$ introduced in the proof of Theorem \ref{thm:equiv}, which is indexed by $r$ such that $q(r) = Q \{R = r\}$. For all $Q \in \mathbf Q_0(P, \mathbf Q)$, $\theta(Q) = \frac{1}{a(P)} \sum_{r \in \mathcal R'} g(r) q(r)$. Let $c$ be a column vector indexed by $r$, with the $r$ element given by $g(r) / a(P)$. Here we suppress the dependence of $c$ on $P$ for convenience. With the notation above,
\[ \Theta_0(P, \mathbf Q) = c' \mathbf Q_0(P, \mathbf Q)~. \]
Before proceeding, we show $\Theta_0(P, \mathbf Q)$ is a closed interval. Because $\mathbf Q_0(P, \mathbf Q)$ is a polyhedron (in standard form), Corollary 2.5 in \cite{bertsimas1997introduction} implies $\Theta_0(P, \mathbf Q) = c' \mathbf Q_0(P, \mathbf Q)$ is a polyhedron in $\mathbf R$. Furthermore, $\mathbf Q_0(P, \mathbf Q)$ is obviously bounded, so $\Theta_0(P, \mathbf Q)$ is also bounded. A bounded polyhedron in $\mathbf R$ is simply the intersection of bounded closed intervals, so it is itself a bounded closed interval.

Next, note $L(P)$ is the solution to
\begin{equation} \label{eq:ate-primal}
\begin{split}
\min_q \quad & c' q \\
\textup{subject to} \quad & A_0 q = \beta_0(P) \\
& q \geq 0~.
\end{split}
\end{equation}
Because the feasible set $\mathbf Q_0(P, \mathbf Q)$ in \eqref{eq:ate-primal} is in addition bounded, the optimal cost has to be finite, so it follows from Theorem 2.8 in \cite{bertsimas1997introduction} that \eqref{eq:ate-primal} has an optimal solution. Therefore, by the strong duality theorem \citep[Theorem 4.4,][]{bertsimas1997introduction}, the optimal value in \eqref{eq:ate-primal} is the same as the optimal value of its dual:
\begin{equation} \label{eq:ate-dual}
\begin{split}
\max_r \quad & \beta_0(P)'r \\
\textup{subject to} \quad & A_0' r \leq c~.
\end{split}
\end{equation}
Note that in \eqref{eq:ate-dual} the constraints do not involve $P$. It follows from the resolution theorem \citep[Exercise 4.47,][]{bertsimas1997introduction} that the feasible set in \eqref{eq:ate-dual} can be written as
\[ \left \{ \sum_{1 \leq j \leq J} \theta_j r_j^{\rm ex} + \sum_{1 \leq \ell \leq L} \lambda_\ell r_\ell^{\rm ray}: \theta_j \geq 0, \sum_{1 \leq j \leq J} \theta_j = 1, \lambda_\ell \geq 0 \right \}~, \]
where each $r_j^{\rm ex}$ is a vertex and each $r_\ell^{\rm ray}$ spans an extreme ray. Because \eqref{eq:ate-dual} cannot be unbounded, $b' r_\ell^{\rm ray} \leq 0$ for each $\ell$, so the optimal solution of \eqref{eq:ate-dual} must have $\lambda_\ell = 0$ for all $\ell$. Therefore, it follows from Theorem 2.8 of \cite{bertsimas1997introduction} that the optimal value for \eqref{eq:ate-dual} is
\[ \max_{1 \leq j \leq J} b' r_j^{\rm ex}~. \]
The conclusion follows because the class of vertices $r_j^{\rm ex}$ are determined by $A_0$ and $c$ and do not depend on $P$. $U(P)$ can be obtained in a similar fashion.

\subsection{When $Q \{R \in \mathcal R'\}$ is partially identified} \label{app:analytical-partial}
Suppose now $Q \{R \in \mathcal R'\}$ is not point identified. In this case, $\Theta_0(P, \mathbf Q)$ is not necessarily a closed interval, since the set $\{Q: Q \in \mathbf Q_0(P, \mathbf Q),~ Q\{R \in \mathcal R'\} > 0\}$ is not necessarily closed, but the latter remains a connected set. Recall~\eqref{eq:ratio} that
\[
\theta(Q) = \frac{E_Q[g(R) I \{R \in \mathcal R'\}]}{Q \{R \in \mathcal R'\}} ~,
\]
so $\theta(Q)$ is a continuous function of $Q$ whenever $Q\{R \in \mathcal R'\} > 0$. As a result, the identified set $\Theta_0(P, \mathbf Q) \equiv \{ \theta(Q) : Q \in \mathbf Q_0(P, \mathbf Q),~ Q\{R \in \mathcal R'\} > 0 \}$ as the image of a continuous function on a connected set is also connected. Since $\Theta_0(P, \mathbf Q) \subseteq \mathbf R$, it is an interval (though not necessarily closed). Its lower endpoint, $L(P)$, can be obtained by solving the optimization problem
\[
L(P) = \inf_{\substack{{Q \in \mathbf Q_0(P, \mathbf Q)} \\ {Q\{R \in \mathcal R'\} > 0}}} \frac{E_Q[g(R) I \{R \in \mathcal R'\}]}{Q \{R \in \mathcal R'\}}~,
\]
and similarly for the upper endpoint.

The optimization problem above is known as a linear fractional program and can be transformed into a linear program using the Charnes--Cooper transformation \citep{charnes1962programming}. One may then be tempted to replicate the strategy in Section \ref{sec:id_knownQ} relying on the linear program from the Charnes--Cooper transformation. Such a strategy generally fails because in the dual program of the transformed linear program, in contrast with \eqref{eq:ate-dual}, the constraints are not known ex-ante but are functions of $P$. As a result, we propose an alternative strategy as follows.

The problem of finding the analytical expressions for $L(P)$ (or $U(P)$) can be written as the following two-step optimization problem which is easier to analyze:
\begin{align} \label{eq:twostepopt} 
\inf_{\substack{{Q \in \mathbf Q_0(P, \mathbf Q)} \\ {Q\{R \in \mathcal R'\} > 0}}} \frac{E_Q[g(R) I \{R \in \mathcal R'\}]}{Q \{R \in \mathcal R'\}} &\iff \inf_{\pi \in \Pi(\mathcal R', P, \mathbf Q) \setminus \{0\}} \bigg\{\inf_{Q \in \mathbf Q_0(P, \mathbf Q) \cap \Delta(\pi)} \frac{E_Q[g(R) I \{R \in \mathcal R'\}]}{\pi} \bigg\} \nonumber \\
&\iff \inf_{\pi \in \Pi(\mathcal R', P, \mathbf Q) \setminus \{0\}} \frac{1}{\pi} \bigg\{\inf_{Q \in \mathbf Q_0(P, \mathbf Q) \cap \Delta(\pi)} E_Q[g(R) I \{R \in \mathcal R'\}] \bigg\}~, 
\end{align}
where for any $\pi \in [0,1]$, we define
\begin{equation} \label{eq:deltapi}
\Delta(\pi) \equiv \{Q \in \mathbf Q: \pi = Q\{ R \in \mathcal R' \}\}~,
\end{equation}
and 
\begin{equation} \label{eq:pi-id}
\Pi(\mathcal R', P, \mathbf Q) \equiv \{ Q\{R \in \mathcal R'\}: Q \in \mathbf Q_0(P, \mathbf Q) \}~,
\end{equation}
as the identified set for $Q\{R \in \mathcal R'\}$. Solving for $\Pi(\mathcal R', P, \mathbf Q)$, however, is a special case of Section \ref{sec:id_knownQ} by specifying $\mathcal R' = \mathcal R$ and $g(R) = I \{R \in \mathcal R'\}$ in $\theta(Q) = E_Q[g(R) | R \in \mathcal R']$, so that $Q\{R \in \mathcal R'\} = 1$ by Assumption \ref{ass:generalizedstrata}. Therefore, $\Pi(\mathcal R', P, \mathbf Q)$ is a closed interval whose lower (resp.\ upper) endpoint is the maximum (resp.\ minimum) of a finite number of linear functions of $(p_{yd|z}(P))$. 

For the inner minimization problem, $\mathbf Q_0(P, \mathbf Q) \cap \Delta(\pi)$ is characterized by 
\[
\mathbf Q_0(P, \mathbf Q) \cap \Delta(\pi) = \{q: A(\mathcal R') q = \beta(P, \pi),~ q \geq 0\}~,
\]
where $A(\mathcal R')$ is a $(|\mathcal M| + 2) \times |\mathcal N|$ matrix, whose first $|\mathcal M| + 1$ rows are the same as $A_0$, and the last row is given by $I \{r \in \mathcal R'\}$, and $\beta(P, \pi) \equiv (p_{yd|z}(P) : (y,d,z) \in \mathcal{M}, 1, \pi)'$. Using the same analysis in Section \ref{sec:id_knownQ}, 
\[
\inf_{Q \in \mathbf Q_0(P, \mathbf Q) \cap \Delta(\pi)} E_Q[g(R) I \{R \in \mathcal R'\}]
\]
is given by the maximum of a finite number of linear functions of $(p_{yd|z}(P)) \cup (\pi)$, denoted as $L(P, \pi)$. As a result, $L(P)$ has the following form:
\begin{equation}\label{eq:sharp_LP}
L(P) = \inf_{\pi \in \Pi(\mathcal R', P, \mathbf Q) \setminus \{0\}} \frac{1}{\pi} L(P, \pi)~,
\end{equation}
and $U(P)$ can be obtained in a similar fashion.

\section{Details for Remark \ref{remark:chengandsmall}} \label{app:chengsmall}

 Consider the setting of Example \ref{eg:onesided}, an RCT with one-sided noncompliance, and additionally suppose $\mathcal D = \mathcal Z = \{0, 1, 2\}$ and $\mathcal Y = \{0, 1\}$.    Let $\pi_{0ij} = Q\{R^t = (0,i,j) \}$ for $i \in \{0,1\}$ and $j \in \{0,2\}$.  We now show that the identified set on $\pi_{012}$ depends on the distribution of the full vector $(Y, D, Z)$ instead of only the distribution of $(D, Z)$ by showing that the bounds on $\pi_{012}$ that only use the distribution of $(D, Z)$ differ from the identified set on $\pi_{012}$.
 
 First, consider the bounds on $\pi_{012}$ that only use the distribution of $(D,Z)$.  \cite{cheng2006bounds} construct bounds on $\pi_{012}$ in this setting by minimizing and maximizing  $\pi_{012}$ under the following constraints:
\begin{equation}\label{eq:cs_step1}
\begin{split}
p_{1|1} &= \pi_{012} + \pi_{010} \\
p_{0|1} &= \pi_{002} + \pi_{000} \\
p_{2|2} &= \pi_{002} + \pi_{012} \\
p_{0|2} &= \pi_{000} + \pi_{010} \\
1 &= \pi_{012} + \pi_{010} + \pi_{002} + \pi_{000} \\
0 \leq \pi_{012},&~ \pi_{010},~ \pi_{002},~ \pi_{000} \leq 1 ~,
\end{split}
\end{equation}
where $p_{d|z} = P\{D=d|Z=Z\}$ for $d \in \mathcal D$ and $z \in \mathcal Z$.    Note that  $\{Q \in \mathbf Q : Q~\text{satisfies}~ \eqref{eq:cs_step1}\}$ is the set of all $Q \in \mathbf Q$ that rationalize $\{p_{d \mid z} : d \in \mathcal{D}, z \in \mathcal{Z}\}$.  Forming bounds on   $\pi_{012}$ by this linear program is the same as the procedure outlined in Appendix \ref{sec:id_knownQ}, except that it only uses as constraints that $Q$ rationalizes  $\{p_{d \mid z} : d \in \mathcal{D}, z \in \mathcal{Z}\}$. By parallel  arguments to those of Appendix \ref{sec:id_knownQ}, these bounds correspond to the analog to the identified set if defined only using the marginal distribution of $(D,Z)$ as opposed to the full vector $(Y,D,Z).$  Let  $\Theta_m(P, \mathbf Q)$ denote the resulting bounds on $\pi_{012}$, which \cite{cheng2006bounds} show is given by
 \begin{equation}\label{eq:cs_step1bound}
\Theta_m(P, \mathbf Q) = \big[\max \{0,~ p_{1|1} - p_{0|2} \},~ \min\{p_{1|1},~ p_{2|2}\} \big]~. 
\end{equation}

In contrast,   applying our procedure from Appendix \ref{app:analytical} that uses as constraints that $Q$ rationalizes  $\{p_{yd \mid z} : y \in \mathcal{Y}, d \in \mathcal{D}, z \in \mathcal{Z} \}$   results in the identified set on
$\pi_{012}$ being given by
\begin{equation}\label{eq:pi012_tight}
\begin{aligned}
\Theta_0(P, \mathbf Q) = & 
    \left[
    \max \left\{\begin{array}{c} 0\\ p_{01|1}+p_{11|1}-(p_{00|2}+p_{10|2})\\ p_{10|0}-p_{10|1}-p_{10|2}\\1 - p_{00|1} - p_{00|2} - p_{10|0} \end{array}\right\}
    ~,~
    \min \left\{\begin{array}{c} p_{01|1}+p_{11|1} \\ p_{02|2}+p_{12|2}  \\ 1-p_{00|1}-p_{10|2}\\ 1-p_{10|1}-p_{00|2}\end{array}\right\}
    \right] \\
 = &  
\left[
\max \left\{\begin{array}{c} 0\\ ~ p_{1|1} - p_{0|2}\\ p_{10|0}-p_{10|1}-p_{10|2}\\1 - p_{00|1} - p_{00|2} - p_{10|0} \end{array}\right\}
~,~
\min \left\{\begin{array}{c} p_{1|1} \\ p_{2|2}  \\ 1-p_{00|1}-p_{10|2}\\ 1-p_{10|1}-p_{00|2}\end{array}\right\}
\right] ~,
 \end{aligned}
\end{equation}

Under Assumption \ref{as:exog} and Assumption \ref{ass:generalizedstrata} with $\mathcal R$ defined as in Example \ref{eg:onesided}, we can rewrite the right-hand side of \eqref{eq:cs_step1bound} in terms of $Q$ as   
\begin{equation} \label{eq:cs_step1bound_Q}
\big[\max \{0,~Q\{ R^t = (0,1,2)    \} - Q\{R^t = (0,0,0 )\} \},~ \min\{Q\{D_1=1\} ,~ Q\{D_2=2\} \} \big]~,
\end{equation}
and the right-hand side of \eqref{eq:pi012_tight} as 
 \begin{multline}  \label{eq:pi012_tight_Q}
 \left[
\max \left\{\begin{array}{c} 0\\ ~ Q\{ R^t = (0,1,2)    \} - Q\{R^t = (0,0,0 )\} \\  Q\{ Y(0)=1,  R^t = (0,1,2) \} - Q\{Y(0)=1,  R^t = (0,0,0)\} \\ Q\{ Y(0)=0, R^t  = (0,1,2) \} - Q\{Y(0)=0, R^t= (0,0,0)  \} \end{array}\right\} \right.
~,~\\
\left. \min \left\{\begin{array}{c} Q\{D_1=1\} \\  Q\{D_2=2\}  \\ Q\{Y(0)=0,D_1=1\} + Q\{Y(0)=1, D_2=2\} \\ Q\{Y(0)=1,D_1=1\} + Q\{Y(0)=0, D_2=2\}\end{array}\right\}
\right] ~.
\end{multline} 
These expressions make it transparent why both \eqref{eq:cs_step1bound} and \eqref{eq:pi012_tight} are valid bounds on $\pi_{012} = Q\{R^t = (0,1,2) \}$ under the maintained assumptions and how \eqref{eq:pi012_tight}  incorporates information from the distribution of outcomes. 

We now consider when \eqref{eq:cs_step1bound_Q} will strictly contain \eqref{eq:pi012_tight_Q}.
The lower bound of \eqref{eq:pi012_tight_Q} will be strictly greater than the lower bound of \eqref{eq:cs_step1bound_Q} if 
\begin{equation}\label{eq:ratio1} \frac{Q\{ Y(0)=j \mid R^t = (0,1,2)  \}}{Q\{ Y(0)=j \mid R^t = (0,0,0) \}}  
< \frac{Q\{R^t = (0,0,0) \}}{Q\{ R^t = (0,1,2) \}} 
<  \frac{Q\{ Y(0)=1-j \mid R^t = (0,1,2) \}}{Q\{ Y(0)=1-j \mid R^t = (0,0,0)  \}}
\end{equation} 
for $j=0$ or $1$, and otherwise the lower bounds coincide.  Likewise, the upper bound of \eqref{eq:pi012_tight_Q} will be strictly smaller than the upper bound of \eqref{eq:cs_step1bound_Q} if  
\begin{equation}\label{eq:ratio2}\frac{Q\{ Y(0)=j \mid  D_1=1\}}{Q\{ Y(0)=j \mid  D_2=2\}}  < \frac{Q\{    D_2=2\}}{Q\{   D_1=1\}} < \frac{Q\{ Y(0)=1-j \mid  D_1=1\}}{Q\{ Y(0)=1-j \mid  D_2=2\}}\end{equation} for $j=0$ or $1$,
 and otherwise the upper bounds coincide. 
We conclude that \eqref{eq:cs_step1bound_Q} and \eqref{eq:pi012_tight_Q} will coincide if the $Y(0)$ potential outcome is independent of treatment responses, while the bounds of  \eqref{eq:pi012_tight_Q}  will be strictly smaller than that of \eqref{eq:cs_step1bound_Q} if the dependence between $Y(0)$ and the treatment response types is sufficiently strong.  Thus, the bounds on $\pi_{012}$  of \eqref{eq:cs_step1bound} will strictly contain the bounds of \eqref{eq:pi012_tight}  for any $P$ such that $P= QT^{-1}$ for  a $Q$ with sufficient dependence between $Y(0)$ and treatment response types.  This role of dependence between potential outcomes and treatment responses is reminiscent of the role of such dependence in the ability to detect violations of latent monotonicity restrictions shown in  \cite{machado2019instrumental}.

We now consider a  numerical example with the $Q$ distribution specified in Table \ref{tab:distQ_Rem_CS} and the implied $P$ distribution specified in Table \ref{tab:distP_Rem_CS}, where we write $q(y_0 y_1 y_2, d_0 d_1 d_2) = Q\{Y(d) = y_d, D(z) = d_z,~ (d,z) \in \mathcal D \times \mathcal Z\}$ and omit any $q(\cdot) = 0$. One can check that $Q \in \mathbf Q$ and $P = Q T^{-1}$ so that $\mathbf Q_0(P, \mathbf Q) \neq \emptyset$. Moreover, $Q\{R \in \mathcal R\} = 1$ for $\mathcal R$ defined in Example \ref{eg:onesided}.
In this numerical example,    $\Theta_m(P, \mathbf Q)  = [0.195, 0.481]$, while $\Theta_0(P, \mathbf Q)  = [0.235, 0.419]$. Thus, in this numerical example, $P$ is such that $\mathbf Q_0(P, \mathbf Q) \neq \emptyset$  with  $\Theta_m(P, \mathbf Q) \supsetneqq \Theta_0(P, \mathbf Q)$.  Note that, in this example, 
\begin{align*}
\frac{Q\{ Y(0)=1 \mid R^t = (0,1,2) \}}{Q\{ Y(0)=1 \mid R^t = (0,0,0)  \}} &= 0.165 \\
\frac{Q\{R^t = (0,0,0)  \}}{Q\{R^t = (0,1,2) \}} &= 0.464 \\
\frac{Q\{ Y(0)=0 \mid R^t = (0,1,2) \}}{Q\{ Y(0)=0 \mid  R^t = (0,0,0)  \}} &= 1.49 
\end{align*}
so that the strong dependence between $Y(0)$ and treatment response types causes \eqref{eq:ratio1} to hold for $j=1$ and thus the lower bound of $\mathbf Q_m(P, \mathbf Q) $ to be strictly larger than the lower bound of $\mathbf Q_0(P, \mathbf Q) $.  Likewise,
\begin{align*}
\frac{Q\{ Y(0)=0 \mid  D_1=1\}}{Q\{ Y(0)=0 \mid  D_2=2\}}  &= 0.544  \\
\frac{Q\{    D_2=2\}}{Q\{   D_1=1\}} &= 1.48 \\
\frac{Q\{ Y(0)=1 \mid  D_1=1\}}{Q\{ Y(0)=1 \mid  D_2=2\}}  &= 1.93
\end{align*}
so that the strong dependence between $Y(0)$ and treatment response types causes \eqref{eq:ratio2} to hold for $j=0$ and thus the upper bound of $ \mathbf Q_0(P, \mathbf Q) $ to be strictly smaller than the upper bound of $\mathbf Q_m(P, \mathbf Q) $.

\begin{table}[ht!]
\centering
\normalsize
\begin{tabular}{|c|c|c|c|}
\hline
& $p_{ 00|0 }$ & $p_{ 10|0 }$ & \\
& 0.764 & 0.236 & \\ 
\hline
$p_{ 00|1 }$ & $p_{ 10|1 }$ & $p_{ 01|1 }$ & $p_{ 11|1 }$ \\
0.412 & 0.107 & 0.301 & 0.180 \\
\hline
$p_{ 00|2 }$ & $p_{ 10|2 }$ & $p_{ 02|2 }$ & $p_{ 12|2 }$ \\
0.117 & 0.169 & 0.475 & 0.239 \\
\hline
\end{tabular}
\caption{Distribution $P$ for Appendix \ref{app:chengsmall}.}
\label{tab:distP_Rem_CS}
\end{table}

\begin{table}[ht!]
\centering
\normalsize
\begin{tabular}{|c|c|c|c|}
\hline
$q( 000,000 )$ & $q( 000,010 )$ & $q( 000,002 )$ & $q( 000,012 )$ \\
0.002 & 0.002 & 0.017 & 0.025 \\
\hline
$q( 001,000 )$ & $q( 001,010 )$ & $q( 001,002 )$ & $q( 001,012 )$ \\
0.002 & 0.002 & 0.002 & 0.195 \\
\hline
$q( 010,000 )$ & $q( 010,010 )$ & $q( 010,002 )$ & $q( 010,012 )$ \\
0.101 & 0.002 & 0.272 & 0.120 \\
\hline
$q( 011,000 )$ & $q( 011,010 )$ & $q( 011,002 )$ & $q( 011,012 )$ \\
0.002 & 0.004 & 0.014 & 0.002 \\
\hline
$q( 100,000 )$ & $q( 100,010 )$ & $q( 100,002 )$ & $q( 100,012 )$ \\
0.002 & 0.002 & 0.022 & 0.011 \\
\hline
$q( 101,000 )$ & $q( 101,010 )$ & $q( 101,002 )$ & $q( 101,012 )$ \\
0.034 & 0.062 & 0.015 & 0.002 \\
\hline
$q( 110,000 )$ & $q( 110,010 )$ & $q( 110,002 )$ & $q( 110,012 )$ \\
0.002 & 0.002 & 0.006 & 0.002 \\
\hline
$q( 111,000 )$ & $q( 111,010 )$ & $q( 111,002 )$ & $q( 111,012 )$ \\
0.024 & 0.041 & 0.002 & 0.007 \\
\hline
\end{tabular}
\caption{Distribution $Q$ for Appendix \ref{app:chengsmall}.}
\label{tab:distQ_Rem_CS}
\end{table}

\section{Details for Testable Implications} \label{app:testability}
In this section, we discuss a method to obtain sharp testable restrictions of $\mathbf Q$ in terms of analytical inequalities. In what follows, we will make heavy use of the fact that a nonempty bounded polyhedron can be represented in two ways. Recall from Definition 2.1 of \cite{bertsimas1997introduction} that a polyhedron in $\mathbf R^k$ is a set $\{x \in \mathbf R^k: A x \leq b\}$, known as the $H$-representation of a polyhedron. If the polyhedron is nonempty and bounded, then Theorem 2.9 of \cite{bertsimas1997introduction} implies that it can equivalently be represented as the convex hull of its (finite number of) vertices, known as the $V$-representation of a polyhedron.

As in Appendix \ref{app:analytical}, we identify $Q$ with the column vector $q = (q(r): r \in \mathcal R)$, and $P$ with $p = \{p_{yd | z}: (y, d, z) \in \mathcal M\}$. Further let $A_1$ denote the first $|\mathcal M|$ rows of $A_0$ and let $a_0$ denote the last row of $A_0$. Correspondingly, we note
\[ \mathbf Q = \{q: a_0' q = 1, q \geq 0\}~, \]
which is a bounded polyhedron in $H$-representation. Next, let $\mathbf P(\mathbf Q) = \{P: P = Q T^{-1}, Q \in \mathbf Q\}$, and note $\mathbf P(\mathbf Q) = A_1 \mathbf Q$. $\mathbf P(\mathbf Q)$ is obviously a bounded polyhedron and is nonempty as long as $\mathbf Q$ is nonempty. In that case, Theorem 2.9 of \cite{bertsimas1997introduction} implies it is the convex hull of its (finite number of) vertices.

The previous discussion leads to the following algorithm for obtaining $\mathbf P(\mathbf Q)$ in terms of inequalities, i.e., its $H$-representation. For a given polyhedron, we can compute one representation from the other using the \texttt{mpt3} package in \texttt{MATLAB}. To obtain the $H$-representation of $\mathbf P(\mathbf Q)$, we use the following algorithm:
\begin{algorithm}
\item Step 1: Collect the set of all vertices of $\mathbf Q$, denoted by $V = \{V_i: 1 \leq i \leq n\}$.
\item Step 2: Compute $A_1 V = \{A_1 V_i: 1 \leq i \leq n\}$.
\item Step 3: From Lemma \ref{lem:convex-hull}, define $A_1 \mathbf Q = \mathrm{co}(A_1 V)$.
\item Step 4: Obtain the $H$ representation of $A_1 \mathbf Q$.
\end{algorithm}

In the algorithm, we have used the following lemma that allows us to define $\mathbf P(\mathbf Q)$ through the vertices of $\mathbf Q$:
\begin{lemma} \label{lem:convex-hull}
Suppose $\mathbf Q$ is a non-empty bounded polyhedron and $V$ is the set of vertices of $\mathbf Q'$. Then, $A_1 \mathbf Q = \mathrm{co}(A_1 V)$.
\end{lemma}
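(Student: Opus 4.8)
The plan is to reduce the claim to the Minkowski--Weyl theorem together with the elementary fact that a linear map commutes with the formation of convex hulls. First I would invoke Theorem 2.9 of \cite{bertsimas1997introduction}: since $\mathbf Q = \{q : a_0' q = 1,\, q \geq 0\}$ is a nonempty bounded polyhedron (it is contained in the probability simplex on $\mathcal R$ and is assumed nonempty), it admits the $V$-representation $\mathbf Q = \mathrm{co}(V)$, where $V$ is its finite set of vertices.

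The second step is to verify the identity $A_1 \,\mathrm{co}(V) = \mathrm{co}(A_1 V)$. For the inclusion ``$\subseteq$'', take any $q \in \mathrm{co}(V)$ and write $q = \sum_i \lambda_i V_i$ with $\lambda_i \geq 0$ and $\sum_i \lambda_i = 1$; by linearity of $A_1$, $A_1 q = \sum_i \lambda_i (A_1 V_i) \in \mathrm{co}(A_1 V)$. For ``$\supseteq$'', reverse the argument: an arbitrary point of $\mathrm{co}(A_1 V)$ has the form $\sum_i \lambda_i (A_1 V_i) = A_1\bigl(\sum_i \lambda_i V_i\bigr)$, and $\sum_i \lambda_i V_i \in \mathrm{co}(V) = \mathbf Q$, so the point lies in $A_1 \mathbf Q$. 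Chaining the two steps gives $A_1 \mathbf Q = A_1\,\mathrm{co}(V) = \mathrm{co}(A_1 V)$, which is the assertion of the lemma.

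There is no real obstacle here; the only points requiring care are bookkeeping ones: confirming that the hypotheses of Theorem 2.9 of \cite{bertsimas1997introduction} genuinely apply to $\mathbf Q$ (nonemptiness is assumed, boundedness is immediate), and noting that since $V$ is finite, $\mathrm{co}(A_1 V)$ is a genuine polytope, so that the conclusion indeed furnishes a $V$-representation of $\mathbf P(\mathbf Q) = A_1 \mathbf Q$ that can be fed into Step~4 of the algorithm. (I also note that the statement writes ``vertices of $\mathbf Q'$,'' which I read as a typo for $\mathbf Q$, consistent with Step~1 of the preceding algorithm.)
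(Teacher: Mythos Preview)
Your proposal is correct and follows essentially the same approach as the paper's proof: both arguments establish the two inclusions by writing points as convex combinations of vertices and using linearity of $A_1$, with the key input being the $V$-representation $\mathbf Q = \mathrm{co}(V)$ from Theorem~2.9 of \cite{bertsimas1997introduction}. Your reading of ``$\mathbf Q'$'' as a typo for ``$\mathbf Q$'' is also correct.
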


\begin{proof}
We first show $\mathrm{co}(A_1 V) \subseteq A_1 \mathbf Q'$. Indeed, each $P \in \mathrm{co}(A_1 V)$ could be written as $\sum_{1 \leq i \leq n} \lambda_i A_1 V_i = A_1 \sum_{1 \leq i \leq n} \lambda_i V_i$ where $\lambda_i \geq 0$ for all $i$ and $\sum_i \lambda_i = 1$, but $\sum_{1 \leq i \leq n} \lambda_i V_i \in \mathbf Q$ since $\mathbf Q$ is a polyhedron and hence convex. To show $A_1 \mathbf Q \subseteq \mathrm{co}(A_1 V)$, fix $Q \in \mathbf Q$. Because $\mathbf Q = \mathrm{co}(V)$, $Q = \sum_{1 \leq i \leq n} \lambda_i V_i$, where $\lambda_i \geq 0$ for all $i$ and $\sum_i \lambda_i = 1$, so $A_1 Q = \sum_{1 \leq i \leq n} \lambda_i A_1 V_i \in \mathrm{co}(A_1 V)$.
\end{proof}

\clearpage
\bibliography{multivalued}

\newpage
\setcounter{page}{1}

\begin{center}
\LARGE Online Appendix    
\end{center}

\section{Auxiliary Lemmas}\label{sec:auxlemma}

\begin{lemma} \label{lem:teststat}
Suppose Assumptions \ref{ass:test}(i)-(iii) hold and let $\Sigma(P) \equiv E_P[\psi(V,P)\psi(V,P)^\prime]$ for $\psi(V,P)$ as defined in \eqref{eq:psidef}. If $\log(|\mathcal M|)|\mathcal M|/n = o(1)$, then there exists $\mathbb G(P) \sim N(0, \Sigma(P))$ such that
\begin{equation*}
T_n = \sup_{s \in \mathcal V_n} \langle A^\dagger s, A^\dagger \mathbb G(P) \rangle + O_P\left(\frac{\log^{3/2}(n)\sqrt{|\mathcal M|}}{\sqrt n}\right)~. 
\end{equation*}
\end{lemma}

\begin{proof}
The claim follows from Lemma \ref{lm:check} enabling us to apply identical arguments to those employed in Theorem 4.1 in \cite{fang2023inference}, but with their coupling result (their Lemma A.4) replaced by our improved coupling rate from Lemma \ref{lm:coup}.     
\end{proof}

\begin{lemma}\label{lm:check}
If Assumptions \ref{ass:test}(i)-(iii) hold and $\log(|\mathcal M|)|\mathcal M|/n = o(1)$ then Assumptions 4.1, 4.2, 4.3, 4.4(v), and A.1 of \cite{fang2023inference} are satisfied with $a_n = \log(|\mathcal M|)\sqrt{|\mathcal M| /n}$ and $M_{3,\Psi} = \sqrt{|\mathcal M|}$.  
\end{lemma}

\begin{proof}
Assumption 4.1(i) in \cite{fang2023inference} is equivalent to our Assumption \ref{ass:test}(i), while their Assumption 4.1(ii) holds with $a_n = \log(|\mathcal M|)\sqrt{|\mathcal Z|/ n}$ by Lemma \ref{aux:lin} and Assumption \ref{ass:test}(ii).
Next note that Assumptions 4.2(i)(ii) in \cite{fang2023inference} hold by Lemma \ref{aux:cov}(i)(ii), their Assumption 4.2(iii) holds with $M_{3,\Psi} = \sqrt{|\mathcal M|}$ by Lemma \ref{aux:cov}(iii), and their Assumption 4.2(iv) holds due to $\Omega(P)$ being diagonal.
Further observe their Assumption 4.3(i) holds due the first $|\mathcal M|$ diagonal entries of $\Omega(P)$ being non-zero and the final two rows of $\psi(V,P)$ being zero. 
In turn, Assumptions 4.3(ii) and 4.4(v) in \cite{fang2023inference} holds by Lemma \ref{aux:cov}(iv) and Assumption \ref{ass:test}(ii).
Finally, we note that Assumption A.1 in \cite{fang2023inference} holds with $a_n = \log(|\mathcal M|)\sqrt{|\mathcal M|/n}$ by Lemma \ref{aux:var}.
\end{proof}

\begin{lemma} \label{lm:coup}
Suppose Assumptions \ref{ass:test}(i)(iii) hold, $\log(|\mathcal M|)|\mathcal M|/n = o(1)$, and for $\psi(V,P)$ as in \eqref{eq:psidef} let $\Sigma(P) \equiv E_P[\psi(V,P)\psi(V,P)^\prime]$. Then, (i) There exists a $\mathbb G(P)\sim N(0,\Sigma(P))$ satisfying uniformly in $P\in \mathbf P$ 
\begin{equation} \label{eq:coupling}
\left \| \Omega(P)^\dagger \left ( \sqrt n(\hat \beta_n - \beta(P)) - \mathbb G(P) \right ) \right \|_\infty = O_P\left(\frac{\log^{3/2}(n)\sqrt{|\mathcal M|}}{\sqrt n}\right)~.   
\end{equation}
(ii) There exists a $\mathbb G^\star(P)\sim N(0,\Sigma(P))$ that is independent of $\{V_i\}_{i=1}^n$ and uniformly in $P\in \mathbf P$ satisfies
\begin{equation} \label{eq:bootcoupling}
\left \| \Omega(P)^\dagger \left ( \sqrt n(\hat \beta_n^* - \hat \beta_n) - \mathbb G^\star(P) \right ) \right \|_\infty = O_P\left(\left(\frac{\log^3(n)|\mathcal M|}{n}\right)^{1/4}\right) ~.   
\end{equation}
\end{lemma}

\noindent \emph{Proof.} First note that Lemma \ref{aux:lin} allows us to conclude, uniformly in $P\in \mathbf P$, that we have
\begin{equation}\label{lm:coup1}
\|\Omega(P)^\dagger \{\sqrt n\{\hat \beta_n - \beta(P)\} - \frac{1}{\sqrt n}\sum_{i=1}^n \psi(V_i,P)\}\|_\infty = O_P\left(\frac{\log(|\mathcal M|)\sqrt{|\mathcal Z|}}{\sqrt n}\right) ~.
\end{equation}
To establish the lemma, we next  apply Theorem 1 in \cite{massart1989strong} to couple the vector $\sum_i \psi(V_i,P)/\sqrt n$ to a Gaussian variable $\mathbb G_P$.
To this end, we let $U_i, i = 1, \ldots, n$ be an i.i.d.\ sample with $U_i \sim \text{Unif}[0,1]$. 
We also divide $[0,1]$ into $|\mathcal Z|$ disjoint intervals, denoted $C_{z}(P)$, satisfying $P(U_i \in C_{z}(P)) = P_{z}$.
We further subdivide each interval $C_z$ into $|\mathcal Y||\mathcal D|$ disjoint subintervals, denoted $C_{ydz}(P)$, satisfying $P(U_i \in C_{ydz}(P)) = P_{ydz}$.
In what follows, we will assume that $(Y_i,D_i,Z_i)$ are generated from $U_i$ according to the relation 
\begin{equation}\label{lm:coup2}
I\{Y_i = y, D_i = d, Z_i = z\} = I\{U_i \in C_{ydz}(P)\}~,    
\end{equation}
which we note is without loss of generality in that the distribution of $(Y_i,D_i,Z_i)$ is still $P$ in this probability space.
Next, let $\mathcal S$ denote the class of all intervals contained in $[0,1]$. 
Because all intervals are convex, $\mathcal S$ satisfies the uniform Minkowski condition in Definition 2 in \cite{massart1989strong}. 
Moreover, since $\mathcal S$ is a VC-class, it follows that $\mathcal S$ satisfies assumption $H(\zeta)$ in \cite{massart1989strong} with $\zeta = 0$.
Defining $\mathbb G_n \in \ell^\infty(\mathcal S)$ by
\begin{equation*}
\mathbb G_n(S) \equiv \frac{1}{\sqrt n}\sum_{i=1}^n (I\{U_i \in S\} - P(U \in S))    
\end{equation*}
for any $S\in \mathcal S$, we then obtain from Theorem 1 in \cite{massart1989strong} that there is a Brownian bridge $\mathbb W$ satisfying
\begin{equation}\label{lm:coup4}
\text{Cov}[\mathbb G_n(S),\mathbb G_n(S^\prime)] = \text{Cov}[\mathbb W(S),\mathbb W(S^\prime)]
\end{equation}
for all $S,S^\prime \in \mathcal S$ and such that for some constants $K$, $\Lambda$, and $\theta$ depending only on $\mathcal S$ we have for all $t > 0$
\begin{equation}\label{lm:coup5}
P\left\{\sup_{S\in \mathcal S} |\mathbb G_n(S) - \mathbb W(S)| > \frac{\sqrt{\log(n)}}{\sqrt n}(t + K\log(n))\right\} \leq \Lambda \exp(-\theta t)    ~.
\end{equation}
In particular, we note that by setting $t$ to be large enough in \eqref{lm:coup5} we can conclude that we have
\begin{equation}\label{lm:coup6}
\sup_{S \in \mathcal S}|\mathbb G_n(S) - \mathbb W(S)| = O_P\left(\frac{\log^{3/2}(n)}{\sqrt n}\right)~.    
\end{equation}
To conclude, define a Gaussian vector $\mathbb G(P)$ of dimension $|\mathcal M| + 2$ by letting its last two coordinates equal zero and its first $|\mathcal M|$ coordinates, indexed by $(y,d,z)\in \mathcal M$, be given by
\begin{equation*}
\mathbb G_{ydz}(P) = \frac{1}{P_z}\mathbb W(C_{ydz}(P)) - \frac{P_{ydz}}{P_z^2}\mathbb W(C_z(P))~.     
\end{equation*}
Similarly, note that by \eqref{lm:coup2}, the definition of $\psi(V,P)$,  and the construction of $C_{ydz}(P)$ and $C_z(P)$ we have
\begin{equation}\label{lm:coup7}
\frac{1}{\sqrt n}\sum_{i=1}^n \psi_{ydz}(V_i,P) = \frac{1}{P_z}\mathbb G_n(C_{ydz}(P)) - \frac{P_{ydz}}{P_z^2}\mathbb G_n(C_z(P))~.    
\end{equation}
In particular, results \eqref{lm:coup4} and \eqref{lm:coup7}  imply that $E[\mathbb G(P)\mathbb G(P)^\prime] = \Sigma(P)$ as desired.
Moreover, we have that
\begin{multline}\label{lm:coup8}
\left\|\Omega(P)^\dagger\left\{\frac{1}{\sqrt n}\sum_{i=1}^n\psi(V_i,P) - \mathbb G(P)\right\}\right\|_\infty \\
= \max_{(y,d,z)\in \mathcal M} \frac{P_z^{1/2}}{P_{yd|z}^{1/2}(1-P_{ydz|z})^{1/2}}\left|\frac{1}{\sqrt n}\sum_{i=1}^n \psi_{ydz}(V_i,P) - \mathbb G_{ydz}(P)\right| \\ 
\lesssim \max_{(y,d,z)\in \mathcal M}\frac{P_z^{1/2}}{P_{yd|z}^{1/2}}\left(\frac{1}{P_z} + \frac{P_{ydz}}{P_z^2}\right)  \times \sup_{S\in \mathcal S} |\mathbb G_n(S) - \mathbb W(S)| 
= O_P\left(\frac{\log^{3/2}(n)\sqrt{|\mathcal M|}}{\sqrt n}\right)~,
\end{multline}
where the final result holds uniformly in $P\in \mathbf P$ by Assumption \ref{ass:test}(iii) and result \eqref{lm:coup6}. 
The first claim of the lemma therefore follows from \eqref{lm:coup1} and \eqref{lm:coup8}. 

In order to establish the second claim of the lemma, we first define the vector $\hat{\mathbb G}_n$ to be given by
\begin{equation}\label{lm:coup9}
\hat {\mathbb G}_n \equiv \frac{1}{\sqrt n} \sum_{i=1}^n   (\psi(V_i^*, P)-\frac{1}{n}\sum_{j=1}^n \psi(V_j,P)) 
\end{equation}
Given this notation, then note that Lemma \ref{aux:bootlin} allows us to conclude, uniformly in $P\in \mathbf P$, that
\begin{equation}\label{lm:coup10}
\|(\Omega(P))^\dagger \{\sqrt n\{\hat \beta_n^\star - \hat \beta_n\} - \hat {\mathbb G}_n\}\|_\infty = O_P\left(\frac{\log(|\mathcal M|) \sqrt{|\mathcal M|}}{\sqrt n}\right)~.
\end{equation}
Next, note that applying the same construction based on \cite{massart1989strong}, but conditionally on $\{V_i\}_{i=1}^n$, implies that there is $\mathbb G(\hat P)$ satisfying $\mathbb G(\hat P)\sim N(0,\text{Var}_{\hat P}\{\psi(V,P)\})$ conditionally on $\{V_i\}_{i=1}^n$ and
\begin{equation}\label{lm:coup11}
P\left\{ \|\hat {\mathbb G}_n - {\mathbb G(\hat P)}\|_\infty > C_1\frac{|\mathcal Z|\sqrt{\log(n)}}{\sqrt n}(t+ K\log(n))\Big| \{V_i\}_{i=1}^n \right\} \leq \Lambda \exp\{-\theta t\}    
\end{equation}
for some $C_1 < \infty$.
Further let $\mathcal B$ denote the Borel $\sigma$-field, and for any $B\in \mathcal B$ denote its $\epsilon$-enlargement under $\|\cdot\|_\infty$ by $B^\epsilon \equiv \{\tilde b : \inf_{b\in B}\|b-\tilde b\|_\infty \leq \epsilon\}$.
Defining $\delta_n = 2C_1K\log^{3/2}(n)|\mathcal Z|/\sqrt n$ and setting $t = C_2$, we then obtain from result \eqref{lm:coup11} and Strassen's Theorem (see, e.g., Theorem 10.3 in \cite{pollard2002user}), that
\begin{equation}\label{lm:coup12}
\sup_{P\in \mathbf P}E_P\left[\sup_{B\in \mathcal B} \left\{P\left\{\hat {\mathbb G}_n\in B\Big|\{V_i\}_{i=1}^n\right\} - P\left\{\mathbb G(\hat P)\in A^{K\delta_n}\Big| \{V_i\}_{i=1}^n\right\} \right\}\right] \leq \Lambda \exp\{-\theta C_2\}
\end{equation}
for $n$ sufficiently large.
Since the right hand side of \eqref{lm:coup12} can be made arbitrarily small by setting $C_2$ sufficiently large, we obtain from Theorem 4 in \cite{monrad1991nearby} that there exists a random variable $\bar{\mathbb G}(P)$ with distribution $N(0,\text{Var}_{\hat P}\{\psi(V,P)\})$ conditionally on $\{V_i\}_{i=1}^n$ and such that uniformly in $P\in \mathbf P$
\begin{equation}\label{lm:coup13}
\|\hat {\mathbb G}_n - \bar {\mathbb G}(P)\|_\infty = O_P\left(\frac{\log^{3/2}(n)|\mathcal Z|}{\sqrt n}\right)~.    
\end{equation}
Moreover, result \eqref{lm:coup13}, the definition of $\Omega(P)$, and Assumption \ref{ass:test}(iii) yield that uniformly in $P\in \mathbf P$
\begin{multline}\label{lm:coup14}
\|(\Omega(P))^\dagger\{\hat {\mathbb G}_n - \bar{\mathbb G}(P)\}\|_\infty \\ \leq \sup_{(y,d,z)\in \mathcal M} \left(\frac{P_z}{P_{yd|z}(1-P_{yd|z})}\right)^{1/2} \|\hat{\mathbb G}_n - \bar {\mathbb G}(P)\|_\infty = O_P\left(\frac{\log^{3/2}(n)\sqrt{|\mathcal M|}}{\sqrt n} \right)~.
\end{multline}
However, since $(\Omega(P))^\dagger \bar {\mathbb G}(P)\sim N(0,(\Omega(P))^\dagger \text{Var}_{\hat P}\{\psi(V,P)\}(\Omega(P))^\dagger)$ conditionally on the data, we may apply Lemma A.7 in \cite{fang2023inference} together with Lemma \ref{lm:auxop} to conclude that there exists a $\bar {\mathbb G}^\star(P) \sim N(0,(\Omega(P))^\dagger \text{Var}_P\{\psi(V,P)\}(\Omega(P))^\dagger)$ that is independent of the data, and in addition satisfies 
\begin{equation}\label{lm:coup15}
\|(\Omega(P))^\dagger \bar {\mathbb G}(P) - \bar {\mathbb G}^\star(P)\|_\infty = O_P\left(\left(\frac{\log^3(|\mathcal M|)|\mathcal M|}{n}\right)^{1/4}\right)    
\end{equation}
uniformly in $P\in \mathbf P$. 
Finally, let $\mathbb G^\star(P) = \Omega(P) \bar{\mathbb G}^\star(P)$ and note that $\mathbb G^\star(P)$ is independent of the data, because $\bar{\mathbb G}^\star(P)$ is, and $\mathbb G^\star(P)\sim N(0,\text{Var}_P\{\psi(V,P)\})$ because the columns of $\text{Var}_P\{\psi(V,P)\}$ are in the range of $\Omega(P)$. 
The second claim of the lemma then follows from results \eqref{lm:coup10}, \eqref{lm:coup14}, and \eqref{lm:coup15}. \qed

\begin{lemma}\label{aux:lin}
Let Assumptions \ref{ass:test}(i), \ref{ass:test}(iii) hold and $\psi(V,P)$ be as defined in \eqref{eq:psidef}. If $\log(|\mathcal M|)|\mathcal M|/n = o(1)$, then it follows that, uniformly in $P\in \mathbf P$, we have
$$\|(\Omega(P))^\dagger \{\sqrt n\{\hat \beta_n - \beta(P)\} - \frac{1}{\sqrt n}\sum_{i=1}^n\psi(V_i,P)\}\|_\infty = O_P\left(\frac{\log(|\mathcal M|) \sqrt{|\mathcal Z|}}{\sqrt n}\right)~.$$  
\end{lemma}

\begin{proof}
We first note that by Lemma \ref{aux:prob}(iv), $\min_{z\in \mathcal Z} \hat P_z > 0$ with probability tending to one uniformly in $P\in \mathbf P$.
Therefore, the first $|\mathcal M|$ coordinates of $\hat \beta_n -\beta(P)$ have the following structure for some $(y,d,z)\in \mathcal M$
\begin{equation*}
\frac{\hat P_{ydz}}{\hat P_z} - \frac{P_{ydz}}{P_z} = \frac{1}{P_z}(\hat P_{ydz} - P_{ydz}) - \frac{P_{ydz}}{P_z^2}(\hat P_z - P_z) + \hat \gamma_{ydz}~,
\end{equation*}
where
\begin{equation*}
\hat \gamma_{ydz} = (\hat P_{ydz} - P_{ydz}) \left ( \frac{1}{\hat P_z} - \frac{1}{P_z} \right ) - \frac{P_{ydz}}{P_z} \left ( \frac{1}{\hat P_z} - \frac{1}{P_z} \right )(\hat P_z - P_z)~.
\end{equation*}
Moreover, since the final two rows of $\{\hat \beta_n - \beta(P)\}$ are identically zero, the definition of $\Omega(P)$ implies that 
\begin{multline}\label{eq:lin3}
\left\|(\Omega(P))^\dagger \left\{\sqrt n\{\hat \beta_n - \beta(P)\} - \frac{1}{\sqrt n}\sum_{i=1}^n\psi(V_i,P)\right\}\right\|_\infty \\ = \max_{(y, d, z) \in \mathcal M} \left | \left ( \frac{P_z}{P_{yd|z} (1 - P_{yd|z})} \right )^{1/2} \sqrt n \hat \gamma_{ydz} \right |  
\lesssim  \left ( \frac{|\mathcal Y| |\mathcal D|}{|\mathcal Z|} \right )^{1/2} \times
\max_{(y, d, z) \in \mathcal M}  \sqrt n \left |\hat \gamma_{ydz} \right |~,  
\end{multline}
where the second inequality follows from Assumption \ref{ass:test}(iii).
Next note Lemma \ref{aux:prob} allows us to conclude
\begin{multline}\label{eq:lin4}
\max_{(y,d,z)\in \mathcal M} \left|(\hat P_{ydz} - P_{ydz})\left(\frac{1}{\hat P_z} - \frac{1}{P_z}\right)\right| \\ \leq    \max_{(y,d,z)\in \mathcal M} |\hat P_{ydz} - P_{ydz}| \times \max_{z\in \mathcal Z} |\hat P_z - P_z| \times \max_{z\in \mathcal Z} \frac{1}{P_z^2}\times O_P(1) = O_P\left(\frac{\log(|\mathcal M|)|\mathcal Z|^2}{n \sqrt{|\mathcal M| |\mathcal Z|}}\right)
\end{multline}
uniformly in $P\in \mathbf P$.
Similarly, another application of Lemma \ref{aux:prob} and Assumption \ref{ass:test}(iii) yield that
\begin{multline}\label{eq:lin5}
\max_{(y,d,z)\in \mathcal M} \left|\frac{P_{ydz}}{P_z}\left(\frac{1}{\hat P_z} - \frac{1}{P_z}\right)(\hat P_z - P_z)\right|\\
\leq \max_{(y,d,z)\in \mathcal M} \frac{P_{ydz}}{P_z^3} \times \max_{z\in \mathcal Z} |\hat P_z - P_z|^2 \times O_P(1) = O_P\left( \frac{|\mathcal Z|^2\log(|\mathcal Z|)}{|\mathcal M| n}\right)
\end{multline}
uniformly in $P\in \mathbf P$.
The claim of the lemma then follows from combining results  \eqref{eq:lin3}, \eqref{eq:lin4}, and \eqref{eq:lin5}.
\end{proof}

\begin{lemma}\label{aux:bootlin}
Let Assumptions \ref{ass:test}(i)(iii) hold and $\psi(V,P)$ be as defined in \eqref{eq:psidef}. 
If \\$\log(|\mathcal M|)|\mathcal M|/n = o(1)$, then it follows that, uniformly in $P\in \mathbf P$, we have
$$\|(\Omega(P))^\dagger \{\sqrt n\{\hat \beta_n^\star - \hat \beta_n\} - \frac{1}{\sqrt n}\sum_{i=1}^n(\psi(V_i^*, P)-\frac{1}{n}\sum_{j=1}^n \psi(V_j,P))\}\|_\infty = O_P\left(\frac{\log(|\mathcal M|) \sqrt{|\mathcal M|}}{\sqrt n}\right)~.$$  
\end{lemma}

\begin{proof}
The proof follows similar arguments to those employed in the proof of Lemma \ref{aux:lin}.
We first note that by Lemmas \ref{aux:prob}(iv) and \ref{aux:bootprob}(iv), $\min_{z\in \mathcal Z} \hat P^*_z \wedge \hat P_z > 0$ with probability tending to one uniformly in $P\in \mathbf P$.
Therefore, the first $|\mathcal M|$ coordinates of $\hat \beta_n^* - \hat \beta_n$ have the following structure for some $(y,d,z)\in \mathcal M$
\begin{equation} \label{eq:bootlin1}
\frac{\hat P_{ydz}^*}{\hat P^*_z} - \frac{\hat P_{ydz}}{\hat P_z} = \frac{1}{\hat P_z}(\hat P^*_{ydz} - \hat P_{ydz}) - \frac{\hat P_{ydz}}{\hat P_z^2}(\hat P^*_z - \hat P_z) + \hat \gamma^*_{ydz}~,
\end{equation}
where
\begin{equation} \label{eq:bootlin2}
\hat \gamma^*_{ydz} = (\hat P^*_{ydz} - \hat P_{ydz}) \left ( \frac{1}{\hat P^*_z} - \frac{1}{\hat P_z} \right ) + \frac{\hat P_{ydz}}{\hat P_z} \left ( \frac{1}{\hat P^*_z} - \frac{1}{\hat P_z} \right )(\hat P^*_z - \hat P_z)~.
\end{equation}
Moreover, since the final two rows of $\{\hat \beta_n - \hat \beta_n\}$ are identically zero, the definition of $\Omega(P)$ implies that 
\begin{multline}\label{eq:bootlin3}
\|(\Omega(P))^\dagger \{\sqrt n\{\hat \beta^*_n - \hat \beta_n\} - \frac{1}{\sqrt n}\sum_{i=1}^n\psi(V_i^*,\hat P)\}\|_\infty \\ = \max_{(y, d, z) \in \mathcal M} \left | \left ( \frac{P_z}{P_{yd|z} (1 - P_{yd|z})} \right )^{1/2} \sqrt n \hat \gamma^*_{ydz} \right |  \lesssim  \left ( \frac{|\mathcal Y| |\mathcal D|}{|\mathcal Z|} \right )^{1/2} \times
\max_{(y, d, z) \in \mathcal M}  \sqrt n \left |\hat \gamma_{ydz}^* \right |~,  
\end{multline}
where the second inequality follows from Assumption \ref{ass:test}(iii).
Next note that Lemma \ref{aux:bootprob}(iv) yields
\begin{multline}\label{eq:bootlin4}
\max_{(y,d,z)\in \mathcal M} \left|(\hat P^*_{ydz} - \hat P_{ydz})\left(\frac{1}{\hat P^*_z} - \frac{1}{\hat P_z}\right)\right| \\ \leq    \max_{(y,d,z)\in \mathcal M} |\hat P^*_{ydz} - \hat P_{ydz}| \times \max_{z\in \mathcal Z} |\hat P^*_z - \hat P_z| \times \max_{z\in \mathcal Z} \frac{1}{\hat P_z^2}\times O_P(1) = O_P\left(\frac{\log(|\mathcal M|)|\mathcal Z|^2}{n \sqrt{|\mathcal M| |\mathcal Z|}}\right)
\end{multline}
uniformly in $P\in \mathbf P$, where in the final result we used Lemmas \ref{aux:bootprob}(i)(ii), Lemma \ref{aux:prob}(iv), and Assumption \ref{ass:test}(iii).
Similarly, applying Lemmas \ref{aux:bootprob}(iii)(iv) and Lemma \ref{aux:prob}(iv) yield, uniformly in $P\in \mathbf P$, that
\begin{multline}\label{eq:bootlin5}
\max_{(y,d,z)\in \mathcal M} \left|\frac{\hat P_{ydz}}{\hat P_z}\left(\frac{1}{\hat P^*_z} - \frac{1}{\hat P_z}\right)(\hat P^*_z - \hat P_z)\right|\\
\leq \max_{(y,d,z)\in \mathcal M} \frac{P_{ydz}}{P_z^3} \times \max_{z\in \mathcal Z} |\hat P_z^* - \hat P_z|^2 \times O_P(1) = O_P\left( \frac{|\mathcal Z|^2\log(|\mathcal Z|)}{|\mathcal M| n}\right)
\end{multline}
where in the final equality we used Lemma \ref{aux:bootprob}(ii) and Assumption \ref{ass:test}(iii). 
We therefore obtain that
\begin{equation}\label{eq:bootlin6}
\|(\Omega(P))^\dagger\{\sqrt n\{\hat \beta_n^*-\hat \beta_n\} - \frac{1}{\sqrt n}\sum_{i=1}^n \psi(V_i^*,\hat P)\|_\infty = O_P\left(\frac{\log(|\mathcal M|) \sqrt{|\mathcal Z|}}{\sqrt n}\right)    
\end{equation}
uniformly in $P\in \mathbf P$, by combining results \eqref{eq:bootlin3}, \eqref{eq:bootlin4}, and \eqref{eq:bootlin5}.
Next, note that for any $(y,d,z)\in \mathcal M$ we have
\begin{multline}\label{eq:bootlin7}
\frac{1}{\sqrt n} \sum_{i=1}^n \psi_{ydz}(V_i^*,\hat P) \\
= \frac{1}{\sqrt n}\sum_{i=1}^n \left\{\frac{1}{\hat P_z}(I\{Y_i^*=y,D_i^*=d,Z_i^*=z\} - \hat P_{ydz}) - \frac{\hat P_{ydz}}{\hat P_z^2}(I\{Z_i^*=z\}-\hat P_z)\right\}   ~,
\end{multline}
and therefore
\begin{multline}\label{eq:bootlin8}
\sum_{i=1}^n (\psi_{ydz}(V_i^*,\hat P) - (\psi_{ydz}(V_i^*,P)-\frac{1}{n}\sum_{j=1}^n\psi_{ydz}(V_j,P))) \\ = \sum_{i=1}^n \bigg\{\left(\frac{1}{\hat P_z} - \frac{1}{P_z}\right)\left(I\{Y_i^*=y,D_i^*=d,Z_i^*=z\} - \hat P_{ydz}\right) \\
- \left(\frac{\hat P_{ydz}}{\hat P_z^2} - \frac{P_{ydz}}{P_z^2}\right)\left(I\{Z_i^*=z\}-\hat P_z\right)\bigg\} ~.
\end{multline}
In particular, result \eqref{eq:bootlin8}, the definition of $\Omega(P)$, and Assumption \ref{ass:test}(iii) allow us to conclude that
\begin{multline}\label{eq:bootlin9}
\|(\Omega(P))^\dagger\{\frac{1}{\sqrt n}\sum_{i=1}^n\psi(V_i^*,\hat P) -  \frac{1}{\sqrt n}\sum_{i=1}^n(\psi(V_i^*, P)-\frac{1}{n}\sum_{j=1}^n \psi(V_j,P))\}\|_\infty \\
\lesssim \max_{(y,d,z)\in \mathcal M} \left|\left(\frac{P_z}{P_{yd|z}}\right)^{1/2} \sqrt n\left\{\left(\frac{1}{\hat P_z} - \frac{1}{P_z}\right)\left(\hat P_{ydz}^* - \hat P_{ydz}\right) - \left(\frac{\hat P_{ydz}}{\hat P_z^2}-\frac{P_{ydz}}{P_z^2}\right)\left(\hat P_z^*-\hat P_z\right)\right\}\right|~.
\end{multline}
Next, observe that Assumption \ref{ass:test}(iii) and Lemmas \ref{aux:prob}(ii)(iv) and \ref{aux:bootprob}(i) imply, uniformly in $P\in \mathbf P$, 
\begin{multline}\label{eq:bootlin10}
\max_{(y,d,z)\in \mathcal M}  \left|\left(\frac{P_z}{P_{yd|z}}\right)^{1/2}\sqrt n \left(\frac{1}{\hat P_z} - \frac{1}{P_z}\right)\left(\hat P_{ydz}^* - \hat P_{ydz}\right)\right| \\
\lesssim \frac{\sqrt{n|\mathcal M|}}{|\mathcal Z|} \times \max_{z\in \mathcal Z} \frac{1}{\hat P_z P_z} \times \max_{z\in \mathcal Z}|\hat P_z-P_z|\times \max_{(y,d,z)\in \mathcal M}|\hat P_{ydz}^*- \hat P_{ydz}|  \\
= O_P\left(\frac{\log(|\mathcal M|)\sqrt{|\mathcal Z|}}{\sqrt n}\right) ~.
\end{multline}
Moreover, the triangle inequality together with Lemmas \ref{aux:prob}(ii)(v) and Assumption \ref{ass:test}(iii) imply that
\begin{multline}\label{eq:bootlin11}
\max_{(y,d,z)\in \mathcal M}|\hat P_{yd|z}P_z - P_{yd|z}\hat P_z| \\
\leq \max_{(y,d,z)\in \mathcal M}\left||\hat P_{yd|z}-P_{yd|z}|P_z + P_{yd|z}|\hat P_z - P_z|\right| = O_P\left(\frac{\sqrt{\log(|\mathcal M|)}}{\sqrt{n|\mathcal Z|}}\right)
\end{multline}
uniformly in $P\in \mathbf P$.
Therefore, result \eqref{eq:bootlin11}, Assumption \ref{ass:test}(iii), and Lemmas \ref{aux:prob}(iv) and \ref{aux:bootprob}(ii) yield
\begin{multline}\label{eq:bootlin12}
\max_{(y,d,z)\in \mathcal M}  \left|\left(\frac{P_z}{P_{yd|z}}\right)^{1/2}\sqrt n\left(\frac{\hat P_{ydz}}{\hat P_z^2}-\frac{P_{ydz}}{P_z^2}\right)\left(\hat P_z^*-\hat P_z\right)\right| \\
\lesssim \frac{\sqrt{n|\mathcal M|}}{|\mathcal Z|} \times \max_{z\in \mathcal Z} \frac{1}{\hat P_z P_z} \times \max_{(y,d,z)\in \mathcal M}|\hat P_{yd|z}P_z - P_{yd|z}\hat P_z|\times \max_{z\in \mathcal Z}|\hat P_z^*- \hat P_z|  \\
= O_P\left(\frac{\log(|\mathcal M|)\sqrt{|\mathcal M|}}{\sqrt n}\right) 
\end{multline}
uniformly in $P\in \mathbf P$.
The claim of the lemma then follows from results \eqref{eq:bootlin6}, \eqref{eq:bootlin9}, \eqref{eq:bootlin10}, and \eqref{eq:bootlin12}.
\end{proof}

\begin{lemma}\label{aux:var}
Let Assumptions \ref{ass:test}(i) and \ref{ass:test}(iii) hold, and let $\|A\|_{o,\infty} = \sup_{\|a\|_\infty \leq 1} \|Aa\|_\infty$ for any $p\times p$ matrix $A$.
Then, $\|\Omega(P)^\dagger(\hat \Omega_n - \Omega(P))\|_{o,\infty} = O_P(\sqrt{\log(|\mathcal M|)|\mathcal M|/n})$ uniformly in $P\in \mathbf P$.
\end{lemma}

\begin{proof}
First note that for any constant $c\in \mathbf R$, we have $|c-1| = |\sqrt c - 1| |\sqrt c + 1|$ and therefore that $|\sqrt c - 1|\leq |c-1|$.
Therefore, employing the definitions of $\Omega(P)$ and $\hat \Omega_n$ yields that
\begin{align}\label{aux:var1}
\|\Omega(P)^\dagger(\hat \Omega_n - \Omega(P))\|_{o,\infty}&  = \max_{(y,d,z)\in \mathcal M}\left|\frac{\hat P_{yd|z}^{1/2}(1-\hat P_{yd|z})^{1/2}}{\hat P_z^{1/2}}\times  \frac{P_z^{1/2}}{ P_{yd|z}^{1/2}(1-P_{yd|z})^{1/2}} -1\right|\notag  \\
& \leq \max_{(y,d,z)\in \mathcal M}\left|\frac{\hat P_{ydz}(1-\hat P_{yd|z})}{\hat P_z^2}\times  \frac{P_z^2}{ P_{ydz}(1-P_{yd|z})} -1\right|~.
\end{align}
Next, note that $(a^2-b^2) = (a-b)(a+b)$, Lemmas \ref{aux:prob}(ii) and \ref{aux:prob}(iv), and Assumption \ref{ass:test}(ii) imply that
\begin{equation}\label{aux:var2}
\max_{z\in \mathcal Z} \left | \frac{P_z^2}{\hat P_z^2}-1 \right | \leq \max_{z\in \mathcal Z} \frac{1}{\hat P_z^2} \times \max_{z\in \mathcal Z} |\hat P_z-P_z|\times \max_{z\in \mathcal Z} |\hat P_z + P_z| = O_P\left(\frac{\sqrt{\log(|\mathcal Z|)|\mathcal Z|}}{\sqrt n}\right)
\end{equation}
uniformly in $P\in \mathbf P$.
Moreover, similarly relying on Assumption \ref{ass:test}(iii) and Lemma \ref{aux:prob}(v) we can conclude
\begin{equation}\label{aux:var3}
\max_{(y,d,z)\in \mathcal M} \left|\frac{(1-\hat P_{yd|z})}{(1-P_{yd|z})} -1\right| = \max_{(y,d,z)\in \mathcal M} \left|\frac{(P_{yd|z}-\hat P_{yd|z})}{(1-P_{yd|z})} \right|=O_P\left( \frac{\sqrt{\log(|\mathcal M|)|\mathcal Z|}}{\sqrt n}\right)
\end{equation}
uniformly in $P\in \mathbf P$. The lemma then follows from \eqref{aux:var1}, Lemma \ref{aux:prob}(iii), and results \eqref{aux:var2} and \eqref{aux:var3}.
\end{proof}

\begin{lemma}\label{lm:auxop}
Let Assumptions \ref{ass:test}(i)(iii) hold, $\psi(V,P)$ be as defined in \eqref{eq:psidef} and for any $k\times k$ symmetric matrix $M$ let $\|M\|_o$ denote its largest eigenvalue. If $\log(|\mathcal M|)|\mathcal M|/n = o(1)$, then uniformly in $P\in \mathbf P$    
\begin{equation}\label{lm:auxopdisp}
\|(\Omega(P))^\dagger({\rm Var}_{\hat P}\{\psi(V,P)\} - {\rm Var}_P\{\psi(V,P)\})(\Omega(P))^\dagger\|_o = O_P\left(\frac{\sqrt{\log(|\mathcal M|)|\mathcal M|}}{\sqrt n}  \right)    
\end{equation}
\end{lemma}

\begin{proof}
First define the vector $\bar \psi_n(P) \equiv \sum_i \psi(V_i,P)/n$ for notational simplicity.
Next, note that for any unit-length vector $v$ (under $\|\cdot\|_2$), the Cauchy-Schwarz inequality allows us to conclude
\begin{equation*}
\|\bar\psi_n(P) \bar \psi_n(P)^\prime v\|_2 = \|\bar \psi_n(P)\|_2 |\bar \psi_n(P)^\prime v| \leq \|\bar \psi_n(P)\|_2^2 = \sum_{(y,d,z)\in \mathcal M} \left(\frac{1}{n}\sum_{i=1}^n \psi_{ydz}(V_i,P)\right)^2,
\end{equation*}
where the final equality follows from the definition of $\psi(V,P)$.
Therefore, since $E_P[\psi(V,P)] = 0$ for all $P\in \mathbf P$ and the sample $\{V_i\}_{i=1}^n$ is i.i.d., Assumption \ref{ass:test}(iii) and the definition of $\Omega(P)$ yield the bound
\begin{multline}\label{lm:auxop2}
\sup_{P\in \mathbf P} E_P\left[\|(\Omega(P))^\dagger\bar \psi_n \bar \psi_n^\prime(\Omega(P))^\dagger\|_o\right] \leq \sup_{P\in \mathbf P}\left\{\|(\Omega(P))^\dagger\|_o^2 \times \frac{1}{n} \sum_{(y,d,z)\in \mathcal M} E_P\left[\psi^2_{ydz}(V,P)\right]\right\}  \\ \leq  \sup_{P\in \mathbf P}\left\{ \max_{(y,d,z)\in \mathcal M} \frac{P_z}{P_{yd|z}(1-P_{yd|z})}\times  \frac{1}{n} \sum_{(y,d,z)\in \mathcal M} \left(\frac{P_{ydz}}{P_z^2} + \frac{P_{ydz}^2}{P_z^3}\right)\right\} \lesssim \frac{|\mathcal M|}{n}~.
\end{multline}
Next, define the matrices $M_i(P) \equiv (\Omega(P))^\dagger(\psi(V_i,P)\psi(V_i,P)^\prime - E_P[\psi(V,P)\psi(V,P)^\prime])(\Omega(P))^\dagger/n$, and note that result \eqref{lm:auxop2}, the triangle inequality, and Markov's inequality yield, uniformly in $P\in \mathbf P$, that
\begin{equation}\label{lm:auxop3}
\|(\Omega(P))^\dagger\{{\rm Var}_{\hat P}\{\psi(V,P)\} - {\rm Var}_{P}\{\psi(V,P)\}\}(\Omega(P))^\dagger\|_o \leq \|\sum_{i=1}^n M_i(P)\|_o + O_P\left(\frac{|\mathcal M|}{n}\right)~.    
\end{equation}
Further note that for any unit length vector $v$ (under $\|\cdot\|_2)$ we obtain by the Cauchy-Schwarz inequality
\begin{multline}\label{lm:auxop4}
\|\psi(V,P)\psi(V,P)^\prime v\|_2 \leq \|\psi(V,P)\|_2^2  \\
=\sum_{(y,d,z)\in \mathcal M}\left(\frac{1}{P_z}I\{Y=y,D=d,Z=z\}-\frac{P_{ydz}}{P_z^2} I\{Z=z\}\right)^2\\
\leq 2 \max_{z\in \mathcal Z}\sum_{(y,d)\in \mathcal Y \times \mathcal D}\left(\frac{1}{P_z^2} I\{Y=y,D=d\} + \frac{P^2_{ydz}}{P_z^4}\right) \\
\leq 2 \max_{z\in \mathcal Z} \frac{1}{P_z^2} + 2 \max_{z\in \mathcal Z} \sum_{(y,d)\in \mathcal Y\times \mathcal D} \frac{P_{ydz}^2}{P_z^4}\lesssim |\mathcal Z|^2~,
\end{multline}
where the first equality follows by definition of $\psi(V,P)$ and the final inequality from Assumption \ref{ass:test}(iii) and
\[ \sum_{(y,d)\in \mathcal Y\times \mathcal D} \frac{P_{ydz}^2}{P_z^4} \leq \sum_{(y,d)\in \mathcal Y\times \mathcal D} \frac{P_{ydz}}{P_z^3} = \frac{1}{P_z^2}~. \]
Hence, the definitions of $\Omega(P)$ and $M_i(P)$, the triangle inequality, result \eqref{lm:auxop4}, and Lemma \ref{aux:cov}(ii) imply 
\begin{multline}\label{lm:auxop5}
\|M_i(P)\|_o \leq \frac{1}{n}\{\|(\Omega(P))^\dagger\|_o^2 \|\psi(V,P)\psi(V,P)^\prime\|_o + \|(\Omega(P))^\dagger E_P[\psi(V,P)\psi(V,P)^\prime](\Omega(P))^\dagger \|_o\} \\ \lesssim \frac{1}{n}\left(\sup_{(y,d,z)\in \mathcal M} \frac{P_z}{P_{yd|z}(1-P_{yd|z})} \times |\mathcal Z|^2 + O(1)\right) = \frac{|\mathcal M|}{n} + O \left ( \frac{1}{n} \right )~.
\end{multline}
Next, observe that the definition of $\Omega(P)$, result \eqref{lm:auxop4}, and Assumption \ref{ass:test}(iii) allow us to conclude that
\begin{equation}\label{lm:auxop6}
\|(\Omega(P))^\dagger \psi(V,P) \psi(V,P)^\prime (\Omega(P))^\dagger\|_o \leq \max_{(y,d,z)\in \mathcal M} \frac{P_z}{P_{yd|z}(1-P_{yd|z})} \times \|\psi(V,P)\|_2^2 \lesssim |\mathcal M|  ~. 
\end{equation}
Therefore, the triangle inequality, $\{M_i(P)\}_{i=1}^n$ being i.i.d., Lemma \ref{aux:cov}(ii), and result \eqref{lm:auxop6} imply that
\begin{equation}\label{lm:auxop7}
\|\sum_{i=1}^n E_P[M_i^2(P)]\|_o 
\leq \frac{1}{n}\left\{\|E_P[((\Omega(P))^\dagger\psi(V,P)\psi(V,P)^\prime(\Omega(P))^\dagger)^2]\|_o +O(1)\right\} \lesssim \frac{|\mathcal M|}{n}~.
\end{equation}
Together, results \eqref{lm:auxop5} and \eqref{lm:auxop7} allow us to apply Bernstein's inequality for matrices \citep[see, e.g., Theorem 1.4 in][]{tropp2012user} to conclude that there is a constant $C < \infty$ such that for all $t \geq 0$ we have
\begin{equation}\label{lm:auxop8}
P\left\{\|\sum_{i=1}^n M_i(P)\|_o > t \right\} \leq |\mathcal M| \exp\left\{-C\frac{n t^2}{|\mathcal M|(1+t)}\right\}~.  
\end{equation}
Hence, evaluating the bound in \eqref{lm:auxop8} at $t = K\sqrt{|\mathcal M| \log(|\mathcal M|)}/\sqrt n$ for $K$ sufficiently large implies that
\begin{equation}\label{lm:auxop9}
\|\sum_{i=1}^n M_i(P)\|_o = O_P\left(\frac{\sqrt{\log(|\mathcal M|)|\mathcal M|}}{\sqrt n}\right) 
\end{equation}
uniformly in $P\in \mathbf P$. The claim of the lemma then follows from \eqref{lm:auxop3} and \eqref{lm:auxop9}.
\end{proof}

\begin{lemma}\label{aux:cov}
Let Assumptions \ref{ass:test}(i), \ref{ass:test}(ii) hold, $\log(|\mathcal M|)|\mathcal M|/n = o(1)$, $\psi(V,P)$ be as  in \eqref{eq:psidef}, and $\Sigma(P) \equiv E_P[\psi(V,P)\psi(V,P)^\prime]$. Then,  (i) $E_P[\psi(V,P)] = 0$; (ii) The eigenvalues of $(\Omega(P))^\dagger \Sigma(P) \Omega(P)^\dagger$ are uniformly bounded in $P\in \mathbf P$; (iii) $\Psi(V,P) \equiv \|\Omega(P)^\dagger \psi(V,P)\|_\infty$ satisfies $\sup_{P\in \mathbf P} E_P[|\Psi(V,P)|^3] \lesssim |\mathcal M|^{3/2}$; and (iv) $(\hat \beta_n - \beta(P)),(\hat \beta_n^*-\hat \beta_n)\in {\rm range}(\Sigma(P))$ with probability tending to one uniformly in $P\in \mathbf P$.
\end{lemma}

\begin{proof}
In order to establish the first claim, recall that the first $|\mathcal M|$ entries of $\psi(V,P)$ have the structure
\begin{equation*}
\psi_{ydz}(V,P) = \frac{1}{P_z} I\{Y = y, D=d, Z=z\} - \frac{P_{ydz}}{P_z^2} I\{Z=z\}~.
\end{equation*}
By direct calculation, $E_P[\psi_{ydz}(V,P)] = P_{yd|z} - P_{yd|z} = 0$ and, since the final two coordinates of $\psi(V,P)$ are identically equal to zero, the first claim of the lemma follows.

To establish the second claim, we first characterize the covariance matrix $\Sigma(P)$ by noting that
\begin{equation} \label{aux:cov2}
\text{Cov}[\psi_{ydz}(V,P), \psi_{\tilde y\tilde d\tilde z}(V,P)] = \begin{cases} 0 & \text{ if } z \neq \tilde z \\
-\frac{P_{yd|z}P_{\tilde y \tilde d|z}}{P_z} & \text{ if } z = \tilde z \text{ and } (y,d)\neq (\tilde y, \tilde z) \\
\frac{P_{yd|z}(1-P_{yd|z})}{P_z} &\text{ if } (y,d,z) = (\tilde y, \tilde d, \tilde z)
\end{cases}~.
\end{equation}
Also observe that the last two rows and columns of $\Sigma(P)$ are zero because the final two rows of $\psi(V,P)$ equal zero.
Since $\Omega(P)$ is a diagonal matrix in which the first $|\mathcal M|$ entries have the structure $(P_{ydz}(1-P_{ydz})/P_z)^{1/2}$ for some $(y,d,z) \in \mathcal M$ and the final two entries equal to zero, result \eqref{aux:cov2} implies that $\Omega(P)^\dagger \Sigma(P) \Omega(P)^\dagger$ is block diagonal and the final two columns and rows equal zero.
In particular, there are $|\mathcal Z|$ blocks, which we denote by $\Gamma_z(P)$.
Each $\Gamma_z(P)$ is a  $|\mathcal Y||\mathcal D| \times |\mathcal Y||\mathcal D|$ matrix, with off-diagonal elements given by
\begin{equation*}
-\frac{P_{yd|z}^{1/2}P^{1/2}_{\tilde y \tilde d|z}}{(1-P_{yd|z})^{1/2} (1-P_{\tilde y \tilde d|z})^{1/2}}    
\end{equation*}
for some $(y,d)\neq (\tilde y ,\tilde d)$, and diagonal elements equal to one.
Next, note that Assumption \ref{ass:test}(iii) gives us
\begin{multline}\label{aux:cov4}
\sup_{\|a\|_2 = 1} \|\Gamma_z(P)\|_2^2  = \sup_{\|a\|_2 = 1}\sum_{(y,d)} \left(a_{yd} - \sum_{(\tilde y,\tilde d)\neq (y,d)} a_{\tilde y\tilde d} \times \frac{P_{yd|z}^{1/2} P_{\tilde y \tilde d|z}^{1/2}}{(1-P_{yd|z})^{1/2}(1-P_{\tilde y \tilde d|z})^{1/2}}\right)^2
\\\lesssim \sup_{\|a\|_2 = 1}\sum_{(y,d)} \left(|a_{yd}| + P_{yd|z}^{1/2}\sum_{(\tilde y,\tilde d)\neq (y,d)} |a_{\tilde y\tilde d}|  P_{\tilde y \tilde d|z}^{1/2}\right)^2 
\\\leq \sup_{\|a\|_2 = 1} 2\sum_{(y,d)}\left(  a_{yd}^2 + P_{yd|z}\sum_{(\tilde y,\tilde d)\neq (y,d)} a_{\tilde y\tilde d}^2\right) 
\leq 4~,    
\end{multline}
where the second inequality follows from the Cauchy-Schwarz inequality and $(b+c)^2 \leq 2(b^2 + c^2)$ for any constants $b,c$, and the final inequality from $\sum_{(y,d)}(a_{yd}^2 + P_{yd|z}) =2$ due to $\|a\|_2^2 =1$.
Since $\Omega(P)^\dagger \Sigma(P) \Omega(P)^\dagger$ is block diagonal with blocks $\Gamma_z(P)$, the second claim of the Lemma follows from \eqref{aux:cov4}.

For the third claim of the lemma, simply employ the defintion of $\psi(V,P)$ and $\Omega(P)$ to obtain that
\begin{multline*}
\|\Omega(P)^\dagger \psi(V,P)\|_\infty \\
= \max_{(y,d,z)\in \mathcal M} \left(\frac{\sqrt{P_z}}{\sqrt{P_{yd|z}(1-P_{yd|z})}} \times \left|\frac{1}{P_z}I\{Y =y, D=d, Z=z\} - \frac{P_{ydz}}{P_z^2}I\{Z=z\}\right|\right) \\ \leq \max_{(y,d,z)\in \mathcal M} \frac{\sqrt{P_z}}{\sqrt{P_{yd|z}(1-P_{yd|z})}} \times \max_{(y,d,z)\in \mathcal M}\left|\frac{1}{P_z} + \frac{P_{yd|z}}{P_z}\right| \lesssim \sqrt{|\mathcal M|}~,
\end{multline*}
where the final inequality holds uniformly in $P\in \mathbf P$ by Assumption \ref{ass:test}(iii).

Turning to the fourth claim, we first note that by Lemma \ref{aux:prob}(iii), the event $E_0 \equiv \{\min_{z\in \mathcal Z} \hat P_z > 0\}$ has probability tending to one uniformly in $P\in \mathbf P$.
We next argue that $(\hat \beta_n - \beta(P))\in \text{range}(\Sigma(P))$ whenever the event $E_0$ occurs.
To this end, note that by result \eqref{aux:cov2}, $\Sigma(P)$ is block diagonal with $|\mathcal Z|$ blocks we denote by $\Lambda_z$, and has the final two rows and columns equal to zero. 
To examine the null space of $\Lambda_z$ note that a vector $a \equiv (a_{yd} : (y,d) \in \mathcal Y\times \mathcal D)$ satifies $\Lambda_z a = 0$ if and only if it satisfies the equality
\begin{equation}\label{aux:cov6}
0 = \text{Var}_P\left(\sum_{(y,d)} a_{yd}(\frac{1}{P_z}I\{Y=y,D=d,Z=z\} - \frac{P_{ydz}}{P_z^2}I\{Z=z\})\right) ~.
\end{equation}
Defining a function $f$ of $(Y,D)$ by $f(y,d) = a_{yd}$, condition \eqref{aux:cov6} may be equivalently be expressed as $f$ satisfying $(f(Y,D) - E_P[f(Y,D)|Z=z])I\{Z=z\} = 0$.
However, by Assumption \ref{ass:test}(iii) such an equality can only hold if $f$, and hence $a$, is constant. 
Given the structure of $\Sigma(P)$, we thus obtain that the null space of $\Sigma$ has dimension $|\mathcal Z| +2$ and its basis is given by: (a) $|\mathcal Z|$ vectors whose final two coordinates equal zero and first $|\mathcal M|$ coordinates equal a vector $a \equiv (a_{ydz} : (y,d,z)\in \mathcal M)$ satisfying $a_{ydz} = I\{z=z_0\}$ for some $z_0$; and (b) The coordinate vectors for the last two coordinates. 
Because the final two coordinates of $(\hat \beta_n - \beta(P))$ equal zero and in addition $\sum_{(y,d)} \hat P_{yd|z} = \sum_{(y,d)} P_{yd|z}=1$ whenever the even $E_0$ occurs, it follows that $(\hat \beta_n - \beta(P))$ is orthogonal to the null space of $\Sigma(P)$ and hence belongs to its range because $\Sigma(P)$ is symmetric. Identical arguments further imply that $(\hat \beta_n^*-\hat \beta_n) \in \text{range}(\Sigma(P))$ with probability tending to one uniformly in $P\in \mathbf P$.
\end{proof}

\begin{lemma}\label{aux:prob}
Let Assumptions \ref{ass:test}(i) and \ref{ass:test}(iii) be satisfied. If in addition $\log(|\mathcal M|)|\mathcal M|/n = o(1)$, then it follows that uniformly in $P\in \mathbf P$: 
\begin{enumerate}[(i)]
    \item $\max_{(y,d,z)\in \mathcal M} |\hat P_{ydz}-P_{ydz}| = O_P(\sqrt{\log(|\mathcal M|)}/\sqrt{|\mathcal M| n})$;
    \item $\max_{z\in \mathcal Z} |\hat P_{z}-P_{z}| = O_P(\sqrt{\log(|\mathcal Z|)}/\sqrt{|\mathcal Z| n})$;
    \item $\max_{(y,d,z)\in \mathcal M}|\hat P_{ydz}/P_{ydz} - 1| = O_P(\sqrt{\log(|\mathcal M|)|\mathcal M|}/\sqrt n)$;
    \item $\max_{z\in \mathcal Z}|\hat P_z/P_z - 1| = O_P(\sqrt{\log(|\mathcal Z|)|\mathcal Z|}/\sqrt n)$;
    \item $\max_{(y,d,z)\in \mathcal M} |\hat P_{yd|z}-P_{yd|z}|=O_P\sqrt{\log(|\mathcal M|)|\mathcal Z|/n})$.
\end{enumerate}
\end{lemma}

\begin{proof}
First note Bernstein's inequality \citep[see, e.g., Lemma 2.2.9 in][] {van_der_vaart1996weak} implies
\begin{equation*}
P \left \{ |\hat P_{ydz} - P_{ydz}| > x \right \} \leq 2 \exp \left ( - \frac{1}{2} \frac{x^2}{\frac{P_{ydz}(1 - P_{ydz})}{n} + \frac{x}{3n}} \right )~, 
\end{equation*}
where the inequality holds for all $(y,d,z)\in \mathcal M$ and $P\in \mathbf P$.
Therefore, Lemma 2.2.10 in \cite{van_der_vaart1996weak} and the norm inequality $\|\cdot\|_1 \leq \|\cdot\|_{\psi_1}$ allow us to conclude that
\begin{equation} \label{eq:prob2}
E\left[\max_{(y, d, z) \in \mathcal M} |\hat P_{ydz} - P_{ydz}|\right] 
\lesssim \frac{\log(|\mathcal M|)}{n} + \max_{(y,d,z)\in \mathcal M} (P_{ydz}(1-P_{ydz}))^{1/2} \frac{\sqrt{\log(|\mathcal M|)}}{\sqrt n}
~.
\end{equation}
Next observe that $P_{ydz}(1-P_{ydz}) \lesssim  1/|\mathcal M|$ uniformly in $(y,d,z)\in \mathcal M$ and $P\in \mathbf P$ by Assumption \ref{ass:test}(iii).
Therefore, result \eqref{eq:prob2} together with Markov's inequality and $\log(|\mathcal M|)|\mathcal M|/n = o(1)$ imply
\begin{equation}\label{eq:prob3}
\max_{(y,d,z)\in \mathcal M} |\hat P_{ydz}-P_{ydz}| = O_P\left(\frac{\sqrt{\log(|\mathcal M|)}}{\sqrt{|\mathcal M| n}}\right)    
\end{equation}
uniformly in $P\in \mathbf P$, which establishes the first claim of the lemma.
Moreover, by identical arguments 
\begin{equation} \label{eq:prob4}
\max_{z\in \mathcal Z} |\hat P_{z}-P_{z}| = O_P\left(\frac{\sqrt{\log(|\mathcal Z|)}}{\sqrt{|\mathcal Z| n}}\right)    
\end{equation}
uniformly in $P\in \mathbf P$, which establishes the second claim of the lemma.
For the third claim, note that
\begin{equation} \label{eq:prob5}
\max_{(y,d,z)\in \mathcal M}\left|\frac{\hat P_{ydz}}{P_{ydz}}-1\right| \leq \max_{(y,d,z)\in \mathcal M} \frac{1}{P_{ydz}}\times \max_{(y,d,z)\in \mathcal M} |\hat P_{ydz}-P_{ydz}| = O_P\left(\frac{\sqrt{\log(|\mathcal M|)|\mathcal M|}}{\sqrt{n}}\right)~,
\end{equation}
where the equality holds uniformly in $P\in \mathbf P$ by Assumption \ref{ass:test}(iii) and result \eqref{eq:prob3}. The fourth claim of the lemma follows by identical arguments to those employed in \eqref{eq:prob5} by relying on \eqref{eq:prob4} instead of \eqref{eq:prob3}. 
To establish the final claim of the Lemma, we rely on the triangle inequality to obtain that 
\begin{multline}\label{eq:prob6}
\max_{(y,d,z)\in \mathcal M}\left|\frac{\hat P_{ydz}}{\hat P_z}-\frac{P_{ydz}}{P_z}\right| \leq  \max_{z\in \mathcal Z} \frac{1}{\hat P_z} \times \max_{(y,d,z)\in \mathcal M} |\hat P_{ydz}-P_{ydz}| + \max_{(y,d,z)\in \mathcal M}P_{ydz}\times \max_{z\in \mathcal Z} \frac{|\hat P_z - P_z|}{\hat P_zP_z}  \\
= O_P\left(|\mathcal Z|\frac{\sqrt{\log(|\mathcal M|)}}{\sqrt{|\mathcal M|n}}\right) + O_P\left(\frac{|\mathcal Z|^2\sqrt{\log(|\mathcal Z|)}}{|\mathcal M|\sqrt{|\mathcal Z| n} }\right) = O_P\left(\frac{\sqrt{\log(|\mathcal M|) |\mathcal Z|}}{\sqrt n}\right)
\end{multline}
where the equalities holds uniformly in $P\in \mathbf P$ by parts (i), (ii), and (iii) of this Lemma and Assumption \ref{ass:test}(iii).
\end{proof}

\begin{lemma}\label{aux:bootprob}
Let Assumptions \ref{ass:test}(i) and \ref{ass:test}(iii) be satisfied. If in addition $\log(|\mathcal M|)|\mathcal M|/n = o(1)$, then it follows that uniformly in $P\in \mathbf P$: 
\begin{enumerate}[(i)]
    \item $\max_{(y,d,z)\in \mathcal M} |\hat P^*_{ydz}-\hat P_{ydz}| = O_P(\sqrt{\log(|\mathcal M|)}/\sqrt{|\mathcal M| n})$;
    \item $\max_{z\in \mathcal Z} |\hat P^*_{z}-\hat P_{z}| = O_P(\sqrt{\log(|\mathcal Z|)}/\sqrt{|\mathcal Z| n})$;
    \item $\max_{(y,d,z)\in \mathcal M}|\hat P^*_{ydz}/\hat P_{ydz} - 1| = O_P(\sqrt{\log(|\mathcal M|)|\mathcal M|}/\sqrt n)$;
    \item $\max_{z\in \mathcal Z}|\hat P^*_z/\hat P_z - 1| = O_P(\sqrt{\log(|\mathcal Z|)|\mathcal Z|}/\sqrt n)$.
\end{enumerate}
\end{lemma}

\begin{proof}
We follow similar arguments to those used in the proof of Lemma \ref{aux:prob}. 
First define the event
\begin{equation*}
A_n \equiv \left\{ \max_{(y,d,z)\in \mathcal M} \left| \frac{\hat P_{ydz}}{P_{ydz}} -1 \right| > \frac{1}{2} \text{ and } \max_{z\in \mathcal Z} \left| \frac{\hat P_{z}}{P_{z}} -1 \right| > \frac{1}{2}\right\}~,
\end{equation*}
which we note is a function of $\{V_i\}_{i=1}^n$.
Therefore, using the law of iterated expectations we obtain the bound
\begin{multline}\label{aux:bootprob2}
P\left\{ \max_{(y,d,z)\in \mathcal M} |\hat P^*_{ydz}-\hat P_{ydz}| > x \right\} \\ \leq E_P\left[P\left\{\max_{(y,d,z)\in \mathcal M} |\hat P^*_{ydz}-\hat P_{ydz}| > x \Big| \{V_i\}_{i=1}^n \right\} I\{A_n\}\right] + P\{A_n^c\}
\end{multline}
for any $x > 0$.
Next, note Bernstein's inequality (see Lemma 2.2.9 in \cite{van_der_vaart1996weak}) implies
\begin{equation*}
P \left \{ |\hat P^*_{ydz} - \hat P_{ydz}| > x \Big| \{V_i\}_{i=1}^n \right \} \leq 2 \exp \left ( - \frac{1}{2} \frac{x^2}{\frac{\hat P_{ydz}(1 - \hat P_{ydz})}{n} + \frac{x}{3n}} \right )~, 
\end{equation*}
where the inequality holds for all $(y,d,z)\in \mathcal M$.
Therefore, Lemma 2.2.10 in \cite{van_der_vaart1996weak} and the norm inequality $\|\cdot\|_1 \leq \|\cdot\|_{\psi_1}$ allow us to conclude that
\begin{equation} \label{eq:bootprob4}
E_P\left[\max_{(y, d, z) \in \mathcal M} |\hat P^*_{ydz} - \hat P_{ydz}|\Big| \{V_i\}_{i=1}^n\right] I\{A_n\}
\lesssim \frac{\log(|\mathcal M|)}{n} + \max_{(y,d,z)\in \mathcal M} P^{1/2}_{ydz} \frac{\sqrt{\log(|\mathcal M|)}}{\sqrt n}
~.
\end{equation}
Next observe that $P_{ydz}\lesssim  1/|\mathcal M|$ uniformly in $(y,d,z)\in \mathcal M$ and $P\in \mathbf P$ by Assumption \ref{ass:test}(iii).
Therefore, result \eqref{eq:bootprob4} together with Markov's inequality and $\log(|\mathcal M|)|\mathcal M|/n = o(1)$ yield for any $K > 0$ that
\begin{equation}\label{eq:bootprob5}
E_P\left[P\left\{\max_{(y,d,z)\in \mathcal M} |\hat P^*_{ydz}-\hat P_{ydz}| > K \frac{\sqrt{\log(|\mathcal M|)}}{\sqrt{|\mathcal M| n}} \bigg | \{V_i\}_{i=1}^n \right\} I\{A_n\}\right] \lesssim \frac{1}{K}~.
\end{equation}
Hence, since $\sup_{P\in \mathbf P}P\{A_n^c\} = o(1)$ by Lemmas \ref{aux:prob}(iii)(iv), results \eqref{aux:bootprob2} and \eqref{eq:bootprob5} together imply that
\begin{equation}\label{eq:bootprob6}
\max_{(y,d,z)\in \mathcal M} |\hat P^*_{ydz}-\hat P_{ydz}| = O_P\left(\frac{\sqrt{\log(|\mathcal M|)}}{\sqrt{|\mathcal M| n}}\right)    
\end{equation}
uniformly in $P\in \mathbf P$, which establishes the first claim of the lemma.
Moreover, by identical arguments,
\begin{equation} \label{eq:bootprob7}
\max_{z\in \mathcal Z} |\hat P^*_{z}-\hat P_{z}| = O_P\left(\frac{\sqrt{\log(|\mathcal Z|)}}{\sqrt{|\mathcal Z| n}}\right)    
\end{equation}
uniformly in $P\in \mathbf P$, which establishes the second claim of the lemma.
For the third claim, note that
\begin{equation} \label{eq:bootprob8}
\max_{(y,d,z)\in \mathcal M}\left|\frac{\hat P^*_{ydz}}{\hat P_{ydz}}-1\right| \leq \max_{(y,d,z)\in \mathcal M} \frac{1}{\hat P_{ydz}}\times \max_{(y,d,z)\in \mathcal M} |\hat  P^*_{ydz}- \hat P_{ydz}| = O_P\left(\frac{\sqrt{\log(|\mathcal M|)|\mathcal M|}}{\sqrt{n}}\right)~,
\end{equation}
where the equality holds uniformly in $P\in \mathbf P$ by Lemma \ref{aux:prob}(iii), Assumption \ref{ass:test}(iii), and result \eqref{eq:bootprob6}. The fourth claim of the lemma follows by identical arguments to those employed in \eqref{eq:bootprob8} but relying on \eqref{eq:bootprob7} instead of \eqref{eq:bootprob6} and Lemma \ref{aux:prob}(iv) instead of Lemma \ref{aux:prob}(iii).
\end{proof}

\begin{lemma}\label{lm:fix}
Let Assumption \ref{ass:cont} hold, ${\rm diam}\{B_{\ell,L}\} < \delta$ for all $1\leq \ell \leq L$, a function $f :\mathcal M \to \mathbf R$ satisfy
\begin{equation}\label{lm:fixdisp0}
\max_{d\in \mathcal D, z\in \mathcal Z} \sup_{|y-y^\prime|  \leq 2\delta} |f(y,d,z) - f(y^\prime,d,z)| \leq \eta
\end{equation}
for some $\eta > 0$, and define $\mathcal I(\mathcal R) \equiv \{(c_o,r_t,\chi)\in \mathcal L^{|\mathcal D|}\times \mathcal R_t : (\mathcal Y(c_o, r_t, \chi)\times \{r_t\}) \cap \mathcal R \neq \emptyset\}$.
Further suppose $P = QT^{-1}$ where $Q = Q_R\times P_Z$ for $P_Z$ the marginal distribution of $Z$ under $P$ and $Q_R$ a distribution for $R$ satisfying the restrictions
\begin{align}
\sum_{(c_o,r_t,\chi)\in \mathcal I(\mathcal R)} Q_R\{(R_o,R_t) \in (\mathcal Y(c_o, r_t, \chi)\times \{r_t\})\} & = 1 \label{lm:fixdisp1}\\
\sum_{(c_o,r_t) \in \mathcal L^{|\mathcal D|} \times \mathcal R_t} \left(\sup_{r_o\in B_{c_o} \cap \mathcal R_o'(r_t)} g(r_o,r_t) - \theta_0\right)\times Q_R^\prime\{(R_o,R_t) \in (B_{c_o} \times \{r_t\}) \cap \mathcal R' \} & \geq 0 \label{lm:fixdisp2}\\
\sum_{(c_o,r_t) \in \mathcal L^{|\mathcal D|} \times \mathcal R_t} \left(\inf_{r_o\in B_{c_o} \cap \mathcal R_o'(r_t)} g(r_o,r_t) - \theta_0\right)\times Q_R^\prime\{(R_o,R_t) \in (B_{c_o} \times \{r_t\}) \cap \mathcal R' \} & \leq 0 \label{lm:fixdisp3}~.
\end{align}
Then, there is a distribution $\tilde Q_R$ of $R$ satisfying $\tilde Q_R\{R\in \mathcal R\} = 1$, $E_{\tilde Q}[(g(R)-\theta_0)I\{R\in \mathcal R^\prime\}] = 0$, and such that $\tilde P = \tilde Q T^{-1}$ with $\tilde Q = \tilde Q_R\times P_Z$ satisfies $\int f(dP-d\tilde P)\leq 2\eta$.
\end{lemma}

\begin{proof}
The proof proceeds by constructing a discrete measure $\tilde Q_R$ satisfying the desired properties. 
We distinguish multiple cases depending on whether inequalities \eqref{lm:fixdisp2} and \eqref{lm:fixdisp3} hold strictly or with equality.
In what follows, it will be helpful to note that $P=QT^{-1}$ and $R\indep Z$ under $Q$ imply that 
\begin{multline}\label{lm:fix1}
P\{Y\in B_{\ell,L},D=d,Z=z\} \\
= \sum_{c_o\in \mathcal L^{|\mathcal D|}, r_t\in \mathcal R_t} Q\{R_o\in B_{c_o},R_t=r_t\}I\{c_o(d)=\ell,r_t(z)=d)\}P\{Z=z\}~.    
\end{multline}
Moreover, the same equality holds if we replace $P$ and $Q$ with $\tilde P$ and $\tilde Q$. In addition, note that if $(c_o, r_t, \chi) \in I(\mathcal R)$, then $\mathcal Y(c_o, r_t, \chi) \neq \emptyset$.

\noindent \underline{Case I}. First, suppose that the inequalities in \eqref{lm:fixdisp2} and \eqref{lm:fixdisp3} both hold strictly.
To address this case, define 
\begin{equation*}
\mathbf V \equiv \bigotimes_{(c_o,r_t) \in \mathcal L^{|\mathcal D|} \times \mathcal R_t: B_{c_o}\cap \mathcal R_o^\prime(r_t) \neq \emptyset} (B_{c_o}\cap \mathcal R_o^\prime(r_t))~.
\end{equation*}
Next note  $\mathbf V$ is connected because $B_{c_o}\cap \mathcal R^\prime_o(r_t)$ is connected for each $c_o\in \mathcal L^{|\mathcal D|}$ and $r_t \in \mathcal R_t$ by Assumption \ref{ass:cont}(iv).
Write an element $v\in \mathbf V$ as $v=(v(c_o,r_t) \in B_{c_o}\cap \mathcal R^\prime_o(r_t))$ and define
\begin{equation*}
F(v) \equiv \sum_{(c_o,r_t)\in L^{|\mathcal D|} \times \mathcal R_t: B_{c_o}\cap \mathcal R_o^\prime(r_t) \neq \emptyset} \left(g(v(c_o,r_t),r_t) - \theta_0\right)\times Q\{(R_o,R_t) \in (B_{c_o}\times \{r_t\}) \cap \mathcal R^\prime\}    ~.
\end{equation*}
Note that $F : \mathbf V\to \mathbf R$ and that inequalities \eqref{lm:fixdisp2} and \eqref{lm:fixdisp3} holding strictly imply there are $\underline v, \bar v \in \mathbf V$ satisfying $F(\underline v) < 0 < F(\bar v)$.
Since $F : \mathbf V \to \mathbf R$ is continuous and $\mathbf V$ is connected, it follows that $F(\mathbf V)$ is connected as well; see, e.g.,  Theorem 23.5 in \cite{munkres2000topology}.
Therefore, we conclude that there is a $v^\star\in \mathbf V$ such that $F(v^\star) = 0$.
Finally, define a discrete measure $\tilde Q_R$ to have support points $(s(c_o,r_t,\chi) : (c_o,r_t,\chi)\in \mathcal I(\mathcal R))$, where
\begin{equation*}
s(c_o,r_t,\chi) = \begin{cases} (v^\star(c_o,r_t),r_t) & \text{ if } \chi = 1\\ \text{ any } (r_o,r_t) \in (\mathcal Y(c_o, r_t, \chi) \times \{r_t\}) \cap \mathcal R & \text{ if } \chi = 0~.
\end{cases}
\end{equation*}
We also assign probabilities $\tilde Q_R\{(R_o,R_t)=s(c_o,r_t,\chi)\} = Q_R\{(R_o,R_t)\in \mathcal Y(c_o, r_t, \chi) \times \{r_t\}\}$ for any $(c_o,r_t,\chi)\in \mathcal I(\mathcal R)$.
Then note that by \eqref{lm:fixdisp0} we have $\tilde Q\{R\in \mathcal R\} =1$, while $F(v^\star) = 0$ implies $E_{\tilde Q_R}[(g(R)-\theta_0)I\{R\in \mathcal R^\prime\}] = 0$.
Next, use result \eqref{lm:fix1} together with $Q_R$ satisfying restriction \eqref{lm:fixdisp1} to obtain the upper bound 
\begin{multline}\label{lm:fix5}
\int fdP \leq \sum_{(\ell,d,z)\in \mathcal M_L} P\{Y\in B_{\ell,L},D=d,Z=z\}\times  \sup_{y\in B_{\ell,L}} f(y,d,z)  \\
= \sum_{(\ell,d,z)\in \mathcal M_L} \sum_{(c_o,r_t,\chi)\in \mathcal I(\mathcal R)} Q\{R_o\in \mathcal Y(c_o, r_t, \chi), R_t=r_t\}I\{c_o(d)=\ell,r_t(z)=d\}P\{Z=z\}  \\
\times \sup_{y\in B_{\ell,L}} f(y,d,z) ~.
\end{multline}
Similarly, $\tilde Q_R\{(R_o,R_t)=s(c_o,r_t,\chi)\} = Q_R\{(R_o,R_t)\in \mathcal Y(c_o, r_t, \chi) \times \{r_t\}\}$ for $(c_o,r_t,\chi)\in \mathcal I(\mathcal R)$ and $\tilde P_Z = P_Z$ imply
\begin{multline}\label{lm:fix6}
\int fd\tilde P \geq \sum_{(\ell,d,z)\in \mathcal M_L} \tilde P\{Y\in B_{\ell,L},D=d,Z=z\} \times \inf_{y\in B_{\ell,L}} f(y,d,z)  \\
= \sum_{(\ell,d,z)\in \mathcal M_L} \sum_{(c_o,r_t,\chi)\in \mathcal I(\mathcal R)} Q\{R_o\in \mathcal Y(c_o, r_t, \chi), R_t=r_t\}I\{c_o(d)=\ell,r_t(z)=d\}P\{Z=z\}  \\
\times \inf_{y\in B_{\ell,L}} f(y,d,z) ~.
\end{multline}
Results \eqref{lm:fix5} and \eqref{lm:fix6} together with $\text{diam}\{B_{\ell,L}\} <\delta$ and $f$ satisfying \eqref{lm:fixdisp0} then imply $\int f(dP-d\tilde P) \leq \eta$.

\noindent \underline{Case II}. Second, suppose that \eqref{lm:fixdisp2} holds with equality.
Let $\bar B_{c_o}$ denote the closure of $B_{c_o}$ and then note that $\bar B_{c_o}\cap \mathcal R^\prime_o(r_t)$ is compact since $\mathcal R^\prime_o(r_t)$ is closed by Assumption \ref{ass:cont}(iv) and $B_{\ell,L}$ is bounded.
Since $g$ is continuous, it follows that for each $(c_o,r_t)\in \mathcal L^{|\mathcal D|} \times \mathcal R_t$ such that $B_{c_o}\cap \mathcal R_o^\prime(r_t) \neq \emptyset$, there is a $v^\star(c_o,r_t) \in \bar B_{c_o}\cap \mathcal R^\prime(r_t)$ satisfying
\begin{equation}\label{lm:fix7}
\sup_{r_o\in B_{c_o}\cap \mathcal R^\prime(r_t)} g(r_o,r_t) = g(v^\star(c_o,r_t),r_t)   ~.
\end{equation}
Note that $v^\star(c_o,r_t)$ does not necessarily belong to $B_{c_0}$.
However, since $\{B_{\tilde c_o}\}_{\tilde c_o\in \mathcal L^{|\mathcal D|}}$ is a partition of $\mathcal Y^{|\mathcal D|}$ and $\mathcal Y^{|\mathcal D|}$ is closed, it follows that $v^\star(c_o,r_t)\in B_{\tilde c_o}$ for some $\tilde c_o$.
Moreover, for any $d\in \mathcal D$ and $v_d^\star(c_o,r_t)$ denoting the $d^{\rm th}$ coordinate of $v^\star(c_o,r_t) \in \mathcal Y^{|\mathcal D|}$ we can conclude from the triangle inequality that
\begin{multline}\label{lm:fix8}
I\{v^\star(c_o,r_t)\in B_{\tilde c_o}\} \times \sup_{y\in B_{c_o(d),L}}\sup_{y^\prime \in B_{\tilde c_o(d),L}}|y -y^\prime| \\
\leq 
I\{v^\star(c_o,r_t)\in B_{\tilde c_o}\} \times \{\sup_{y\in B_{c_o(d),L}} | y - v^\star_d(c_o,r_t)| + \sup_{y^\prime \in B_{\tilde c_o(d),L}}|v^\star_d(c_o,r_t) -y^\prime|\} \leq 2 \delta~,
\end{multline}
where the final inequality follows from $v_d^\star(c_o,r_t)$ belonging to the closure of $B_{c_o(d),L}$ because $v^\star(c_o,r_t)\in \bar B_{c_o}$, $v^\star_d (c_o,r_t)\in B_{\tilde c_o(d),L}$, and ${\rm diam}\{B_{l,L}\} < \delta$.
Finally, we let $\tilde Q_R$ be a discrete measure with support points $(s(c_o,r_t,\chi) : (c_o,r_t,\chi)\in \mathcal I(\mathcal R))$ given by
\begin{equation*}
s(c_o,r_t,\chi) = \begin{cases} (v^\star(c_o,r_t),r_t) & \text{ if } \chi = 1 \\ \text{ any } (r_o,r_t) \in (\mathcal Y(c_o, r_t, \chi) \times \{r_t\})\cap \mathcal R & \text{ if } \chi = 0~, 
\end{cases}
\end{equation*}
and assign them probabilities $\tilde Q_R\{(R_o,R_t) = s(c_o,r_t,\chi)\} = Q_R\{(R_o,R_t)\in \mathcal Y(c_o, r_t, \chi) \times \{r_t\}\}.$
Since $(v^\star(c_o,r_t),r_t)\in \mathcal R^\prime \subseteq \mathcal R$ for any $(c_o,r_t)$ such that $B_{c_o} \cap \mathcal R_o'(r_t) \neq \emptyset$, we then have that $\tilde Q_R\{R\in \mathcal R\}=1$ due to $Q_R$ satisfying \eqref{lm:fixdisp1}, and in addition $E_{\tilde Q_R}[(g(R)-\theta_0)I\{R\in \mathcal R\}] = 0$ due to \eqref{lm:fix7} and \eqref{lm:fixdisp2} holding with equality.
Moreover, also note 
\begin{multline}\label{lm:fix10}
\int f d\tilde P \geq 
\sum_{(\ell,d,z)\in \mathcal M_L} \sum_{(c_o,r_t)\in \mathcal Y^{|\mathcal D|} \times \mathcal R_t}\tilde  Q\{R_o\in B_{c_o}, R_t=r_t\}I\{c_o(d)=\ell,r_t(z)=d\}P\{Z=z\}  \\ \times \inf_{y\in B_{\ell,L}} f(y,d,z) 
\end{multline}
by result \eqref{lm:fix1} applied with $\tilde P$ and $\tilde Q$ in place of $P$ and $Q$.
In addition, by definition of $\tilde Q$ it follows that
\begin{multline*}
\tilde Q\{R_o\in B_{c_o}, R_t = r_t\} \\ = \sum_{\tilde c_o, \chi: (\tilde c_o,r_t,\chi)\in \mathcal I(\mathcal R)} I\{s(\tilde c_o,r_t,\chi) \in B_{c_o}\times \{r_t\}\} Q\{R_o\in \mathcal Y(\tilde c_o, r_t, \chi),R_t = r_t\}~,
\end{multline*}
while result \eqref{lm:fix8} together with the function $f:\mathcal M\to \mathbf R$ satisfying restriction \eqref{lm:fixdisp1} allow us to conclude that
\begin{multline}\label{lm:fix12}
\sum_{\ell=1}^L I\{c_o(d)=\ell,r_t(z)=d\}I\{s(\tilde c_o,r_t,\chi)\in B_{c_o}\times \{r_t\}\} \times \inf_{y\in B_{\ell, L}} f(y,d,z)\\
\geq \sum_{\ell=1}^L I\{\tilde c_o(d)=\ell,r_t(z)=d\}I\{s(\tilde c_o,r_t,\chi)\in B_{\tilde c_o}\times \{r_t\}\} \times (\inf_{y\in B_{\ell, L}} f(y,d,z) - \eta)~.
\end{multline}
Hence, since $\sum_{c_o:(c_o,r_t,\chi)\in \mathcal I(\mathcal R)} I\{s(\tilde c_o,r_t,\chi)\in B_{c_o}\times \{r_t\}\} = 1$, combining results \eqref{lm:fix10}-\eqref{lm:fix12} yields the  bound 
\begin{multline*}
\int f d\tilde P \\
\geq \sum_{(\ell,d,z)\in \mathcal M_L}\sum_{(\tilde c_o,r_t,\chi)\in \mathcal I(\mathcal R)} Q\{R_o\in \mathcal Y(\tilde c_o, r_t, \chi),R_t = r_t\} I\{\tilde c_o(d) = \ell, r_t(z) = d\}P\{Z=z\} \\
\times \left(\inf_{y\in B_{\ell,L}} f(y,d,z)-\eta \right)~.    
\end{multline*}
Therefore, the upper bound for $\int f dP$ obtained in \eqref{lm:fix5}, the function $f:\mathcal M \to \mathbf R$ satisfying \eqref{lm:fixdisp0}, and ${\rm diam}\{B_{\ell,L}\} < \delta$ for all $1\leq \ell \leq L$ imply that $\int f (dP-\tilde dP)\leq 2\eta$.

\noindent \underline{Case III}. The third case with \eqref{lm:fixdisp3} holding with equality follows from the same arguments from Case II.
\end{proof}

\begin{lemma} \label{lem:sep}
Suppose $\mathbf P_0\neq \emptyset$, $\mathbf P$ is convex, and $\mathbf Q$ is the set of distributions for $(R_o,R_t,Z)$ satisfying Assumptions \ref{as:exog}-\ref{ass:generalizedstrata}.
Also let ${\rm cl}_{\mathbf P}(\mathbf P_0)$ denote the closure of $\mathbf P_0$ under weak convergence (in $\mathbf P$) and $\tilde P_Z$ denote the marginal distribution of $Z$ under $\tilde P$ for any $\tilde P\in \mathbf P$.
If $P\in \mathbf P \setminus {\rm cl}_{\mathbf P}(\mathbf P_0)$, then there exist a continuous and bounded function $f$ satisfying $\int f dP = 1$ and $\int f d\tilde P \leq 0$ for all $\tilde P \in \mathbf P_0$ satisfying $\tilde P_Z = P_Z$.
\end{lemma}

\begin{proof}
Let $\mathbf V$ denote the linear span of $\mathbf P$, which forms a vector space, and define $C_b(\mathcal M)$ to equal 
\begin{equation*}
C_b(\mathcal M) \equiv \{f: \mathcal M \to \mathbf R \text{ s.t. } f \text{ is continuous and bounded }\}
\end{equation*}
We equip $\mathbf V$ with the weakest topology that makes the linear maps $\tilde P \mapsto \int f \tilde P$ continuous for each $f\in C_b(\mathcal M)$, which we denote by $\sigma(\mathbf V,C_b(\mathcal M))$.
By Theorem 5.73 and Lemmas 2.52 and 2.53 in \cite{aliprantis2006infinite}, $(\mathbf V,\sigma(\mathbf V,C_b(\mathcal M)))$ is a locally convex topological vector space and the relative topology on $\mathbf P$ induced by $\sigma(\mathbf V,C_b(\mathcal M))$ equals the topology of weak convergence on $\mathbf P$.
Next set $\mathbf P_0(P)\equiv \{\tilde P \in \mathbf P_0 : \tilde P_Z = P_Z\}$, let ${\rm cl}_{\mathbf P}(\mathbf P_0(P))$ denote the closure of $\mathbf P_0(P)$ under weak convergence (in $\mathbf P)$, and note that $\mathbf P_0(P) \subseteq \mathbf P_0$ implies $\text{cl}_{\mathbf P}(\mathbf P_0(P))\subseteq \text{cl}_{\mathbf P}(\mathbf P_0)$ and hence $P\notin \text{cl}_{\mathbf P}(\mathbf P_0(P))$ because $P\notin \text{cl}_{\mathbf P}(\mathbf P_0)$ by hypothesis.
Moreover, setting ${\rm cl}_{\mathbf V}(\mathbf P_0(P))$ to denote the $\sigma(\mathbf V, C_b(\mathcal M))$-closure of $\mathbf P_0(P)$ in $\mathbf V$, it follows from Theorem 17.4 in \cite{munkres2000topology} that ${\rm cl}_{\mathbf P}(\mathbf P_0(P)) = {\rm cl}_{\mathbf V}(\mathbf P_0(P))\cap \mathbf P$.
Therefore, since $P\in \mathbf P\setminus {\rm cl}_{\mathbf P}(\mathbf P_0(P))$, we can conclude that $P\notin {\rm cl}_{\mathbf V}(\mathbf P_0(P))$.
Also note that since $\mathbf P_0(P)$ is nonempty and convex by Lemma \ref{lem:P0P},  Lemma 5.27(6) in \cite{aliprantis2006infinite} implies ${\rm cl}_{\mathbf V}(\mathbf P_0(P))$ is nonempty and convex as well.

We have so far shown that $P\notin {\rm cl}_{\mathbf V}(\mathbf P_0(P))$ and that ${\rm cl}_{\mathbf V}(\mathbf P_0(P))$ is a nonempty closed convex subset of the locally convex topological vector space $(\mathbf V, \sigma(\mathbf V,C_b(\mathcal M))$.
By Corollary 5.80 and Theorem 5.93 in \cite{aliprantis2006infinite}, there therefore exists a $\tilde f \in C_b(\mathcal M)$ satisfying $\int \tilde f  d\tilde P \leq 0$ for all $\tilde P\in {\rm cl}_{\mathbf V}(\mathbf P_0(P))$ and $\int \tilde f dP > 0$.
The lemma then follows by setting $f = \tilde f /\int \tilde f dP$ and using that $\mathbf P_0(P)\subseteq {\rm cl}_{\mathbf V}(\mathbf P_0(P))$.
\end{proof}

\begin{lemma} \label{lem:P0P}
Suppose $\mathbf P_0$ is nonempty, $\mathbf P$ is convex, $\mathbf Q$ is the set of all distributions for $(R_o,R_t,Z)$ satisfying Assumptions \ref{as:exog}-\ref{ass:generalizedstrata}, and define $\mathbf P_0(P)\equiv \{\tilde P \in \mathbf P_0 : \tilde P_Z = P_Z\}$ where $\tilde P_Z$ denotes the marginal of $Z$ under $\tilde P_Z$. Then, $\mathbf P_0(P)$ is nonempty and convex.	
\end{lemma}

\begin{proof}
We first show $\mathbf P_0(P)$ is nonempty. 
Because $\mathbf P_0$ is nonempty, there exists $Q$ that satisfies Assumptions \ref{as:exog}--\ref{ass:generalizedstrata} such that $Q \{R \in \mathcal R'\} > 0$ and $\theta(Q) = \theta_0$. 
By Assumption \ref{as:exog}, $R \indep Z$ under $Q$ and therefore $Q = Q_R\times Q_Z$ where $Q_R$ and $Q_Z$ denote the marginal distributions of $R$ and $Z$ under $Q$. 
Next, define $\tilde Q = Q_R \times P_Z$ and note that $\tilde Q \in \mathbf Q$ because $Q\in \mathbf Q$, while $\theta(\tilde Q) = \theta(Q) = \theta_0$ and $\tilde Q\{R\in \mathcal R^\prime\} = Q\{R\in \mathcal R^\prime\} > 0$ because $Q_R = \tilde Q_R$.
Hence, setting $\tilde P = \tilde Q T^{-1}$ we obtain that $\tilde P \in \mathbf P_0(P)$, which shows that $\mathbf P_0(P)$ is nonempty.

Next, we show $\mathbf P_0(P)$ is convex.
To this end, fix arbitrary $P^1, P^2 \in \mathbf P_0(P)$ and $\lambda \in [0, 1]$. 
By definition of $\mathbf P_0(P)$, there then exist $Q^1$ and $Q^2$ that satisfy Assumptions \ref{as:exog}--\ref{ass:generalizedstrata} and $P^j = Q^j T^{-1}$, $Q^j \{R \in \mathcal R'\} > 0$, $\theta(Q^j) = \theta_0$, and $Q_Z^j = P_Z$ for $j \in \{1, 2\}$. 
Defining $Q = \lambda Q^1 + (1 - \lambda) Q^2$, it then follows that
\begin{equation*}
Q = \lambda Q^1_R \times P_Z + (1-\lambda)Q^2_R\times P_Z = (\lambda Q^1_R + (1-\lambda)Q^2_R)\times P_Z    
\end{equation*}
and hence that $Q$ satisfies Assumption \ref{as:exog} with $Q_Z = P_Z$.
In addition, for any set $\mathcal A$ we have $Q \{R \in \mathcal A\} = \lambda Q^1 \{R \in \mathcal A\} + (1 - \lambda) Q^2 \{R \in A\}$, which implies $Q\{R\in \mathcal R\} = 1$ and $Q\{R\in \mathcal R^\prime\} > 0$.
Moreover, we have
\begin{multline*}
\theta(Q)  = \frac{\lambda E_{Q^1}[g(R) I \{R \in \mathcal R'\}] + (1 - \lambda) E_{Q^2}[g(R) I \{R \in \mathcal R'\}]}{\lambda Q^1 \{R \in \mathcal R'\} + (1 - \lambda) Q^2 \{R \in \mathcal R'\}} \\
 = \frac{\lambda \theta_0 Q^1 \{R \in \mathcal R'\} + (1 - \lambda) \theta_0 Q^2 \{R \in \mathcal R'\}}{\lambda Q^1 \{R \in \mathcal R'\} + (1 - \lambda) Q^2 \{R \in \mathcal R'\}} = \theta_0
\end{multline*}
because $\theta(Q^1)=\theta(Q^2) = \theta_0$.
Since $\lambda P^1 + (1 - \lambda) P^2 = Q T^{-1}$ and $\mathbf P$ is convex, it follows that $\lambda P^1 + (1 - \lambda) P^2\in \mathbf P_0(P)$, and therefore that $\mathbf P_0(P)$ is convex. 
\end{proof}

\section{Additional Simulation Details}\label{app:simdetails}

\subsection{Latent distribution under one-sided noncompliance}\label{app:Q_1s}
Table \ref{tab:Q_1s} presents the $Q$ distribution used in the simulation in Section \ref{sec:sims_disc} and \ref{sec:sims_cont} that satisfies $Q \in \mathbf{Q}_{\rm 1s}$. For each row, the first four columns $Q(a_0, a_1, a_2, b_0, b_1, b_2)$ denote $Q\{ Y(0) = a_0, Y(1) = a_1, Y(2) = a_2, D(0) = b_0, D(1) = b_1, D(2) = b_2 \}$ in Section \ref{sec:sims_disc} or $\tilde Q\{ \mu(0) = a_0, \mu(1) = a_1, \mu(2) = a_2, D(0) = b_0, D(1) = b_1, D(2) = b_2 \}$ in Section \ref{sec:sims_cont}, for \\$(a_0, a_1, a_2, b_0, b_1, b_2) \in \{-1, 0, 1\}^3 \times \{0,1,2\}^3$, and the last four columns represent the probabilities assigned to each of the four events respectively. The sum of the last four columns equals one.

\begin{table}[ht]
\centering
\rotatebox{90}{\normalsize\begin{tabular}{llll|llll}
\toprule
\multicolumn{4}{c|}{Events $(a_0, a_1, a_2, b_0, b_1, b_2)$} & \multicolumn{4}{|c}{Probabilities assigned} \\
\midrule
$Q( -1,-1,-1,0,0,0 )$ & $Q( -1,-1,-1,0,1,0 )$ & $Q( -1,-1,-1,0,0,2 )$ & $Q( -1,-1,-1,0,1,2 )$ & 0.0001 & 0.0001 & 0.0001 & 0.0001 \\ 
$Q( -1,-1,0,0,0,0 )$ & $Q( -1,-1,0,0,1,0 )$ & $Q( -1,-1,0,0,0,2 )$ & $Q( -1,-1,0,0,1,2 )$ & 0.0001 & 0.0044 & 0.0067 & 0.0051 \\ 
$Q( -1,-1,1,0,0,0 )$ & $Q( -1,-1,1,0,1,0 )$ & $Q( -1,-1,1,0,0,2 )$ & $Q( -1,-1,1,0,1,2 )$ & 0.0001 & 0.0063 & 0.0001 & 0.0183 \\ 
$Q( -1,0,-1,0,0,0 )$ & $Q( -1,0,-1,0,1,0 )$ & $Q( -1,0,-1,0,0,2 )$ & $Q( -1,0,-1,0,1,2 )$ & 0.0001 & 0.0001 & 0.0009 & 0.0001 \\ 
$Q( -1,0,0,0,0,0 )$ & $Q( -1,0,0,0,1,0 )$ & $Q( -1,0,0,0,0,2 )$ & $Q( -1,0,0,0,1,2 )$ & 0.0001 & 0.0001 & 0.0070 & 0.0001 \\ 
$Q( -1,0,1,0,0,0 )$ & $Q( -1,0,1,0,1,0 )$ & $Q( -1,0,1,0,0,2 )$ & $Q( -1,0,1,0,1,2 )$ & 0.0001 & 0.0001 & 0.0105 & 0.0056 \\ 
$Q( -1,1,-1,0,0,0 )$ & $Q( -1,1,-1,0,1,0 )$ & $Q( -1,1,-1,0,0,2 )$ & $Q( -1,1,-1,0,1,2 )$ & 0.0001 & 0.0118 & 0.0001 & 0.0820 \\ 
$Q( -1,1,0,0,0,0 )$ & $Q( -1,1,0,0,1,0 )$ & $Q( -1,1,0,0,0,2 )$ & $Q( -1,1,0,0,1,2 )$ & 0.0001 & 0.0001 & 0.0001 & 0.0064 \\ 
$Q( -1,1,1,0,0,0 )$ & $Q( -1,1,1,0,1,0 )$ & $Q( -1,1,1,0,0,2 )$ & $Q( -1,1,1,0,1,2 )$ & 0.0001 & 0.0022 & 0.0001 & 0.0001 \\ 
$Q( 0,-1,-1,0,0,0 )$ & $Q( 0,-1,-1,0,1,0 )$ & $Q( 0,-1,-1,0,0,2 )$ & $Q( 0,-1,-1,0,1,2 )$ & 0.0045 & 0.0001 & 0.0012 & 0.0033 \\ 
$Q( 0,-1,0,0,0,0 )$ & $Q( 0,-1,0,0,1,0 )$ & $Q( 0,-1,0,0,0,2 )$ & $Q( 0,-1,0,0,1,2 )$ & 0.0077 & 0.0304 & 0.0021 & 0.0280 \\ 
$Q( 0,-1,1,0,0,0 )$ & $Q( 0,-1,1,0,1,0 )$ & $Q( 0,-1,1,0,0,2 )$ & $Q( 0,-1,1,0,1,2 )$ & 0.0001 & 0.0001 & 0.0004 & 0.0103 \\ 
$Q( 0,0,-1,0,0,0 )$ & $Q( 0,0,-1,0,1,0 )$ & $Q( 0,0,-1,0,0,2 )$ & $Q( 0,0,-1,0,1,2 )$ & 0.0001 & 0.0001 & 0.0001 & 0.0001 \\ 
$Q( 0,0,0,0,0,0 )$ & $Q( 0,0,0,0,1,0 )$ & $Q( 0,0,0,0,0,2 )$ & $Q( 0,0,0,0,1,2 )$ & 0.0001 & 0.0001 & 0.0001 & 0.0001 \\ 
$Q( 0,0,1,0,0,0 )$ & $Q( 0,0,1,0,1,0 )$ & $Q( 0,0,1,0,0,2 )$ & $Q( 0,0,1,0,1,2 )$ & 0.0001 & 0.0067 & 0.0001 & 0.4248 \\ 
$Q( 0,1,-1,0,0,0 )$ & $Q( 0,1,-1,0,1,0 )$ & $Q( 0,1,-1,0,0,2 )$ & $Q( 0,1,-1,0,1,2 )$ & 0.0001 & 0.0001 & 0.0001 & 0.0001 \\ 
$Q( 0,1,0,0,0,0 )$ & $Q( 0,1,0,0,1,0 )$ & $Q( 0,1,0,0,0,2 )$ & $Q( 0,1,0,0,1,2 )$ & 0.0159 & 0.0001 & 0.0001 & 0.0001 \\ 
$Q( 0,1,1,0,0,0 )$ & $Q( 0,1,1,0,1,0 )$ & $Q( 0,1,1,0,0,2 )$ & $Q( 0,1,1,0,1,2 )$ & 0.0001 & 0.0098 & 0.0001 & 0.0120 \\ 
$Q( 1,-1,-1,0,0,0 )$ & $Q( 1,-1,-1,0,1,0 )$ & $Q( 1,-1,-1,0,0,2 )$ & $Q( 1,-1,-1,0,1,2 )$ & 0.0001 & 0.0360 & 0.0001 & 0.0677 \\ 
$Q( 1,-1,0,0,0,0 )$ & $Q( 1,-1,0,0,1,0 )$ & $Q( 1,-1,0,0,0,2 )$ & $Q( 1,-1,0,0,1,2 )$ & 0.0008 & 0.0028 & 0.0003 & 0.0055 \\ 
$Q( 1,-1,1,0,0,0 )$ & $Q( 1,-1,1,0,1,0 )$ & $Q( 1,-1,1,0,0,2 )$ & $Q( 1,-1,1,0,1,2 )$ & 0.0033 & 0.0001 & 0.0033 & 0.0001 \\ 
$Q( 1,0,-1,0,0,0 )$ & $Q( 1,0,-1,0,1,0 )$ & $Q( 1,0,-1,0,0,2 )$ & $Q( 1,0,-1,0,1,2 )$ & 0.0056 & 0.0149 & 0.0001 & 0.0001 \\ 
$Q( 1,0,0,0,0,0 )$ & $Q( 1,0,0,0,1,0 )$ & $Q( 1,0,0,0,0,2 )$ & $Q( 1,0,0,0,1,2 )$ & 0.0001 & 0.0403 & 0.0083 & 0.0029 \\ 
$Q( 1,0,1,0,0,0 )$ & $Q( 1,0,1,0,1,0 )$ & $Q( 1,0,1,0,0,2 )$ & $Q( 1,0,1,0,1,2 )$ & 0.0008 & 0.0001 & 0.0001 & 0.0113 \\ 
$Q( 1,1,-1,0,0,0 )$ & $Q( 1,1,-1,0,1,0 )$ & $Q( 1,1,-1,0,0,2 )$ & $Q( 1,1,-1,0,1,2 )$ & 0.0001 & 0.0001 & 0.0001 & 0.0056 \\ 
$Q( 1,1,0,0,0,0 )$ & $Q( 1,1,0,0,1,0 )$ & $Q( 1,1,0,0,0,2 )$ & $Q( 1,1,0,0,1,2 )$ & 0.0001 & 0.0001 & 0.0227 & 0.0001 \\ 
$Q( 1,1,1,0,0,0 )$ & $Q( 1,1,1,0,1,0 )$ & $Q( 1,1,1,0,0,2 )$ & $Q( 1,1,1,0,1,2 )$ & 0.0049 & 0.0001 & 0.0001 & 0.0325 \\ 
\bottomrule
\end{tabular}}
\caption{Distribution of $Q \in \mathbf{Q}_{\rm 1s}$.}
\label{tab:Q_1s}
\end{table}

\subsection{Latent distribution under the encouragement design}\label{app:Q_encour}
Table \ref{tab:Q_enc} presents the $Q$ distribution used in the simulation in Section \ref{sec:sims_disc} that satisfies $Q \in \mathbf{Q}_{\rm enc}$. For each row, the first five columns $Q(a_0, a_1, a_2, b_0, b_1, b_2)$ denote $Q\{ Y(0) = a_0, Y(1) = a_1, Y(2) = a_2, D(0) = b_0, D(1) = b_1, D(2) = b_2 \}$, for $(a_0, a_1, a_2, b_0, b_1, b_2) \in \{-1, 0, 1\}^3 \times \{0,1,2\}^3$, and the last five columns represent the probabilities assigned to each of the five events respectively. The sum of the last five columns equals one.

\begin{table}[ht]
\centering
\rotatebox{90}{\scriptsize\begin{tabular}{lllll|lllll}
\toprule
\multicolumn{5}{c|}{Events $(a_0, a_1, a_2, b_0, b_1, b_2)$} & \multicolumn{5}{|c}{Probabilities assigned} \\
\midrule
$Q( -1,-1,-1,0,0,0 )$ & $Q( -1,-1,-1,0,0,2 )$ & $Q( -1,-1,-1,0,1,0 )$ & $Q( -1,-1,-1,0,1,1 )$ & $Q( -1,-1,-1,0,1,2 )$ & 0.0001 & 0.0001 & 0.0008 & 0.0002 & 0.0001 \\ 
  $Q( -1,-1,-1,0,2,2 )$ & $Q( -1,-1,-1,1,1,1 )$ & $Q( -1,-1,-1,1,1,2 )$ & $Q( -1,-1,-1,2,1,2 )$ & $Q( -1,-1,-1,2,2,2 )$ & 0.0001 & 0.0023 & 0.0004 & 0.0001 & 0.0001 \\ 
  $Q( -1,-1,0,0,0,0 )$ & $Q( -1,-1,0,0,0,2 )$ & $Q( -1,-1,0,0,1,0 )$ & $Q( -1,-1,0,0,1,1 )$ & $Q( -1,-1,0,0,1,2 )$ & 0.0001 & 0.0001 & 0.0001 & 0.0049 & 0.0012 \\ 
  $Q( -1,-1,0,0,2,2 )$ & $Q( -1,-1,0,1,1,1 )$ & $Q( -1,-1,0,1,1,2 )$ & $Q( -1,-1,0,2,1,2 )$ & $Q( -1,-1,0,2,2,2 )$ & 0.0001 & 0.0001 & 0.0001 & 0.0037 & 0.0001 \\ 
  $Q( -1,-1,1,0,0,0 )$ & $Q( -1,-1,1,0,0,2 )$ & $Q( -1,-1,1,0,1,0 )$ & $Q( -1,-1,1,0,1,1 )$ & $Q( -1,-1,1,0,1,2 )$ & 0.0001 & 0.0001 & 0.0024 & 0.0001 & 0.0631 \\ 
  $Q( -1,-1,1,0,2,2 )$ & $Q( -1,-1,1,1,1,1 )$ & $Q( -1,-1,1,1,1,2 )$ & $Q( -1,-1,1,2,1,2 )$ & $Q( -1,-1,1,2,2,2 )$ & 0.0001 & 0.0001 & 0.0041 & 0.0001 & 0.0001 \\ 
  $Q( -1,0,-1,0,0,0 )$ & $Q( -1,0,-1,0,0,2 )$ & $Q( -1,0,-1,0,1,0 )$ & $Q( -1,0,-1,0,1,1 )$ & $Q( -1,0,-1,0,1,2 )$ & 0.0003 & 0.0001 & 0.0001 & 0.0043 & 0.0001 \\ 
  $Q( -1,0,-1,0,2,2 )$ & $Q( -1,0,-1,1,1,1 )$ & $Q( -1,0,-1,1,1,2 )$ & $Q( -1,0,-1,2,1,2 )$ & $Q( -1,0,-1,2,2,2 )$ & 0.0001 & 0.0001 & 0.0024 & 0.0013 & 0.0001 \\ 
  $Q( -1,0,0,0,0,0 )$ & $Q( -1,0,0,0,0,2 )$ & $Q( -1,0,0,0,1,0 )$ & $Q( -1,0,0,0,1,1 )$ & $Q( -1,0,0,0,1,2 )$ & 0.0034 & 0.0166 & 0.0001 & 0.0059 & 0.0143 \\ 
  $Q( -1,0,0,0,2,2 )$ & $Q( -1,0,0,1,1,1 )$ & $Q( -1,0,0,1,1,2 )$ & $Q( -1,0,0,2,1,2 )$ & $Q( -1,0,0,2,2,2 )$ & 0.0001 & 0.0001 & 0.0454 & 0.0001 & 0.0001 \\ 
  $Q( -1,0,1,0,0,0 )$ & $Q( -1,0,1,0,0,2 )$ & $Q( -1,0,1,0,1,0 )$ & $Q( -1,0,1,0,1,1 )$ & $Q( -1,0,1,0,1,2 )$ & 0.0053 & 0.0025 & 0.0001 & 0.0001 & 0.0001 \\ 
  $Q( -1,0,1,0,2,2 )$ & $Q( -1,0,1,1,1,1 )$ & $Q( -1,0,1,1,1,2 )$ & $Q( -1,0,1,2,1,2 )$ & $Q( -1,0,1,2,2,2 )$ & 0.0001 & 0.0060 & 0.0072 & 0.0001 & 0.0051 \\ 
  $Q( -1,1,-1,0,0,0 )$ & $Q( -1,1,-1,0,0,2 )$ & $Q( -1,1,-1,0,1,0 )$ & $Q( -1,1,-1,0,1,1 )$ & $Q( -1,1,-1,0,1,2 )$ & 0.0022 & 0.0001 & 0.0001 & 0.0158 & 0.0001 \\ 
  $Q( -1,1,-1,0,2,2 )$ & $Q( -1,1,-1,1,1,1 )$ & $Q( -1,1,-1,1,1,2 )$ & $Q( -1,1,-1,2,1,2 )$ & $Q( -1,1,-1,2,2,2 )$ & 0.0048 & 0.0001 & 0.0042 & 0.0001 & 0.0001 \\ 
  $Q( -1,1,0,0,0,0 )$ & $Q( -1,1,0,0,0,2 )$ & $Q( -1,1,0,0,1,0 )$ & $Q( -1,1,0,0,1,1 )$ & $Q( -1,1,0,0,1,2 )$ & 0.0038 & 0.0001 & 0.0001 & 0.0001 & 0.0128 \\ 
  $Q( -1,1,0,0,2,2 )$ & $Q( -1,1,0,1,1,1 )$ & $Q( -1,1,0,1,1,2 )$ & $Q( -1,1,0,2,1,2 )$ & $Q( -1,1,0,2,2,2 )$ & 0.0128 & 0.0001 & 0.0001 & 0.0043 & 0.0001 \\ 
  $Q( -1,1,1,0,0,0 )$ & $Q( -1,1,1,0,0,2 )$ & $Q( -1,1,1,0,1,0 )$ & $Q( -1,1,1,0,1,1 )$ & $Q( -1,1,1,0,1,2 )$ & 0.0001 & 0.0001 & 0.0001 & 0.0126 & 0.0004 \\ 
  $Q( -1,1,1,0,2,2 )$ & $Q( -1,1,1,1,1,1 )$ & $Q( -1,1,1,1,1,2 )$ & $Q( -1,1,1,2,1,2 )$ & $Q( -1,1,1,2,2,2 )$ & 0.0001 & 0.0001 & 0.0001 & 0.0048 & 0.0094 \\ 
  $Q( 0,-1,-1,0,0,0 )$ & $Q( 0,-1,-1,0,0,2 )$ & $Q( 0,-1,-1,0,1,0 )$ & $Q( 0,-1,-1,0,1,1 )$ & $Q( 0,-1,-1,0,1,2 )$ & 0.0001 & 0.0072 & 0.0001 & 0.0001 & 0.0079 \\ 
  $Q( 0,-1,-1,0,2,2 )$ & $Q( 0,-1,-1,1,1,1 )$ & $Q( 0,-1,-1,1,1,2 )$ & $Q( 0,-1,-1,2,1,2 )$ & $Q( 0,-1,-1,2,2,2 )$ & 0.0001 & 0.0003 & 0.0001 & 0.0639 & 0.0001 \\ 
  $Q( 0,-1,0,0,0,0 )$ & $Q( 0,-1,0,0,0,2 )$ & $Q( 0,-1,0,0,1,0 )$ & $Q( 0,-1,0,0,1,1 )$ & $Q( 0,-1,0,0,1,2 )$ & 0.0001 & 0.0483 & 0.0001 & 0.0012 & 0.0001 \\ 
  $Q( 0,-1,0,0,2,2 )$ & $Q( 0,-1,0,1,1,1 )$ & $Q( 0,-1,0,1,1,2 )$ & $Q( 0,-1,0,2,1,2 )$ & $Q( 0,-1,0,2,2,2 )$ & 0.0001 & 0.0001 & 0.0001 & 0.0001 & 0.0001 \\ 
  $Q( 0,-1,1,0,0,0 )$ & $Q( 0,-1,1,0,0,2 )$ & $Q( 0,-1,1,0,1,0 )$ & $Q( 0,-1,1,0,1,1 )$ & $Q( 0,-1,1,0,1,2 )$ & 0.0001 & 0.0001 & 0.0001 & 0.0001 & 0.0001 \\ 
  $Q( 0,-1,1,0,2,2 )$ & $Q( 0,-1,1,1,1,1 )$ & $Q( 0,-1,1,1,1,2 )$ & $Q( 0,-1,1,2,1,2 )$ & $Q( 0,-1,1,2,2,2 )$ & 0.0019 & 0.0001 & 0.0001 & 0.0001 & 0.0067 \\ 
  $Q( 0,0,-1,0,0,0 )$ & $Q( 0,0,-1,0,0,2 )$ & $Q( 0,0,-1,0,1,0 )$ & $Q( 0,0,-1,0,1,1 )$ & $Q( 0,0,-1,0,1,2 )$ & 0.0001 & 0.0275 & 0.0001 & 0.0030 & 0.0416 \\ 
  $Q( 0,0,-1,0,2,2 )$ & $Q( 0,0,-1,1,1,1 )$ & $Q( 0,0,-1,1,1,2 )$ & $Q( 0,0,-1,2,1,2 )$ & $Q( 0,0,-1,2,2,2 )$ & 0.0063 & 0.0001 & 0.0001 & 0.0001 & 0.0001 \\ 
  $Q( 0,0,0,0,0,0 )$ & $Q( 0,0,0,0,0,2 )$ & $Q( 0,0,0,0,1,0 )$ & $Q( 0,0,0,0,1,1 )$ & $Q( 0,0,0,0,1,2 )$ & 0.0078 & 0.0001 & 0.0038 & 0.0063 & 0.0001 \\ 
  $Q( 0,0,0,0,2,2 )$ & $Q( 0,0,0,1,1,1 )$ & $Q( 0,0,0,1,1,2 )$ & $Q( 0,0,0,2,1,2 )$ & $Q( 0,0,0,2,2,2 )$ & 0.0001 & 0.0047 & 0.0397 & 0.0001 & 0.0033 \\ 
  $Q( 0,0,1,0,0,0 )$ & $Q( 0,0,1,0,0,2 )$ & $Q( 0,0,1,0,1,0 )$ & $Q( 0,0,1,0,1,1 )$ & $Q( 0,0,1,0,1,2 )$ & 0.0001 & 0.0004 & 0.0001 & 0.0001 & 0.0053 \\ 
  $Q( 0,0,1,0,2,2 )$ & $Q( 0,0,1,1,1,1 )$ & $Q( 0,0,1,1,1,2 )$ & $Q( 0,0,1,2,1,2 )$ & $Q( 0,0,1,2,2,2 )$ & 0.0006 & 0.0001 & 0.0001 & 0.0019 & 0.0013 \\ 
  $Q( 0,1,-1,0,0,0 )$ & $Q( 0,1,-1,0,0,2 )$ & $Q( 0,1,-1,0,1,0 )$ & $Q( 0,1,-1,0,1,1 )$ & $Q( 0,1,-1,0,1,2 )$ & 0.0001 & 0.0017 & 0.0001 & 0.0028 & 0.0001 \\ 
  $Q( 0,1,-1,0,2,2 )$ & $Q( 0,1,-1,1,1,1 )$ & $Q( 0,1,-1,1,1,2 )$ & $Q( 0,1,-1,2,1,2 )$ & $Q( 0,1,-1,2,2,2 )$ & 0.0001 & 0.0001 & 0.0001 & 0.0001 & 0.0001 \\ 
  $Q( 0,1,0,0,0,0 )$ & $Q( 0,1,0,0,0,2 )$ & $Q( 0,1,0,0,1,0 )$ & $Q( 0,1,0,0,1,1 )$ & $Q( 0,1,0,0,1,2 )$ & 0.0008 & 0.0001 & 0.0001 & 0.0001 & 0.0010 \\ 
  $Q( 0,1,0,0,2,2 )$ & $Q( 0,1,0,1,1,1 )$ & $Q( 0,1,0,1,1,2 )$ & $Q( 0,1,0,2,1,2 )$ & $Q( 0,1,0,2,2,2 )$ & 0.0001 & 0.0011 & 0.0052 & 0.0023 & 0.0001 \\ 
  $Q( 0,1,1,0,0,0 )$ & $Q( 0,1,1,0,0,2 )$ & $Q( 0,1,1,0,1,0 )$ & $Q( 0,1,1,0,1,1 )$ & $Q( 0,1,1,0,1,2 )$ & 0.0053 & 0.0001 & 0.0035 & 0.0002 & 0.0008 \\ 
  $Q( 0,1,1,0,2,2 )$ & $Q( 0,1,1,1,1,1 )$ & $Q( 0,1,1,1,1,2 )$ & $Q( 0,1,1,2,1,2 )$ & $Q( 0,1,1,2,2,2 )$ & 0.0053 & 0.0001 & 0.0001 & 0.0022 & 0.0054 \\ 
  $Q( 1,-1,-1,0,0,0 )$ & $Q( 1,-1,-1,0,0,2 )$ & $Q( 1,-1,-1,0,1,0 )$ & $Q( 1,-1,-1,0,1,1 )$ & $Q( 1,-1,-1,0,1,2 )$ & 0.0001 & 0.0052 & 0.0001 & 0.0086 & 0.1091 \\ 
  $Q( 1,-1,-1,0,2,2 )$ & $Q( 1,-1,-1,1,1,1 )$ & $Q( 1,-1,-1,1,1,2 )$ & $Q( 1,-1,-1,2,1,2 )$ & $Q( 1,-1,-1,2,2,2 )$ & 0.0001 & 0.0014 & 0.0001 & 0.0011 & 0.0058 \\ 
  $Q( 1,-1,0,0,0,0 )$ & $Q( 1,-1,0,0,0,2 )$ & $Q( 1,-1,0,0,1,0 )$ & $Q( 1,-1,0,0,1,1 )$ & $Q( 1,-1,0,0,1,2 )$ & 0.0001 & 0.0001 & 0.0068 & 0.0015 & 0.0001 \\ 
  $Q( 1,-1,0,0,2,2 )$ & $Q( 1,-1,0,1,1,1 )$ & $Q( 1,-1,0,1,1,2 )$ & $Q( 1,-1,0,2,1,2 )$ & $Q( 1,-1,0,2,2,2 )$ & 0.0001 & 0.0001 & 0.0093 & 0.0029 & 0.0001 \\ 
  $Q( 1,-1,1,0,0,0 )$ & $Q( 1,-1,1,0,0,2 )$ & $Q( 1,-1,1,0,1,0 )$ & $Q( 1,-1,1,0,1,1 )$ & $Q( 1,-1,1,0,1,2 )$ & 0.0001 & 0.0022 & 0.0513 & 0.0175 & 0.0001 \\ 
  $Q( 1,-1,1,0,2,2 )$ & $Q( 1,-1,1,1,1,1 )$ & $Q( 1,-1,1,1,1,2 )$ & $Q( 1,-1,1,2,1,2 )$ & $Q( 1,-1,1,2,2,2 )$ & 0.0001 & 0.0023 & 0.0001 & 0.0001 & 0.0008 \\ 
  $Q( 1,0,-1,0,0,0 )$ & $Q( 1,0,-1,0,0,2 )$ & $Q( 1,0,-1,0,1,0 )$ & $Q( 1,0,-1,0,1,1 )$ & $Q( 1,0,-1,0,1,2 )$ & 0.0001 & 0.0001 & 0.0001 & 0.0041 & 0.0003 \\ 
  $Q( 1,0,-1,0,2,2 )$ & $Q( 1,0,-1,1,1,1 )$ & $Q( 1,0,-1,1,1,2 )$ & $Q( 1,0,-1,2,1,2 )$ & $Q( 1,0,-1,2,2,2 )$ & 0.0006 & 0.0001 & 0.0021 & 0.0155 & 0.0028 \\ 
  $Q( 1,0,0,0,0,0 )$ & $Q( 1,0,0,0,0,2 )$ & $Q( 1,0,0,0,1,0 )$ & $Q( 1,0,0,0,1,1 )$ & $Q( 1,0,0,0,1,2 )$ & 0.0001 & 0.0162 & 0.0020 & 0.0001 & 0.0001 \\ 
  $Q( 1,0,0,0,2,2 )$ & $Q( 1,0,0,1,1,1 )$ & $Q( 1,0,0,1,1,2 )$ & $Q( 1,0,0,2,1,2 )$ & $Q( 1,0,0,2,2,2 )$ & 0.0081 & 0.0001 & 0.0033 & 0.0001 & 0.0001 \\ 
  $Q( 1,0,1,0,0,0 )$ & $Q( 1,0,1,0,0,2 )$ & $Q( 1,0,1,0,1,0 )$ & $Q( 1,0,1,0,1,1 )$ & $Q( 1,0,1,0,1,2 )$ & 0.0090 & 0.0001 & 0.0001 & 0.0001 & 0.0001 \\ 
  $Q( 1,0,1,0,2,2 )$ & $Q( 1,0,1,1,1,1 )$ & $Q( 1,0,1,1,1,2 )$ & $Q( 1,0,1,2,1,2 )$ & $Q( 1,0,1,2,2,2 )$ & 0.0001 & 0.0068 & 0.0056 & 0.0001 & 0.0043 \\ 
  $Q( 1,1,-1,0,0,0 )$ & $Q( 1,1,-1,0,0,2 )$ & $Q( 1,1,-1,0,1,0 )$ & $Q( 1,1,-1,0,1,1 )$ & $Q( 1,1,-1,0,1,2 )$ & 0.0008 & 0.0001 & 0.0019 & 0.0025 & 0.0001 \\ 
  $Q( 1,1,-1,0,2,2 )$ & $Q( 1,1,-1,1,1,1 )$ & $Q( 1,1,-1,1,1,2 )$ & $Q( 1,1,-1,2,1,2 )$ & $Q( 1,1,-1,2,2,2 )$ & 0.0017 & 0.0001 & 0.0001 & 0.0001 & 0.0001 \\ 
  $Q( 1,1,0,0,0,0 )$ & $Q( 1,1,0,0,0,2 )$ & $Q( 1,1,0,0,1,0 )$ & $Q( 1,1,0,0,1,1 )$ & $Q( 1,1,0,0,1,2 )$ & 0.0001 & 0.0050 & 0.0001 & 0.0001 & 0.0040 \\ 
  $Q( 1,1,0,0,2,2 )$ & $Q( 1,1,0,1,1,1 )$ & $Q( 1,1,0,1,1,2 )$ & $Q( 1,1,0,2,1,2 )$ & $Q( 1,1,0,2,2,2 )$ & 0.0097 & 0.0001 & 0.0038 & 0.0017 & 0.0001 \\ 
  $Q( 1,1,1,0,0,0 )$ & $Q( 1,1,1,0,0,2 )$ & $Q( 1,1,1,0,1,0 )$ & $Q( 1,1,1,0,1,1 )$ & $Q( 1,1,1,0,1,2 )$ & 0.0001 & 0.0001 & 0.0050 & 0.0001 & 0.0001 \\ 
  $Q( 1,1,1,0,2,2 )$ & $Q( 1,1,1,1,1,1 )$ & $Q( 1,1,1,1,1,2 )$ & $Q( 1,1,1,2,1,2 )$ & $Q( 1,1,1,2,2,2 )$ & 0.0062 & 0.0001 & 0.0016 & 0.0040 & 0.0107 \\ 
   \hline
\end{tabular}}
\caption{Distribution of $Q \in \mathbf{Q}_{\rm enc}$.}
\label{tab:Q_enc}
\end{table}

\end{document}